\let\oldding\ding
\renewcommand{\ding}[2][1]{\scalebox{#1}{\oldding{#2}}}
\newcommand{\stitle}[1]{\vspace{0.5ex}\noindent{\bf #1}}
\newcommand{\etitle}[1]{\vspace{0.5ex}\noindent{\bf \textit{#1}}}
\begin{document}

\title{Stencil-Lifting: Hierarchical Recursive Lifting System for Extracting Summary of Stencil Kernel in Legacy Codes}

\subtitle{Technical Report}

\titlenote{This is an extended version of the OOPSLA2'25 publication (DOI: 10.1145/3763159). }

\author{Mingyi Li}
\orcid{0000-0003-4266-908X}
\affiliation{%
  \institution{Institute of Computing Technology, Chinese Academy of Sciences}
  \city{Beijing}
  \country{China}
}
\email{limingyi@ncic.ac.cn}

\author{Junmin Xiao}
\orcid{0000-0003-0457-4709}
\affiliation{%
  \institution{Institute of Computing Technology, Chinese Academy of Sciences}
  \city{Beijing}
  \country{China}
}
\email{xiaojunmin@ict.ac.cn}

\author{Siyan Chen}
\orcid{0009-0008-3778-5995}
\affiliation{%
  \institution{Institute of Computing Technology, Chinese Academy of Sciences}
  \city{Beijing}
  \country{China}
}

\author{Hui Ma}
\orcid{0000-0002-7083-9552}
\affiliation{%
  \institution{Institute of Computing Technology, Chinese Academy of Sciences}
  \city{Beijing}
  \country{China}
}

\author{Xi Chen}
\orcid{0009-0008-4571-0213}
\affiliation{%
  \institution{Institute of Computing Technology, Chinese Academy of Sciences}
  \city{Beijing}
  \country{China}
}

\author{Peihua Bao}
\orcid{0009-0005-7264-9102}
\affiliation{%
  \institution{University of Chinese Academy of Sciences}
  \city{Beijing}
  \country{China}
}

\author{Liang Yuan}
\orcid{0000-0003-3406-2907}
\affiliation{%
  \institution{Institute of Computing Technology, Chinese Academy of Sciences}
  \city{Beijing}
  \country{China}
}

\author{Guangming Tan}
\orcid{0000-0002-6361-5948}
\affiliation{%
  \institution{Institute of Computing Technology, Chinese Academy of Sciences}
  \city{Beijing}
  \country{China}
}
\email{tgm@ict.ac.cn}

\begin{abstract}
We introduce Stencil-Lifting, a novel system for automatically converting stencil kernels written in low-level languages in legacy code into semantically equivalent Domain-Specific Language (DSL) implementations. Targeting the efficiency bottlenecks of existing verified lifting systems, Stencil-Lifting achieves scalable stencil kernel abstraction through two key innovations. First, we propose a hierarchical recursive lifting theory that represents stencil kernels, structured as nested loops, using invariant subgraphs, which are customized data dependency graphs that capture loop-carried computation and structural invariants. Each vertex in the invariant subgraph is associated with a predicate-based summary, encoding its computational semantics. By enforcing self-consistency across these summaries, Stencil-Lifting ensures the derivation of correct loop invariants and postconditions for nested loops, eliminating the need for external verification. Second, we develop a hierarchical recursive lifting algorithm that guarantees termination through a convergent recursive process, avoiding the inefficiencies of search-based synthesis. The algorithm efficiently derives the valid summaries of stencil kernels, and its completeness is formally proven. We evaluate Stencil-Lifting on diverse stencil benchmarks from two different suites and on four real-world applications. Experimental results demonstrate that Stencil-Lifting achieves 31.62$\times$ and 5.8$\times$ speedups compared to the state-of-the-art verified lifting systems STNG and Dexter, respectively, while maintaining full semantic equivalence. Our work significantly enhances the translation efficiency of low-level stencil kernels to DSL implementations, effectively bridging the gap between legacy optimization techniques and modern DSL-based paradigms.
\end{abstract}

\begin{CCSXML}
<ccs2012>
<concept>
<concept_id>10011007.10011074.10011092.10011782</concept_id>
<concept_desc>Software and its engineering~Automatic programming</concept_desc>
<concept_significance>500</concept_significance>
</concept>
</ccs2012>
\end{CCSXML}

\ccsdesc[500]{Software and its engineering~Automatic programming}

\keywords{stencil computation, summary lifting, intermediate representation}

\maketitle

\section{Introduction}

Stencil computation is a fundamental kernel in scientific computing \cite{dwarf06, dwarf09} and is widely used in numerical simulations of physical processes. Over the past decade, extensive efforts have been devoted to optimizing stencil computations on heterogeneous accelerators, which offer significant performance potential \cite{Maruyama2011, Liu2022}.


Among various optimization approaches, domain-specific languages (DSLs) for stencil computation have played a central role. By decoupling computational logic from optimization strategies, DSLs provide a flexible and effective optimization framework \cite{Ragan-Kelley2013, Kukreja2016}. Studies show that starting from high-level DSL code can yield highly efficient implementations \cite{Catanzaro2010, Tang2011, Ragan-Kelley2013, Mendis2015, Kukreja2016, Ahmad2019, Baghdadi2019, Brown2022}, which often outperform those generated by state-of-the-art compilers on manually optimized C or Fortran codes. High-level DSLs free the compiler from low-level constraints, enabling more systematic and efficient optimization than manual tuning.

However, applying DSLs to legacy code is challenging. These codes, typically written in low-level languages with platform-specific optimizations, are hard to read, maintain, and port. Once original developers leave, adapting them to new architectures becomes error-prone and labor-intensive.

To leverage DSL optimizations for legacy code without manual rewriting, researchers propose automatic conversion approaches \cite{Mendis2015, Kamil2016, Ahmad2018, Ahmad2019}. These transform low-level stencil code into DSLs, enabling parallelization, cache tuning, and vectorization. This transformation not only enhances maintainability but also enables efficient execution on heterogeneous architectures, such as GPUs.

Automatic code conversion techniques are classified into two categories: syntax-guided pattern matching \cite{Ginsbach2018, Espindola2023, VanderCruysse2024} and verified lifting \cite{Kamil2016, Ahmad2019, Laddad2022, Verbeek2022, Qiu2024}. Pattern matching relies on syntax rules to convert code into DSLs, but struggles with limited pattern coverage and a lack of semantic guarantees. In contrast, verified lifting uses synthesis to derive a formally verified summary, which is an abstract predicate representation of stencil computation, then translates it into DSLs. Based on synthesis and SMT solvers, it provides strong guarantees of correctness. However, its reliance on search-based strategies means that valid summaries may not always be found within a bounded time. STNG \cite{Kamil2016} and Dexter \cite{Ahmad2019} exemplify this approach, targeting Fortran-to-Halide and C++ image processing, respectively. Both adopt a Counterexample-guided Inductive Synthesis (CEGIS) \cite{Berlin2009} framework to derive summaries and verify correctness using Z3 \cite{DeMoura2008}.

In practical scenarios, stencil kernels mainly involve array-level updates within nested loops, which complicates direct loop summarization, or \textit{lifting} as mentioned earlier. The main difficulties arise from two factors: (1) array updates require a unified expression capturing element-wise modifications across the target array, making summarization hard; (2) nested loops introduce complex multi-dimensional data dependencies, further hindering abstraction and analysis.

Traditional analyses, such as symbolic execution and dataflow analysis, work well for loops that update scalars, utilizing symbolic evaluation or recurrence solving. However, with arrays, conventional methods often resort to loop unrolling, failing to abstract inter-element dependencies. This limits recursive analysis for summarizing nested loops with array transformations, hindering robust summarization for realistic stencil computations.

To achieve effective \textit{lifting}, we investigate stencil dataflows. Stencil computations perform nearest-neighbor updates on multidimensional grids, where each point is updated based on itself and neighbors. This reveals recurring invariant structures in the data dependency graph. To capture these structures, we introduce the concept of an invariant subgraph at each loop level and treat iteration indices symbolically, enabling symbolic execution to derive uniform update expressions.

Moreover, each update uses a fixed set of neighbors, independent of loop index. This indicates that the inner loop performs a fixed computation pattern, and is repeatedly applied under different outer-loop contexts. To capture this invariant, we encapsulate the inner loop as a function updating a range of elements in the output array. This functional abstraction supports recursive decomposition of higher-level loops, enabling structured level-by-level summarization.


Building on these insights, we present \textit{Stencil-Lifting}, a novel system for hierarchical summarization of stencil kernels in legacy codes. Our approach introduces two key innovations that distinguish it from prior work: First, we establish a hierarchical recursive lifting theory where each vertex in the invariant subgraph carries a predicate-based summary encoding its semantics (Section~\ref{section: Lifting Theory}). By constructing self-consistent vertex summaries, loop invariants and postconditions are derived directly, with a consistency proof inherently ensuring semantic equivalence (obviating external verification). Second, we propose a provably terminating recursive lifting algorithm that substitutes traditional search-based synthesis with a convergent recursive process (Section~\ref{section: Summary Lifting}). Its termination property guarantees sound summarization for stencil kernels, ensuring the algorithm's completeness. Specifically, our contributions can be summarized as follows:

\stitle{\small \CIRCLE}  Establish a novel system Stencil-Lifting to automatically translate Fortran stencil kernels in legacy codes to the semantically equivalent Halide DSL (Section \ref{System Overview}).

\stitle{\small \CIRCLE}  Propose a new summary lifting theory based on the data dependency graph. In the theory, the concept of an invariant graph is presented. The stencil summary lifting is proved to be equivalent to searching for a \textit{self-consistent} summary configuration in the entire invariant graph (Section \ref{section: Lifting Theory}).

\stitle{\small \CIRCLE}  Develop a hierarchical recursive lifting algorithm guiding all vertex summaries in an invariant subgraph towards self-consistency, while formally proving the algorithm completeness (Section \ref{section: Summary Lifting}).

\stitle{\small \CIRCLE} Present a series of experiments to evaluate Stencil-Lifting by translating typical stencil benchmarks from two different suites and four real-world applications, which demonstrate the effectiveness and efficiency of Stencil-Lifting (Section \ref{section: Evaluation}).

\begin{figure}[htbp]
    \centering
    \includegraphics[width=0.75\textwidth]{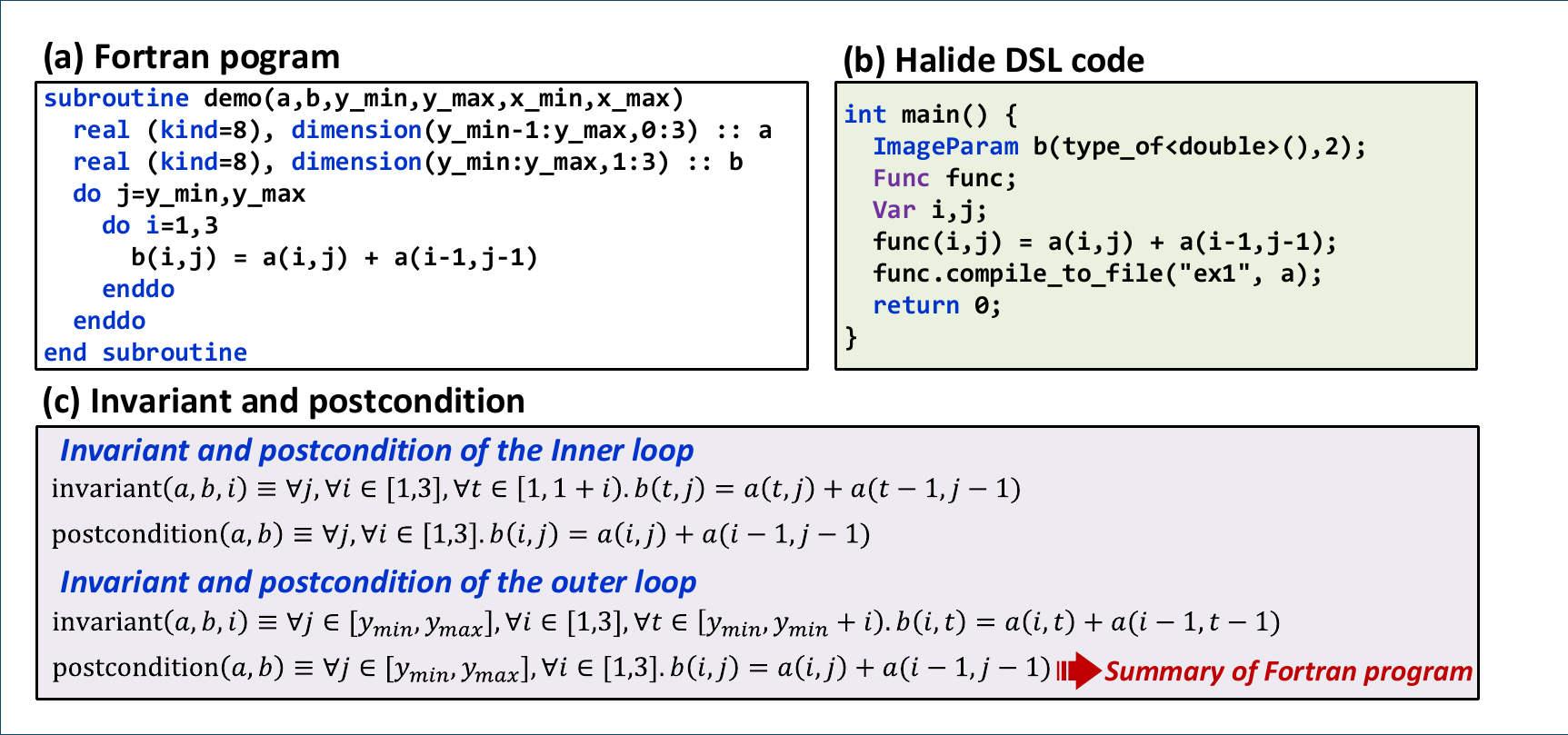}
    \vspace{-1.0em}
    \caption{A simple example of stencils. Panel (a) presents the original Fortran program. Panel (b) illustrates the corresponding Halide DSL code. Panel (c) shows the invariants and postconditions of inner and outer loops.}
    \label{fig:lifting-illus}
\end{figure}

\section{Background}
\subsection{Stencil Computation and Optimization}

Stencil computation consists of loops that update array elements using fixed access patterns, forming the basis of many scientific applications \cite{dwarf06, dwarf09, Gysi2015, Singh2020}. Stencils differ by grid dimensionality (1D–3D), neighborhood size (e.g., 3-point, 5-point), shape (box, star), and boundary conditions (constant, periodic). Despite their ubiquity, they suffer from low arithmetic intensity, poor locality, and complex dependencies, making optimization a long-standing challenge. DSL-based approaches address this by separating computation from optimization. For example, Halide \cite{Ragan-Kelley2013} decouples algorithm specification from scheduling, enabling vectorization and memory optimizations without altering the algorithm.

\subsection{Stencil Verified Lifting}

To enable DSL-level optimizations for legacy code without the burden and potential risk of manual rewriting, various approaches have been proposed to automatically convert low-level stencil code into DSLs. Throughout the transformation, the key challenge lies in ensuring semantic equivalence.

Over the past decade, verified lifting has emerged as a prominent methodology \cite{Kamil2016, Ahmad2019, Polygeist2021}. It derives a mathematical summary of the loop nest, capturing the effect of stencil execution on output arrays. This summary, typically expressed in a high-level predicate language (Figure~\ref{fig:lifting-illus}), is then translated into the target DSL (e.g., Halide). If the summary is verified to match the original semantics, the process qualifies as verified lifting.

Verifying the correctness of the summary can be framed in Hoare logic, where the summary serves as a postcondition (a predicate that holds after all executions of the code, assuming certain preconditions). Ensuring its validity requires constructing loop invariants (predicate expressions that remain true before and after each loop iteration, regardless of loop index or iteration count). Figure~\ref{fig:lifting-illus}(c) illustrates loop invariants and postconditions for both inner and outer loops. Their correctness is validated by satisfying the three propositions in Equation (\ref{equation: VC}) for each loop level:
\begin{equation}
\label{equation: VC}
\begin{array}{l}
  (I) \ \forall s \, \big( \mathit{pre}(s) \to \mathit{invar}(s) \big), \\
  (II) \ \forall s \, \big( \mathit{invar}(s) \land \mathit{cond}(s) \to \mathit{invar}(\text{body}(s)) \big), \\
  (III) \ \forall s \, \big( \mathit{invar}(s) \land \neg \mathit{cond}(s) \to \mathit{post}(s) \big).
\end{array}
\end{equation}
These propositions indicate that (I) if the precondition is true for any state $s$, then the invariant is also true, (II) if the invariant holds and the loop condition is true, then the invariant should hold for the state obtained after executing the $body$ of the loop once on $s$, and (III) for all states where the invariant holds but which do not satisfy the loop condition, the postcondition should hold as well.

\section{Challenges and Motivation}

\subsection{Challenges for Stencil Lifting}

A stencil kernel typically involves array (rather than scalar) updates and nested loops, complicating direct summary lifting.

\stitle{Challenge 1: The efficiency barrier lies in deriving unified array update expressions for stencil summarization.} Existing methods use symbolic execution to generate initial summary templates. For example, in a 1d-3p stencil, symbolic execution may produce expressions for individual output elements, such as $B[3]=A[2]+A[3]+A[4]$ and $B[6]=A[5]+A[6]+A[7]$, but fails to generalize them into parameterized forms like $B[i]=A[i-1]+A[i]+A[i+1]$. Instead, a summary starts is initially abstracted as $B[*]=A[*]+A[*]+A[*]$, and a program synthesizer searches for index candidates, which are iteratively refined and validated through a CEGIS process. Summaries are confirmed only after all candidates pass SMT-based verification, as in STNG. While effective for low-dimensional stencils, this approach faces two scalability issues: (1) The index search space grows exponentially with stencil dimensionality. (2) Verification becomes costly as counterexample-guided refinement for complex stencils generates an exponentially growing number of candidates, leading to solver timeouts. Figure~\ref{fig:challenge-2}(a) shows that lifting a 3d-19p stencil takes $7\times$ longer than a 3d-7p variant, and 3d-27p case fails to complete within 18 hours due to search space explosion. The fundamental barrier is the inability to reason about array accesses at the symbolic level, suggesting current methods lack high-level index abstraction to capture and exploit repetitive patterns.

\begin{figure}
    \centering
    \includegraphics[width=0.97\linewidth]{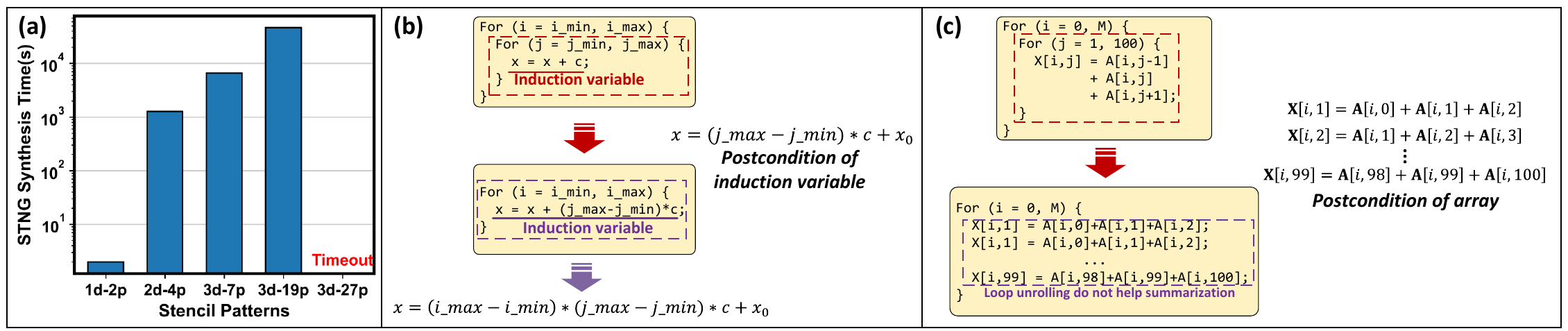}
    \vspace{-0.2cm}
    \caption{Summarization. Plane (a) presents STNG time for summary lifting. Plane (b) shows the loop summarization for scalar computation. Plane (c) illustrates the loop summarization for array computation.}
    \label{fig:challenge-2}
\end{figure} 

\stitle{Challenge 2: Cross-layer data dependencies complicate dataflow loop summarization.} Cross-layer data dependencies in nested loops pose challenges for dataflow-based summarization, especially in array-intensive computations. Existing methods typically summarize the innermost loop, convert it into a linear program, and reintegrate it to eliminate nesting~\cite{zhu2024, S-looper2015, Proteus2016, Blicha2022}. These approaches work well for scalar updates, where inner loops reduce to closed-form equations (Figure~\ref{fig:challenge-2}(b)). However, they often fail on arrays, yielding verbose, element-wise results akin to full loop unrolling, without capturing cross-iteration dependencies. The core difficulty lies in abstracting loop-carried dependencies from complex memory access patterns, which hierarchical summarization struggles to model. This becomes more pronounced in real-world applications, where performance-oriented transformations (e.g., tiling, fusion) further complicate iteration spaces and dependencies. These transformations introduce dynamic memory-access patterns and loop bounds, making static summarization infeasible.

\subsection{Motivation}

Our work is motivated by the two observations below:

\stitle{Observation 1: The fixed computation pattern reveals an invariant structure in the dataflow graph.}
Stencil computations apply a fixed update pattern to all elements in the output array. For instance, a 2d-2p stencil updates elements via $B(i,j) = A(i,j) + A(i-1,j-1)$ (Figure~\ref{fig:lifting-illus}). More generally, for a $d$-dimensional $m$-point stencil with input array $A$ and output array $B$, the computation pattern can be expressed as:
\begin{equation}
\label{equation: stencil computation pattern}
B[i_1, \dots, i_d] = \lambda_1 \cdot A[t_1(i_1, \dots, i_d)] + \lambda_2 \cdot A[t_2(i_1, i_2, \cdots, i_d)] + \cdots + \lambda_m \cdot A[t_m(i_1, \dots, i_d)],
\end{equation}
where each $t_k$ is an affine transformation and $\lambda_k$ a scalar coefficient. Symbolic execution can directly determine ${\lambda_1, \dots, \lambda_m}$, but existing methods derive ${t_1, \dots, t_m}$ via inductive synthesis. Equation~(\ref{equation: stencil computation pattern}) not only serves as a foundation for inferring $t_1, t_2, \dots, t_m$, but also reveals a key invariant structure in the stencil’s data dependence graph: since $m$ remains constant, each output element is derived from an identical structural pattern. Based on this observation, we propose the novel concept of an \textit{invariant subgraph}, which is isomorphic to every recurring stencil pattern. By treating array indices as symbolic variables instead of concrete values, symbolic execution within this subgraph can generalize the update expression for all array elements.

\stitle{Observation 2: Hierarchical functionalization enables recursive stencil lifting.}
Each update of $B$ in a stencil computation involves exactly $m$ elements from $A$, independent of the loop index. Thus, the inner loop executes a fixed computation scheme repeatedly for different outer loops. For example, in a 2d-2p stencil (Figure~(\ref{fig:lifting-illus})), when the outer loop index $j$ takes values 1 and 5, the inner loop over $i$ performs the same update for the first and fifth rows of $B$. To capture this invariant inner-loop pattern, we functionalize the inner loop as a high-level function that updates a specified range of $B$ in one call, rather than treating each element update in isolation. By representing this function as a vertex in the data dependence graph of the outer loop, we enable recursive functionalization across loop levels. This hierarchical abstraction allows stencil lifting to proceed recursively from inner to outer loops, establishing an efficient algorithm for summary lifting.

\begin{figure}[btph]
    \centering
    \includegraphics[width=0.9\textwidth]{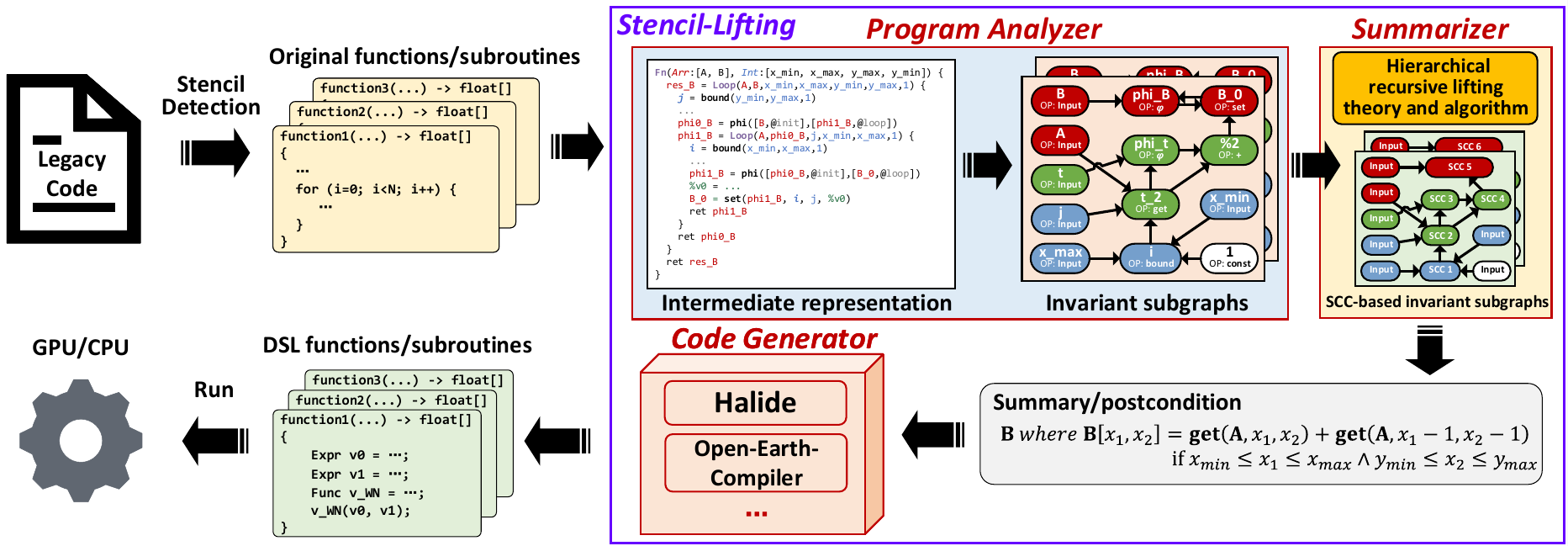}
    \vspace{-0.5em}
    \caption{Overview of Stencil-Lifting.}
    \label{fig:system-overview}
    \vspace{-1.5em}
\end{figure}

\section{System Overview}
\label{System Overview}

Stencil-Lifting is a recursive summary-lifting system that analyzes legacy Fortran code to extract high-level predicate summaries of stencil kernels, which are then translated into DSLs like Halide for high-performance code generation. The system comprises three core modules (Figure \ref{fig:system-overview}):

\stitle{\small \CIRCLE} Program Analyzer transforms Fortran stencil kernels into invariant subgraphs that encode loop structures. It parses source code into an intermediate representation (IR), performs static analysis to extract control flow and dependencies, and identifies invariant computation patterns within each loop. These are represented as invariant subgraphs (directed, possibly cyclic graph representing intra-loop behavior) and passed to the recursive lifting stage. Inner loops are treated as functions invoked by outer loops, enabling hierarchical functionalization.

\stitle{\small \CIRCLE} Summarizer lifts stencil summaries from invariant subgraphs using a hierarchical recursive lifting algorithm. Specifically, it infers vertex summaries that capture the computational semantics of each vertex. According to the proposed hierarchical recursive lifting theory, this process is equivalent to determining all vertex summaries under a set of predefined summary rules. By exploiting the hierarchical dependency relationships between invariant subgraphs, Summarizer recursively traverses them and applies a semantic summarization in each invariant subgraph. Ultimately, the extracted vertex summaries form the stencil’s postcondition, serving as its final summary.

\stitle{\small \CIRCLE} Code Generator translates the lifted summary into DSL code, targeting platforms such as GPUs and many-core architectures. Due to the structural similarity between our predicate language and DSLs (e.g., Halide), translation is straightforward. The generated DSL code can further benefit from built-in autotuning to explore optimized schedules for diverse hardware backends.

\etitle{Remark.}
Stencil-Lifting and STNG/Dexter differ fundamentally in design philosophy. STNG \cite{Kamil2016}, enhanced by Dexter \cite{Ahmad2019} with specialized generators, adopts a guess-and-check approach based on Sketch \cite{Berlin2009}, repeatedly generating and verifying candidate summaries. While effective, this incurs significant computational cost. In contrast, Stencil-Lifting leverages a recursive analysis framework that extracts invariant structures and propagates summaries across loop hierarchies. This design avoids exhaustive validation, reducing complexity.

\section{Hierarchical Recursive Lifting Theory of Stencil Summary}
\label{section: Lifting Theory}

\subsection{Graph-Based Abstraction of Stencil Programs}
\label{subsection: Graph-Based Abstraction}

\subsubsection{ Formalization of Loops in Stencil Code}
\label{section: Formalization of Loops in Stencil Code}
To enable program analysis and transformation, Stencil-Lifting focuses on stencil kernels structured as nested loops.

\begin{definition}[$K$-level Nested Loop]
\label{definition: Nested Loop}
A loop $L: \text{Loop}(Init, Cond, Step, Body)$ is a program construct. Here, $Init$ is a set including the initialized scalar loop indices (e.g., $t$) and arrays (e.g., $A$, $B$); $Cond$ defines the loop conidtion (e.g., $t < T$); $Step$ gives the loop index update way (e.g., $t = t + 1$); $Body$ is the loop body of $L$, which is represented as a sequence of statements. Let $\text{Statement}(Body)$ be the set of all statements in $Body$. Each statement represents one of four basic computation semantics, i.e., assignment, arithmetic operations, conditionals, and function calls.

If another loop $L'$ satisfies $\text{Statement}(Body~of~L') \subset \text{Statement}(Body~of~L)$, we say that $L'$ is nested within loop $L$ (denoted $L' \subset L$). More generally, a $K$-level nested loop is defined recursively:\\
(1) A $1$-level loop is a loop with no other loop in its body.
\\
(2) A $K$-level nested loop is a loop that contains at least one $(K{-}1)$-level nested loop in its body.
\end{definition}

Based on this recursive definition, the innermost loop is called the first level loop $L^{(1)}$, and the outermost loop is the $K$-th level loop $L^{(K)}$. Other level loops are denoted by $L^{(k)}$ sequentially.

\etitle{Remark.} To the best of our knowledge, the stencil kernel is usually programmed as a $K$-level nested loop $L$ including loops $L^{(1)}, L^{(2)}, \cdots, L^{(K)}$, where $L^{(k)} \subset L^{(k+1)}$.

\begin{figure}
    \centering
    \includegraphics[width=0.92\linewidth]{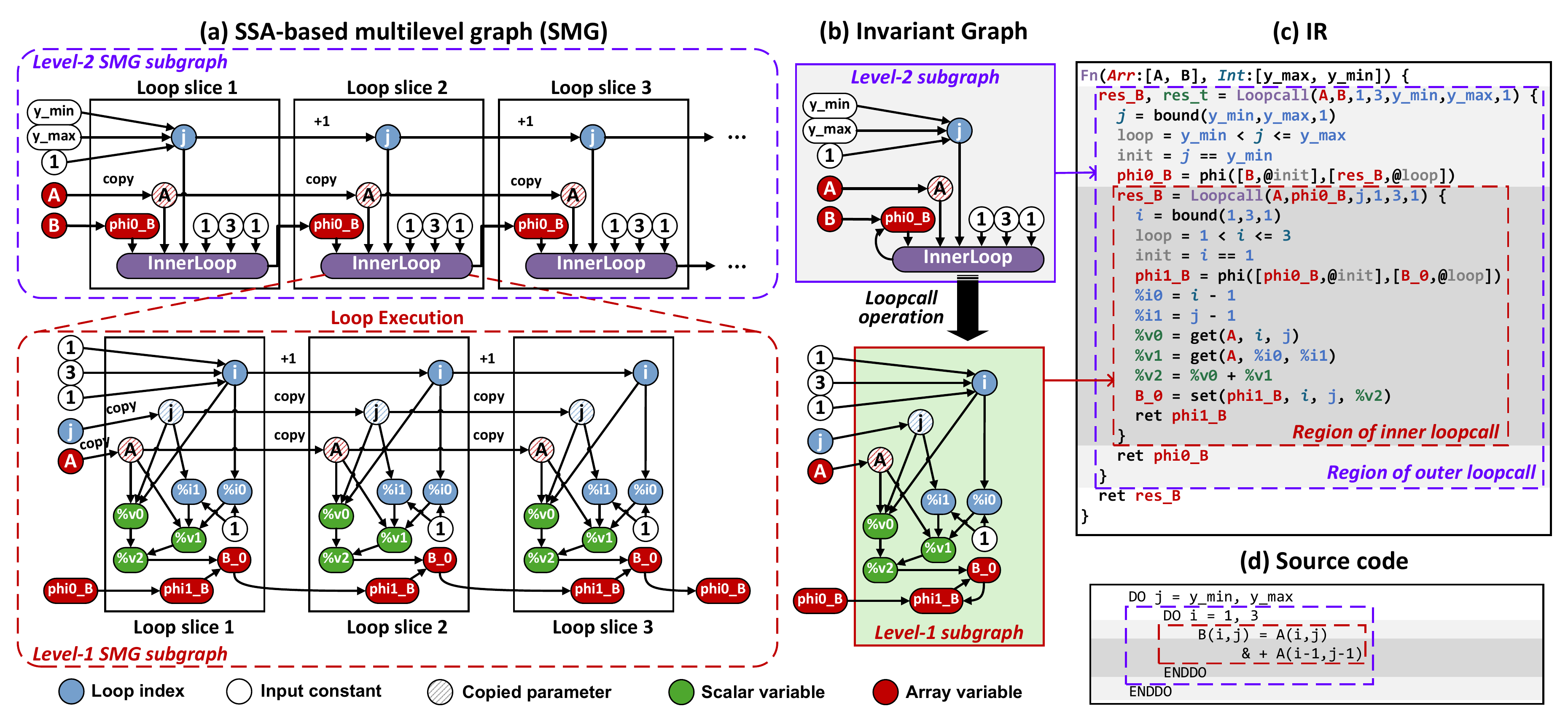}
    \vspace{-0.2em}
    \caption{An example of SSA-based multilevel graph, invariant graph, and IR constructed from a Fortran code.}
    \label{fig:motivation-2}
    \vspace{-0.2cm}
\end{figure}

\subsubsection{ Multilevel Data Dependency Graph}
\label{section: Multilevel Data Dependency Graph}
To support an effective summary lifting process, we model the $K$-level nested loop $L$ as a multilevel data dependency graph, where the subgraph at level $k$ corresponds to the loop $L^{(k)}$ and captures its computation semantics.

\stitle{Establishment of Single Static Assignment (SSA) -based multilevel graph.}
For each loop level $L^{(k)}$, let $\{w^{(k)}_0, w^{(k)}_1, \dots, w^{(k)}_{\tau_k}\}$ denote the set of variables used in its body. These include input arrays, intermediate temporaries, loop indices, and the output array. Without loss of generality, we assume the stencil kernel updates a single output array. If multiple outputs exist, they can be unified into a single abstract variable. Under this assumption, the output of $L^{(k-1)}$ naturally is a variable in $L^{(k)}$. We denote it as $w^{(k)}_0$, and conceptually treat $L^{(k-1)}$ as a functional computation of $w^{(k)}_0$ based on the inputs $\{w^{(k)}_i\}_{i=1}^{\tau_k}$. The multilevel data dependency graph is constructed through the following three steps:

\etitle{Step 1: Create vertices for individual variables.} 
Each variable $w^{(k)}_i$ may take different values at different iterations of $L^{(k)}$. Let $\vec{t}^{(k)}$ denote the iteration context of level $k$, including the iteration index $t_k$ and the outer loop context $(t_{k+1}, \dots, t_K)$. Based on the single static assignment (SSA) form, we denote $w^{(k,\vec{t}^{(k)})}_i$ as a variable instance of $w^{(k)}_i$ and represent it as a graph vertex $v^{(k,\vec{t}^{(k)})}_i$.

\etitle{Step 2: Construct data dependency edges.}  
From $\text{Statement}(Body~of~L^{(k)})$, we extract the computation pattern of each $w^{(k,\vec{t}^{(k)})}_j$ and its data dependencies. If $w^{(k,\vec{t}^{(k)})}_j$ depends on $w^{(k,\vec{t}'^{(k)})}_i$ (including spatial or temporal shifts), we add a directed edge from $v^{(k,\vec{t}'^{(k)})}_i$ to $v^{(k,\vec{t}^{(k)})}_j$. The resulting graph captures both intra- and inter-iteration data dependencies.

\etitle{Step 3: Assign attributes to vertices.}
Each vertex $v^{(k,\vec{t}^{(k)})}_i$ is annotated with the following attributes, i.e.,
(1) \textit{Data attribute}: the variable instance $w^{(k,\vec{t}^{(k)})}_i$;  
(2) \textit{Operation attribute}: computation operation that produces $w^{(k,\vec{t}^{(k)})}_i$;  
(3) \textit{Level attribute}: index $k$ indicating the depth of $L^{(k)}$;  
(4) \textit{Loop attribute}: an index vector that is a reduced form of $\vec{t}^{(k)}$ including two components: the call index representing how many times $L^{(k)}$ has been invoked by its outer loops, and the iteration index within $L^{(k)}$.

\stitle{SSA-based Multilevel Graph.} The formalized description of the established graph is as below.

\begin{definition}[SSA-based Multilevel Graph (SMG)]
For a stencil kernel represented by a $K$-level nested loop $L$, the associated  \emph{SSA-based multilevel graph} is a directed graph $G (V, E)$ organized into $K$ levels, each corresponding to a distinct loop level in  $L$. $V$ and $E$ represent the vertex and edge sets, respectively. Each vertex $v \in V$ represents a unique SSA-form definition or use of a data value occurring at a specific loop iteration. Each directed edge $(v, v') \in E$ denotes a data dependency, indicating that the data at $v'$ is computed using the data at $v$. Each $v \in V$ owns four attributes:
\\
\textbf{(1) Data attribute} $D(v) = (D_{\mathsf{type}}(v), D_{\mathsf{variable}}(v))$: Let $D: V \to \mathcal{T} \times \mathcal{V}$ be a function that assigns to each vertex $v$ its associated data information. Here, $\mathcal{T}$ denotes the set of data types, including $\mathrm{integer}$, $\mathrm{float}$, and $\mathrm{array}$, and $\mathcal{V}$ denotes the set of variable instances. 
\\
\textbf{(2) Operation attribute} $O(v)$:  Let $\mathrm{Pred}(v) = \{\, w \in V \mid (w, v) \in E \,\}$ denote the set of immediate predecessors of $v\in V$. The map $O$ then assigns to $v$ an operation function $O(v):=f$ such that $D_{\text{variable}}(v) = f\big(D_{\text{variable}}(w_1), \cdots, D_{\text{variable}}(w_n)\big)$, where $w_1, \cdots, w_n \in \mathrm{Pred}(v)$. For stencil kernels, all operation selections are concluded in Appendix \ref{section: Intermediate Representation}. 
\\
\textbf{ (3) Level attribute} $\ell(v)$:  
Let $\mathcal{K} = \{1, 2, \dots, K\}$. The map $\ell: V \to \mathcal{K}$ defines the level number $\ell(v) \in \mathcal{K}$ of $v$ that appears in the graph $G(V,E)$, corresponding to the depth of the loop where the corresponding variable instance is defined or used.
\\
\textbf{(4) Loop attribute} $\mathcal{L}(v) = (C(v),I(v))$ consists of two functions, $C: V \to \mathbb{N}$ and $I: V \to \mathbb{N}$, where $C(v)$ denotes the number of times $L^{(\ell(v))}$ has been invoked by its outer loops, and $I(v)$ records the intra-loop iteration index of $v$ within the current execution of $L^{(\ell(v))}$. 

\end{definition}

\etitle{Remark.} (1) Data dependencies across loop levels are represented in the SMG using special vertices $v$ annotated with $O(v) = \textit{Loopcall}$, indicating that $O(v)$ corresponds to the complete execution of the inner loop $L^{(\ell(v)-1)}$. These vertices abstract the invocation of inner loop bodies in the context of outer loops and serve as explicit interfaces between distinct nesting levels. (2) Boundary conditions that adjust iteration indices are modeled via $\varphi$-functions in SSA form. Specifically, for certain vertices $v'$, the operation $O(v')$ is set as $\varphi$ function, which selects the appropriate incoming value based on control flow or index guards at loop boundaries.

\begin{definition}[Level-$k$ SMG Subgraph]
Given an SMG $G(V, E)$, the \emph{Level-$k$ SMG subgraph}, denoted $G_{k}(V_{k}, E_{k})$, is a subgraph of $G(V,E)$ induced by the vertex set $V_{k} = \{v \in V \mid \ell(v) = k\}$,  where $E_{k} = \{ (u, v) \in E \mid u, v \in V_{k} \}$.
\end{definition}

Figure~\ref{fig:motivation-2}~(a) illustrates an example of the SMG constructed from the two-level nested stencil kernel shown in (d), including the corresponding level-1 and level-2 SMG subgraphs.

\subsubsection{ Invariant Graph and Summary Attribute}
\label{section: Invariant Graph and Summary Attribute}

In the following, we exploit the invariant structures in each $G_{k}(V_k,E_k)$ based on the common features of stencil computation.

\stitle{Regular loop.} Stencil kernels, whether from legacy codes or recent applications, exhibit structurally repetitive computation across loop iterations, with only minimal use of conditional branches within loop bodies. For example, even when boundary condition handling or tiling optimizations are applied, the number of conditional branches remains negligible compared to the total number of iterations during a complete loop execution. This indicates that the vast majority of iterations execute the same sequence of statements, differing only in iteration indices, a characteristic referred to as loop regularity in stencil computations. In the $K$-level nested loop $L$, $L^{(k)}$ is regular if most of the iterations in $L^{(k)}$ execute the same statements, except for a small number of iterations that follow exceptional conditional branches. This definition is formalized as follows:

\begin{definition}[Regular Loop]
\label{definition: regular loop}
 Let $M_k$ be the total number of calls of the $k$-th level loop $L^{(k)}$ invoked by its enclosing loops, and $N_k$ be the total number of iterations in the $k$-th level loop $L^{(k)}$. Denote $ST^{(m,n)}_k$ as the set of statements executed during the $n$-th iteration of $L^{(k)}$ at its $m$-th invocation. $L^{(k)}$ is regular if and only if there exists a set $ST^*_k \subset \text{Statement}(Body~of~L^{(k)})$ such that 
\\
(1) $r_k \ll M_k$ and $r_k \ll N_k$ with $r_k = \left| \left\{ (m,n) \mid ST^{(m,n)}_k \neq ST^{*}_k \text{ for } m \in [1,M_k] \text{ and } n \in [1,N_k] \right\} \right|$, i.e., the number of atypical iterations is negligible;
\\
(2) $\mu_k$ is a small positive integer, or $\mu_k \ll \chi_k$ with $\mu_k = \left| ST^{(m,n)}_k \setminus ST^{*}_k \right|$ \text{ and } $\chi_k = \left| ST^{*}_k\right|$, i.e., for each atypical iteration, the number of deviating statements is small.
\end{definition}

For example, in common stencil kernels, each loop level typically corresponds to updates along a single spatial dimension of the output array. In such cases, the set of executed statements $ST^{(m,n)}_k$ remains identical for most iterations, except possibly for $ST^{(m,1)}_k$ and $ST^{(m,N_k)}_k$, which may include additional statements to apply boundary conditions. These exceptional cases correspond to updates at domain boundaries, while remaining iterations perform uniform updates of interior mesh points.

\stitle{Construction of invariant subgraphs.} For each $ST^{(m,n)}_k$, we extract its associated subgraph in $G_k(V_k,E_k)$, representing the data dependencies induced by the statements executed in this iteration. The formal definition is given below.

\begin{definition}[Loop Slice]
In the Level-$k$ SMG subgraph $G_{k}(V_{k}, E_{k})$, a \emph{loop slice with Call-$m$ Iteration-$n$} is a subgraph $G_{k}^{(m,n)} (V_{k}^{(m,n)}, E_{k}^{(m,n)})$ induced by $V_{k}^{(m,n)} = \{v \in V_{k} \mid C(v)=m,~ I(v) = n\}$, where $E_{k}^{(m,n)} = \{(u, v) \in E_{k} \mid u, v \in V_{k}^{(m,n)}\}$.
\end{definition}

Based on the following two steps, we construct a data dependency graph that could represent the computation semantics for any $ST^{(m,n)}_{k}$. 

\etitle{Step 1: Establish a data dependency graph $G^*_k$ for $ST^{*}_{k}$.} With a similar process for the establishment of SMG, we construct a data dependency graph $G^*_k$ corresponding to $ST^{*}_{k}$. Most of $ST^{(m,n)}_{k}$ are equal to $ST^{*}_{k}$, indicating that most of $G^{(m,n)}_{k}$ share the same structure of $G^*_k$.

\etitle{Step 2: Expend the definition of $\varphi$ functions in $G^*_k$.} As $G^{*}_{k}$ captures the repeated computation structures of the level-$k$ loop $L^{(k)}$, it contains a vertex $v^*$ whose operation attribute $O(v^*)$ is a $\varphi$ function representing the loop condition of $L^{(k)}$. If $L^{(k)}$ is regular, Definition \ref{definition: regular loop} implies that only a small fraction of the $(m,n)$ iterations deviate from the common statements set $ST^*_k$. We thus extend the $\varphi$ function $O(v^*)$ to merge the divergent control paths arising from these exceptional iterations. With this extension, $G^*_k$ can uniformly represent the computation patterns of all $ST^{(m,n)}_k$ instances, including those involving conditional branches for boundary or special-case handling.

\stitle{Invariant subgraphs and invariant graph.} The preceding discussion establishes the existence of invariant subgraphs for regular loop $L^{(k)}$, which we now define formally.

\begin{definition}[Invariant Subgraph]
Given the level-$k$ SMG subgraph $G_k = (V_k, E_k)$ and its loop slices $G_k^{(m,n)} = (V_k^{(m,n)}, E_k^{(m,n)})$, an \emph{invariant subgraph} $G_k^* = (V_k^*, E_k^*)$ is a common structural pattern shared across all $G_k^{(m,n)}$, satisfying the following conditions:
\\
(1) For each $ST^{(m,n)}_k = ST^{*}_k$, there exists a bijective mapping $f^{(m,n)}_k: V_k^{(m,n)} \rightarrow V_k^*$ such that for any edge $(u, v) \in E_k^{(m,n)}$, we have $(f^{(m,n)}_k(u), f^{(m,n)}_k(v)) \in E_k^*$.
\\
(2) For any $i \neq j$, if an edge $(u, v) \in E_k$ with $u \in V_k^{(m,i)}$ and $v \in V_k^{(m,j)}$, $(f^{(m,i)}_k(u), f^{(m,j)}_k(v)) \in E_k^*$.
\\
(3) For the  $v^* \in V^{*}_{k}$ whose operation attribute involves the loop condition of $L^{(k)}$, $O(v^*)$ is a extended $\varphi$ function to merge branching execution paths for a small number of additional statements in $ST^{(m,n)}_k \setminus ST^{*}_k$ for $ST^{(m,n)}_k \neq ST^{*}_k$.
\end{definition}
By collecting invariant subgraphs from all levels and restoring cross-level edges, we construct the global \emph{Invariant Graph}, compactly encoding the hierarchical computation.

\begin{definition}[Invariant Graph]
Given a $K$-level SMG $G (V, E)$, let $G_k^* (V_k^*, E_k^*)$ denote the invariant subgraph at level-$k$, constructed from loop slices $\{G_k^{(m,n)}\}$ via bijective mappings $\{f^{(m,n)}_k\}$. The \emph{invariant graph} $G^*(V^*, E^*)$ is defined as: $V^* = \bigcup_{k=1}^{K} V_k^*$, and $E^*$ includes (1) all intra-level edges from each invariant subgraphs, i.e., $\bigcup_{k=1}^{K} E_k^*$; (2) all inter-level edges $(f_k^{(m,n)}(u), f_{k'}^{(m',n')}(v))$ where $u \in V_k^{(m,n)}$ and $v \in V_{k'}^{(m',n')}$ with $k \ne k'$.
\end{definition}

\etitle{Remark.} The invariant subgraph $G^*_k(V^*_k, E^*_k)$ captures the intra-level structure of  $G(V,E)$ at loop level $k$ by projecting all loop slices along the iteration direction, as illustrated in Figure \ref{fig:motivation-2}~(b). While cross-level dependencies are excluded from $G^*_k(V^*_k, E^*_k)$, they reconstructed globally in the invariant graph $G^*(V^*, E^*)$. Notably, $G^*_k(V^*_k, E^*_k)$ is not a directed acyclic graph (DAG), as the merging of ssa-based slices may introduce cycles when multiple data instances are mapped to the same vertex. The invariant subgraph captures the invariant process during the iterations at a given loop level, providing the analysis basis for Stencil-Lifting.  Additionally, we designe a specialized intermediate representation (IR) to express the entire invariant graph, as shown in Figure \ref{fig:motivation-2}~(c). The details of IR are presented in Appendix \ref{section: Intermediate Representation}.

\stitle{Summary attribute.} 
 We now specify the attributes associated with each vertex in the invariant subgraph. Each invariant subgraph vertex inherits four attributes from the SMG vertex: \emph{Data}, \emph{Operation}, \emph{Level} and \emph{Loop}. To support summary lifting, an additional attribute is introduced: \emph{Summary attributes} $S(v)$, represented as a logical predicate that summarizes the computational semantics for deriving $D_{\text{variable}}(v)$.

Because an invariant subgraph vertex abstracts multiple SMG vertices across loop iterations, the \emph{Data} and \emph{Loop} attributes of vertices in invariant subgraphs may vary during loop execution and are therefore treated as parameters. In contrast, the \emph{Operation} and \emph{Level} attributes remain constant, reflecting the invariant computational structure.

\subsection{Summary Syntax and Rule}

\subsubsection{ Summary Syntax}
\label{section: Summary Syntax}

\begin{table}
    \centering
\tiny
\begin{tabular}{llll}
\toprule[1.2pt]
$arrayInput$    & $::=\;$ array-type free variables created by input operation &
$numArithm$     & $::=\; num \;numOp\; num \;|\; numOp \; num$ \\
$indexInput$    & $::=\;$ index-type free variables created by input operation &
$indexCompare$  & $::=\; index \; (\; = \;|\; \neq \;|\; > \;|\; < \;|\; \le \;|\; \ge \;)\; index$ \\
$numInput$      & $::=\;$ numerical-type free variables created by input operation &
$indexArithm$   & $::=\; index \; (\, + \,|\,-\,|\,*\,|\,/\,|\,\%\,) \; index \; | \; - index$ \\
$condInput$     & $::=\;$ bool-type free variables created by input operation &
$boolArithm$    & $::=\; cond \; (\wedge|\vee) \; cond \;|\; \neg cond$ \\
$numOp$         & $::=\;$ numerical operators supported by SMT solvers &
$elemExp$       & $::=\;d num \;|\; index$\\
$num$           & $::=\; numInput \;|\; numArithm \;|\; arrayGet$ &
$cond$          & $::=\; indexCompare \;|\; boolArithm \;|\; condInput$ \\
$arrayGet$      & $::=\; \textbf{get}(arrayInput\;\{,\;index\})$ &
$summary$       & $::=\; \bigvee_i branch_i \;|\; elemExp$ \\
$index$         & $::=\; indexInput \;|\; indexArith$ &
$branch$        & $::=\;  cond \rightarrow elemExp$ \\

\bottomrule[1.2pt]
\end{tabular}
    \caption{Summary syntax in Stencil-Lifting.}
    \label{tab:summary-syntax}
    \vspace{-1.cm}
\end{table}

The vertex summary attribute consists of multiple branches, each defined by an element expression and a validity condition (denoted as $elemExp$ and $cond$ in Table~\ref{tab:summary-syntax}), both parameterized by loop indices. The element expression encodes the symbolic semantics for deriving the variable on each vertex, while the validity condition constrains its applicability to specific loop contexts, often distinguishing interior from boundary regions. All conditions must be mutually exclusive to ensure deterministic interpretation. The summary syntax details are shown in Table \ref{tab:summary-syntax}. The predicate syntax, tailored for stencil kernels, is designed to support: \emph{(1) Symbolic array representation}: Arrays are treated as symbolic entities to enable precise tracking of inter-element data dependencies. \emph{(2) Conditional semantics over loop regions}: Branching allows different symbolic expressions to describe the computation in distinct loop regions, particularly for handling boundary conditions. \emph{(3) Language-level translatability}: The summary syntax is compatible with domain-specific languages such as Halide, facilitating automated and efficient code generation.

\subsubsection{ Summary Rule}
\label{subsubsection:SummaryRule}

To ensure that $S(v)$ accurately captures the computational semantics of $D_{\text{variable}}(v)$, vertex summaries are generated based on predefined \emph{Summary Rules}. A summary rule derives $S(v)$ based on $O(v)$ and the summaries of vertices in $\text{Pred}(v)$. In Stencil-Lifting, we abstract seven summary rules covering common stencil operations (Figure \ref{fig:op-rule}).

\stitle{\emph{Rule 1}}: If \( O(v) \) is an $input$ operation, \( S(v) \)  is set to \( D_{\text{variable}}(v) \). This rule initializes the summaries of input vertices with their data.

\stitle{\emph{Rule 2}}: If \( O(v) \) is a $bound$ operation, \( S(v) \) is set as an index-typed \( D_{\text{variable}}(v) \), constrained to a range from \( begin \) to \( end \) with a step size of \( step \) for each iteration.

\stitle{\emph{Rule 3}}: If $v$ has two predecessors $v_i$ and $v_j$, where \( D_{\text{type}}(v_i) \) and \( D_{\text{type}}(v_j) \) are scalars (numerical, index, or boolean) and \( O(v) \) is a scalar operation (e.g., numerical, index, or boolean arithmetic), then $S(v) = S(v_i)\ O(v)\ S(v_j)$ (e.g., $S(v) = S(v_i) + S(v_j)$).

\stitle{\emph{Rule 4}}: If \( D_{\text{type}}(v_i) \) is array and \( D_{\text{type}}(v_j) \) is scalar, and \( O(v) \) is a $get$ operation (retrieving an array element by index), then $S(v) = S(v_i)[S(v_j)]$. For multi-dimensional arrays, \textit{Rule 4} involves additional predecessors of \( v \).  

\stitle{\emph{Rule 5}}: For a $set$ operation, \( S(v) \) represents an array state where only the element at index \( S(v_i) \) is updated by a numerical expression \( S(v_j) \), and other elements remains unchanged.

\stitle{\emph{Rule 6}}: Applies to \( \varphi \) functions, which aggregate variable definitions across different branches. \( S(v) \) merges the summaries of its predecessors as new branches.  

\stitle{\emph{Rule 7}}: Handles the integration of an inner loop's postcondition into the outer loop's invariant subgraph. As shown in Figure \ref{fig:op-rule}, if $v \in G^*_k(V^*_k,E^*_k)$ and \( O(v) \) is $Loopcall$, its predecessors (e.g., $v_a, v_b, \cdots$) are actual arguments passed into the inner loop. Each such predecessor has a corresponding \textit{dual vertex} (e.g., $v_{a'}, v_{b'}, \cdots$) in the inner loop's invariant subgraph $G^*_{k-1}(V^*_{k-1},E^*_{k-1})$, representing formal parameters. \( S(v) \) is set as the postcondition of the inner loop where each dual vertex's summaries is substituted with that of its corresponding outer loop vertex. 

\begin{figure}
    \centering
    \includegraphics[width=0.95\linewidth]{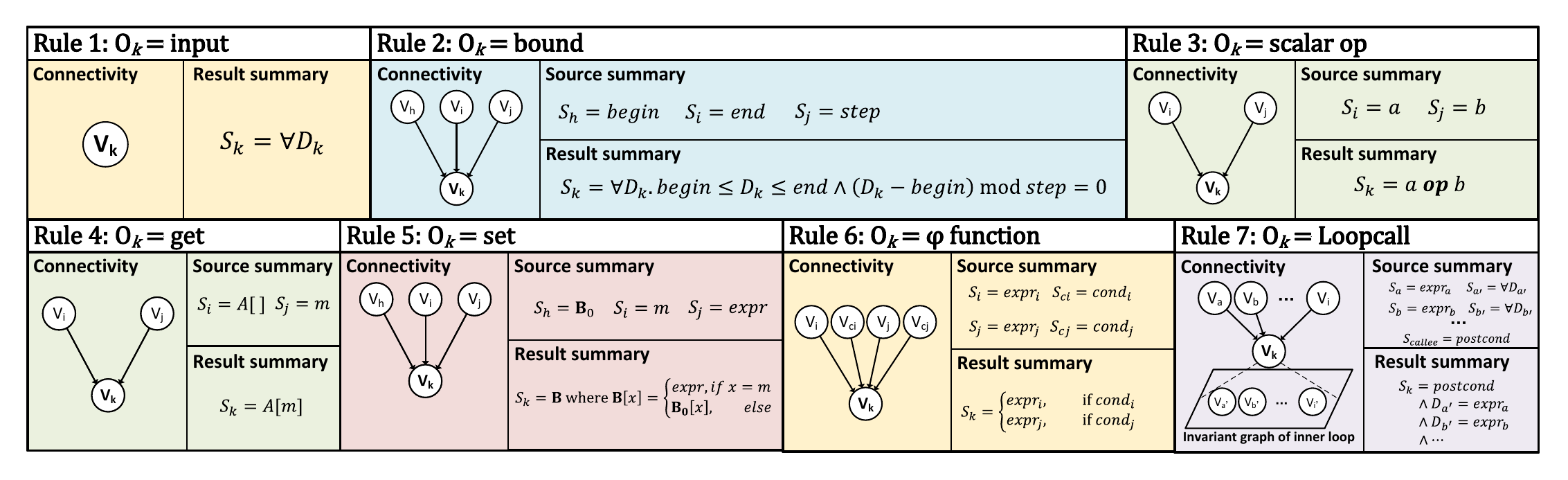}
    \vspace{-0.5cm}
    \caption{Summary rules in Stencil-Lifting.}
    \label{fig:op-rule}
    \vspace{-0.3cm}
\end{figure}

\subsection{Hierarchy Recursive Lifting Theory}
\label{subsec:HierarchyRecursiveLiftingTheory}
Based on the summary rules, we analyze the relationships among summaries of all vertices in the level-$k$ invariant subgraph $G^*_k(V^*_k,E^*_k)$. Let $V^*_k=\{v^k_1, v^k_2, \cdots, v^k_{h_k}\}$ denotes the set of $h_k$ vertices. A \emph{Summary Configuration} is defined as the tuple $(S(v^k_1), S(v^k_2), \cdots, S(v^k_{h_k}))$, where each $S(v^k_{i})$ is governed by a summary rule $R(v^k_i)$ associated with its operation $O(v^k_{i})$. Each rule $R(v^k_i)$ is one of the seven rules defined in Section~\ref{subsubsection:SummaryRule}. The rules $(R(v^k_1), R(v^k_2), \cdots, R(v^k_{h_k}))$ collectively impose a system of constraints over the summaries, determining a valid summary configuration.

\begin{definition}[Self-consistent Summary Configuration]
\label{definition: self-consistent configuration}
On the invariant subgraph $G^*_k(V^*_k,E^*_k)$, a summary configuration $(S^*(v^k_1), S^*(v^k_2), \cdots, S^*(v^k_{h_k}))$
is \textit{self-consistent} under a set of conjunctive rules $\{ R(v^k_1), R(v^k_2), \cdots, R(v^k_{h_k})\}$ if and only if, for every vertex $v^k_i \in V^*_k$ with its all predecessors $\{v^k_{i_1}, v^k_{i_2}, \cdots, v^k_{i_m}\}$, 
$S^*(v^k_i) = R(v^k_i) [S^*(v^k_{i_1}), S^*(v^k_{i_2}), \cdots, S^*(v^k_{i_m})]$. Each rule $R(v^k_i)$ is interpreted as a function that computes $S^*(v^k_i)$ based on the summaries of its predecessors.
\end{definition}

\etitle{Remark.} A \textit{self-consistent} summary configuration $(S^*(v^k_1), \cdots, S^*(v^k_{h_k}))$ provides a valid solution satisfying the constraints imposed by $R(v^k_1), R(v^k_2), \cdots, R(v^k_{h_k})$.

Let $v^k_{p_k}$ be the vertex whose operation $O(v^k_{p_k})$ is a $\varphi$ function with one input being the output array. Denote by $S^*(v^k_{p_k})|_{\mathcal{L}(v^k_{p_k})=(m,n)}$ the instance of $S^*(v^k_{p_k})$ for calculating $D_{variable}(v^k_{p_k})$ in the $n$-th iteration of loop $L^{(k)}$ at its $m$-th invocation. 

\begin{theorem}
\label{theorem: self-consistent}
Let $(S^*(v^k_1), S^*(v^k_2), \cdots, S^*(v^k_{h_k}))$ be a \textit{self-consistent} summary configuration under the $h_k$ rules $R(v^k_1), R(v^k_2), \cdots, R(v^k_{h_k})$. If $S^*(v^k_{p_k})|_{\mathcal{L}(v^k_{p_k})=(m,n)}$ is set as the loop invariant and $S^*(v^k_{p_k})|_{\mathcal{L}(v^k_{p_k})=(m,N_k)}$ is set as the postcondition for $L^{(k)}$ at its $m$-th invocation (where $1\leq m \leq M_k$ and $ 1 \leq n \leq N_k$), then the three statements in Equation (\ref{equation: VC}) hold.
\end{theorem}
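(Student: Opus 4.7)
The plan is to reduce the three Hoare-logic conditions to a single structural fact: that along every execution trace of $L^{(k)}$ at its $m$-th invocation, the concrete value $D_{\text{variable}}(v^k_i)$ realized at iteration $n$ agrees with $S^*(v^k_i)|_{\mathcal{L}(v^k_i)=(m,n)}$ for every vertex $v^k_i\in V^*_k$. Once this ``semantic agreement'' lemma is proved, each of (I)-(III) falls out as a specialization to the distinguished vertex $v^k_{p_k}$ at, respectively, the entry, interior, and exit iterations of $L^{(k)}$.

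My first step is to fix an arbitrary invocation index $m$ and prove the agreement lemma by induction on the iteration counter $n$. The base case $n=0$ exploits Rule 6 for the $\varphi$ vertex $v^k_{p_k}$: its incoming edge from the pre-header carries the initial output array, so $S^*(v^k_{p_k})|_{\mathcal{L}=(m,0)}$ reduces, via the $\varphi$ branch selected at entry, to the precondition predicate. This immediately yields (I). For the inductive step, suppose the lemma holds at iteration $n$; I would then walk $G^*_k$ in the order induced by the body's data-flow (ignoring the back-edges through $v^k_{p_k}$, which are handled separately). At each non-$\varphi$ vertex $v^k_i$ the self-consistency equation $S^*(v^k_i)=R(v^k_i)[S^*(v^k_{i_1}),\ldots,S^*(v^k_{i_m})]$ coincides exactly with the operational semantics of the statement that produces $D_{\text{variable}}(v^k_i)$ in the body, by case analysis on Rules 1-5. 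For Rule 7 (Loopcall), I invoke the theorem recursively on the inner invariant subgraph $G^*_{k-1}$: its postcondition, with dual vertices substituted, is precisely what Rule 7 assigns, so the recursive call supplies the needed semantic match for the inner loop's effect. Feeding the resulting values back through $v^k_{p_k}$ via Rule 6 (and the extended $\varphi$ that merges the rare boundary branches) gives $S^*(v^k_{p_k})|_{\mathcal{L}=(m,n+1)}$, which establishes invariance and therefore (II). Condition (III) is then immediate: when $\neg\mathit{cond}$ holds at the end of the $N_k$-th iteration, the invariant instance $S^*(v^k_{p_k})|_{\mathcal{L}=(m,N_k)}$ is by construction the postcondition.

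The recursive case for Rule 7 requires strong induction on the loop level $k$: the theorem for $G^*_{k-1}$ supplies the inner postcondition, and the outer-level inductive step then treats the Loopcall vertex as an atomic operation whose effect on the output array is given by that postcondition. I also need to verify that the substitution of dual vertices preserves self-consistency, which follows because the substitution is just a renaming of free variables in the predicate language of Table~\ref{tab:summary-syntax}.

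The main obstacle I anticipate is the absence of a topological order in $G^*_k$: as the remark following the invariant graph definition notes, projecting SSA slices introduces cycles, so the agreement lemma cannot be proved by a one-shot structural induction on the subgraph. My way around this is to factor each cycle through $v^k_{p_k}$, whose $\varphi$ semantics (Rule 6) is precisely what breaks the circularity by time-indexing the two incoming branches to iterations $n$ and $n+1$. Consequently the induction on $n$ is what actually discharges the cycles, while the walk over the acyclic remainder of $G^*_k$ handles the per-iteration body. A secondary subtlety is ensuring that the extended $\varphi$ at $v^*$ correctly absorbs the $r_k$ atypical iterations from Definition~\ref{definition: regular loop}; this is handled by noting that the branch conditions in the extended $\varphi$ are mutually exclusive by the summary syntax requirement, so the predicate $S^*(v^k_{p_k})|_{\mathcal{L}=(m,n)}$ selects the correct branch for both typical and atypical $n$.
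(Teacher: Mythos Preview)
Your plan is sound and reaches the same conclusion, but the decomposition differs from the paper's in two respects.

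First, where you induct on the iteration counter $n$ and break cycles in $G^*_k$ by time-indexing through $v^k_{p_k}$, the paper instead unfolds $G^*_k$ back into the per-iteration loop slices $G'(V'_i,E'_i)$ of the SMG. Each slice is a DAG, so the paper runs a clean structural induction on the \emph{depth} of the slice (longest path to the output vertex) to show that the composite rule-map $G'_i$ satisfies the recurrence $S^*(v^k_{p_k})|_{(m,i)} = G'_i(\ldots, S^*(v^k_{p_k})|_{(m,i-1)})$, and then identifies this recurrence directly with statement~(II). Your approach stays on the cyclic invariant subgraph; this is fine, but be careful that not every cycle need pass through $v^k_{p_k}$ (intermediate buffers introduce other $\varphi$ vertices), so the ``acyclic remainder'' you walk must be $G^*_k$ minus \emph{all} $\varphi$ back-edges, not just the one through $v^k_{p_k}$. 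The paper's unfolding to slices sidesteps this entirely.

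Second, your handling of Rule~7 by strong induction on the level $k$ goes beyond what the paper does here. The paper's proof treats the Loopcall vertex as atomic: self-consistency at level $k$ already fixes $S^*(v^k_{q_k})$ via Rule~7, and the depth induction never descends into $G^*_{k-1}$. The cross-level claim---that this summary genuinely matches the inner loop's operational effect---is deferred to Theorem~\ref{theorem: recursive lifting}. Your recursion on $k$ is in effect merging the two theorems into one; that is legitimate, but it silently imports the hypothesis that the configurations at all levels $\le k$ are self-consistent, which is stronger than what Theorem~\ref{theorem: self-consistent} assumes. The paper's separation keeps this theorem a purely single-level statement with a lighter induction, at the cost of reading ``body'' in (II) as the abstract map $G'_i$ rather than the concrete code; your route buys a more operationally grounded argument at the cost of a bigger hypothesis.
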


\etitle{Remark.} Theorem \ref{theorem: self-consistent}, proved in Appendix~\ref{app:B}, shows that a \textit{self-consistent} summary configuration on $G^*_k(V^*_k,E^*_k)$ yiels the loop invariant and postcondition of loop $L^{(k)}$.

For any level $k \geq 2$, let $v^k_{q_k} \in V^*_k$ be the vertex with $O(v^k_{q_k}) = \mathit{Loopcall}$, representing a complete execution of loop $L^{(k-1)}$. The following theorem lays the theoretical foundation of Stencil-Lifting, motivating a hierarchical recursive lifting over $G^*(V^*,E^*)$.

\begin{theorem}
\label{theorem: recursive lifting}
For any invariant subgraph \( G^*_k(V^*_k,E^*_k) \) of \( G^*(V^*,E^*) \), let the summary configuration $(S^*(v^k_1), \cdots, S^*(v^k_{h_k}))$ be \textit{self-consistent} under rules $R(v^k_1), \cdots, R(v^k_{h_k})$. Suppose that
$v^k_{q_k}$ and $v^{k-1}_{p_{k-1}}$ satisfy $S^*(v^k_{q_k})|_{\mathcal{L}(v^k_{q_k})=(m_k,n_k)} = S^*(v^{k-1}_{p_{k-1}}) |_{\mathcal{L}(v^{k-1}_{p_{k-1}})=( ((m_k-1) \cdot N_{k} + n_{k}) \cdot N_{k-1},N_{k-1})}$ for any $m_k$ and $n_k$ ($1 \leq m_k \leq M_k$ and $1 \leq n_k \leq N_k$). $S^*(v^K_{p_K})|_{\mathcal{L}(v^K_{p_K})=(M_K,N_K)}$ represents the summary/postcondition of the stencil kernel.
\end{theorem}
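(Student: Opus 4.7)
My plan is to proceed by induction on the loop level $k$, from $k=1$ (innermost) up to $k=K$ (outermost), using Theorem~\ref{theorem: self-consistent} at each level to obtain a valid loop invariant and postcondition for $L^{(k)}$, and using the bridging hypothesis on $v^k_{q_k}$ to propagate the inner-loop postcondition into the outer level via Rule~7. The base case $k=1$ is almost immediate: the invariant subgraph $G^*_1(V^*_1,E^*_1)$ contains no $\mathit{Loopcall}$ vertex, so the self-consistency hypothesis carries no external condition, and Theorem~\ref{theorem: self-consistent} directly yields that $S^*(v^1_{p_1})|_{\mathcal{L}(v^1_{p_1})=(m,N_1)}$ is the postcondition of $L^{(1)}$ at its $m$-th invocation, for every $1\le m\le M_1$.

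\textbf{Inductive step.} Assume that for every $1\le m\le M_{k-1}$, $S^*(v^{k-1}_{p_{k-1}})|_{\mathcal{L}(v^{k-1}_{p_{k-1}})=(m,N_{k-1})}$ is the true postcondition of $L^{(k-1)}$ at its $m$-th invocation. In $G^*_k(V^*_k,E^*_k)$ the vertex $v^k_{q_k}$ abstracts one complete execution of $L^{(k-1)}$, and by Rule~7 its summary is obtained by substituting, in the inner-loop postcondition, the dual-vertex summaries by those of its predecessors in $G^*_k$. The bridging hypothesis supplies exactly the index alignment this substitution requires: at the $n_k$-th iteration of the $m_k$-th invocation of $L^{(k)}$, the inner loop $L^{(k-1)}$ has just finished its $\bigl((m_k-1)N_k+n_k\bigr)$-th invocation, which corresponds to a total of $\bigl((m_k-1)N_k+n_k\bigr)N_{k-1}$ inner iterations, i.e.\ to the loop attribute stated in the theorem. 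Combined with the inductive hypothesis, this shows that $v^k_{q_k}$ is a correctly summarised vertex inside $G^*_k$, so the self-consistent configuration at level $k$ meets every premise of Theorem~\ref{theorem: self-consistent}. Hence $S^*(v^k_{p_k})|_{(m,n)}$ is a valid invariant and $S^*(v^k_{p_k})|_{(m,N_k)}$ a valid postcondition for every invocation $m$ of $L^{(k)}$, completing the induction. Taking $k=K$ and $(m,n)=(M_K,N_K)$ then gives the postcondition of the entire stencil kernel.

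\textbf{Main obstacle.} The routine parts are the base case and the final specialisation to $k=K$; the substantive step is the inductive one, and within it the genuinely delicate point is verifying that Rule~7's substitution together with the bridging condition faithfully transports the inner-loop postcondition into the outer invariant subgraph. Two things must be reconciled: (i) the formal-vs.-actual parameter correspondence between the dual vertices of $G^*_{k-1}$ and the predecessors of $v^k_{q_k}$ in $G^*_k$, which is a syntactic matter dictated by Rule~7; and (ii) the invocation-and-iteration bookkeeping that justifies the specific instantiation $\mathcal{L}(v^{k-1}_{p_{k-1}})=\bigl(((m_k-1)N_k+n_k)N_{k-1},\,N_{k-1}\bigr)$, which is an arithmetic matter about how many inner iterations have elapsed when control reaches $v^k_{q_k}$. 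Once these two ingredients are combined, the hypothesis of Theorem~\ref{theorem: self-consistent} is discharged at level $k$ and the recursion closes without further analysis.
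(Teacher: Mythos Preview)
Your inductive argument on the loop level is correct and is cleaner than the paper's own route. The paper also begins by checking that Rule~7 is satisfied at each level boundary (its Step~1 establishes global self-consistency over $G^*(V^*,E^*)$), but then, instead of invoking Theorem~\ref{theorem: self-consistent} at the outer level as a black box, it specialises to $K=2$ and explicitly builds a two-parameter predicate $\eta(i,j)$ depending jointly on the inner and outer iteration indices, verifying by case analysis that $\eta$ is preserved under a single iteration at \emph{either} level; general $K$ is then claimed ``without loss of generality''. Your approach replaces this explicit joint-invariant construction by a uniform level-by-level induction using Theorem~\ref{theorem: self-consistent} at each stage, which handles arbitrary $K$ directly and is more modular; the paper's construction, on the other hand, makes the interaction of the two loop indices fully concrete. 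One wording point: when you say the bridging and inductive hypotheses show that ``the self-consistent configuration at level $k$ meets every premise of Theorem~\ref{theorem: self-consistent}'', note that self-consistency at level $k$ is already assumed; what the inductive hypothesis together with the bridging identity actually supplies is that the Rule~7 summary of $v^k_{q_k}$ coincides with the \emph{true} semantic effect of one complete execution of $L^{(k-1)}$, so that the Hoare conditions delivered by Theorem~\ref{theorem: self-consistent} at level $k$ speak about the actual program body rather than only the rule-defined abstraction of it. With that small clarification your argument closes.
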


\etitle{Remark.} Theorem \ref{theorem: recursive lifting}, proved in Appendix~\ref{app:B}, establishes that if (i) the summary configuration at each level is \textit{self-consistent}, and (ii) each vertex with $\mathit{Loopcall}$ operation at level $k$ uses the postcondition of the level-$(k-1)$ loop as its summary, then the summary of the top-level vertex associated with the $\varphi$ function yields the final postcondition of the entire stencil computation.

\section{Hierarchical Recursive Lifting Algorithm for Stencil Summary}
\label{section: Summary Lifting}

\subsection{Basic Idea}
\label{section: Basic Idea}

\begin{figure}
    \centering
    \includegraphics[width=0.89\linewidth]{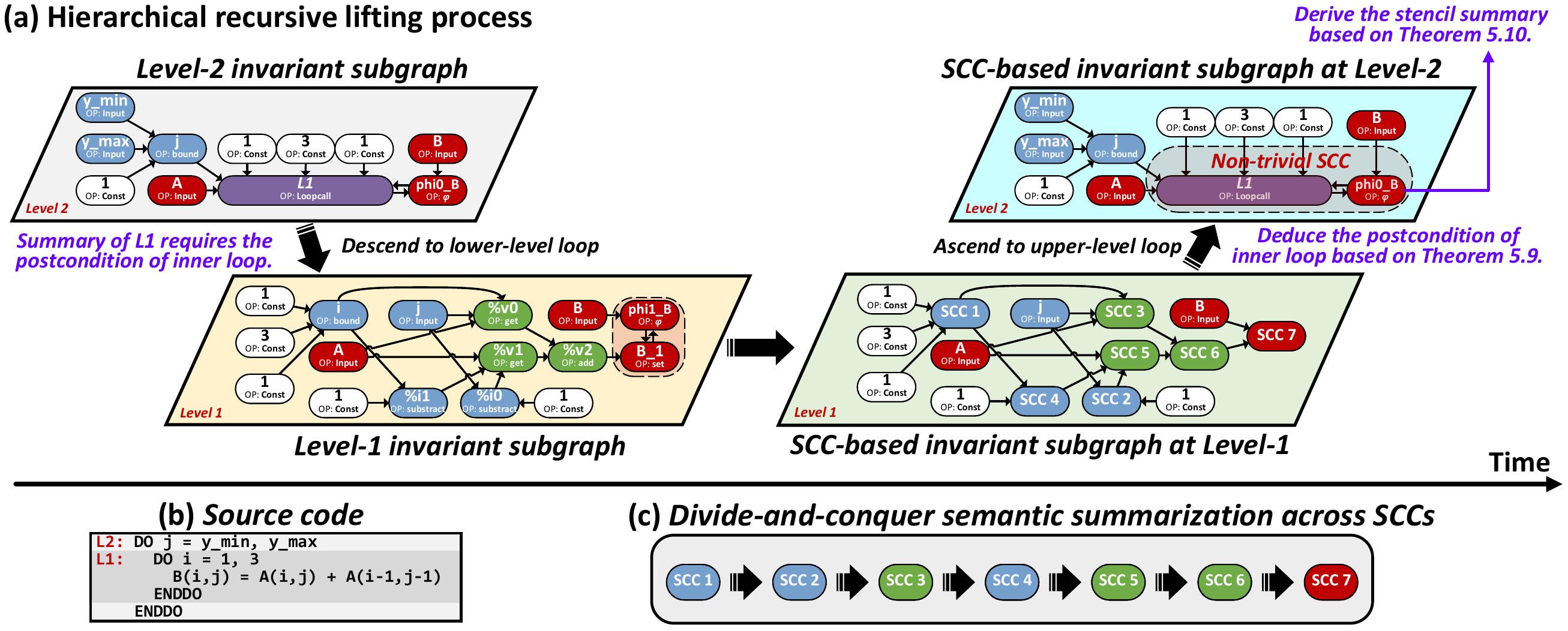}
    \vspace{-0.2em}
    \caption{Basic idea of our design.}
    \label{fig: Basic idea of our design}
    \vspace{-0.2cm}
\end{figure}

\subsubsection{ Hierarchical Recursive Lifting}
Based on Theorem \ref{theorem: recursive lifting}, the key idea of our algorithm design is to compute the stencil summary by recursively performing semantic extractions across all levels of the invariant graph $G^*(V^*,E^*)$. Specifically, at the level-$k$ invariant subgraph $G^*_k(V^*_k,E^*_k)$ associated with loop $L^{(k)}$, a divide-and-conquer semantic summarization is designed to derive a \textit{self-consistent} summary configuration. When a vertex annotated with a $Loopcall$ operation is encountered, the summarization process recursively descends into the lower-level invariant subgraph $G^*_{k-1}(V^*_{k-1}, E^*_{k-1})$ to compute \emph{self-consistent} summaries and the postcondition of $L^{(k-1)}$. 

Once the semantic summarization over $G^*_k(V^*_k, E^*_k)$ completes, Theorem \ref{theorem: self-consistent} guarantees the \textit{self-consistent} summary of the output-related $\varphi$-vertex $v^{k}_{p_{k}}$ (Section~\ref{subsec:HierarchyRecursiveLiftingTheory}) represents the postcondition of the leve-$k$ loop $L^{(k)}$. This postcondition is then propagated to the level $k+1$ by applying the summary rule of the $Loopcall$ vertex $v^{k+1}_{q_{k+1}}$ (Section~\ref{subsec:HierarchyRecursiveLiftingTheory}) in $G^*_{k+1}(V^*_{k+1}, E^*_{k+1})$. In this way, the lifting proceeds recursively from lower to higher levels, with the postcondition of the top-level loop ultimately forming the overall summary of the stencil kernel (Theorem \ref{theorem: recursive lifting}). An example of the hierarchical recursive lifting process for a two-level nested loop is illustrated in Figure~\ref{fig: Basic idea of our design}(a), derived from the source code presented in Figure~\ref{fig: Basic idea of our design}(b).

\subsubsection{ Divide-and-Conquer Semantic Summarization}
To derive the postcondition of the $k$-th level loop in the target stencil kernel, we design a divide-and-conquer semantic summarization.

\stitle{Fundamentals.}
As discussed in Section \ref{section: Lifting Theory}, $G^*_k(V^*_k,E^*_k)$ may contain multiple cycle structures, leading to complex inter-dependencies among vertices and complicating semantic summarization. To facilitate a structured analysis, we partition $G^*_k$ into its strongly connected components (SCCs):

\begin{definition}[Strongly Connected Component (SCC)]
\label{def:scc}
A strongly connected component of a directed graph is a maximal subgraph in which every vertex is reachable from every other vertex.
\end{definition}

According to Definition~\ref{def:scc}, each cycle in $G^*_k(V^*_k, E^*_k)$ is naturally encapsulated within an SCC. In a trivial case, a single vertex forms a singleton SCC, since it is reachable from itself. By treating SCCs as indivisible units, we obtain a higher-level abstraction of the original invariant subgraph.

\begin{definition}[SCC-based Invariant Subgraph]
\label{def:scc-basedig}
Given an invariant subgraph $G^*_k(V^*_k, E^*_k)$, its \emph{SCC-based invariant subgraph} is the reduced graph $\mathcal{G}^*_k = (\mathcal{V}^*_k, \mathcal{E}^*_k)$, where each vertex in $\mathcal{V}^*_k$ represents either an SCC of $G^*_k$ or a singleton vertex not part of any cycle. There is an edge $(\omega, \nu) \in \mathcal{E}^*_k$ if there exists an edge $(w, v) \in E^*_k$ such that $w \in \omega$ and $v \in \nu$, with $\omega \ne \nu$.

\end{definition}

According to \emph{Definition~\ref{def:scc-basedig}}, $\mathcal{G}^*_k$ is a directed acyclic graph (DAG) by construction and captures the high-level dependency structure among SCCs and acyclic vertices in $G^*_k$. Stencil-Lifting applies Kosaraju's algorithm \cite{Sharir1981} to identify all SCCs within each invariant subgraph, thereby constructing the SCC-based invariant subgraph. For $G^*_k(V^*_k, E^*_k)$, the computational complexity of Kosaraju's algorithm is $O(|V^*_{k}|)$.

\etitle{Remark.} Certain vertices at higher loop levels may serve as direct predecessors of some vertices in $G^*_k(V^*_k, E^*_k)$. These vertices are referred to as \emph{external input vertices}. Although they are not part of the SCC-based subgraph, they act as sources of initialized summaries of input vertices required for the semantic summarization within $G^*_k(V^*_k, E^*_k)$.

\stitle{Divide-and-conquer mechanism.}
In $G^*_k(V^*_k, E^*_k)$, the semantic summarization follows a divide-and-conquer mechanism: \textit{decomposing the construction of a self-consistent summary configuration into a sequence of localized semantic extraction procedures over individual SCCs}, as shown in Figure~\ref{fig: Basic idea of our design}(c). 

This mechanism is enabled due to two structural properties of the SCC-based invariant subgraph. \emph{First}, the DAG structure of inter-SCC dependencies allows us to impose a topological order over all SCCs, ensuring that each SCC depends only on its predecessors. During the traversal of SCCs in this order, the \textit{self-consistent} summaries established for vertices in earlier SCCs could be directly used to determine the \textit{self-consistent} summaries for vertices in later SCCs. \emph{Second}, the computation of \textit{self-consistent} summaries within any given SCC is entirely independent of SCCs that follow it in the topological order.

\begin{minipage}{0.53\textwidth}
    \centering
    \scriptsize
    \begin{algorithm}[H]
    \caption{Hierarchical Recursive Lifting}
    \label{alg:recursive-semantic-lifting}
    \begin{algorithmic}[1]
    \Require A tensor $S$ storing summaries and invariant subgraph $G^*_k(V^*_k,E^*_k)$;
\Ensure the postcondition of the level-$k$ loop $L^{(k)}$;
\Function{HierarchicalRecursiveLifting}{$S$, $G^*_k$}
    \State $(\mathcal{G}^*_k, \Omega_k) \leftarrow$ \Call{BuildSCCGraphandSequence}{$G^*_k$}; {\color{blue} \Comment{Phase 1.}}
    \State \textbf{initialize} $S(v) \gets D_{\text{variable}}(v)$ for every input vertex $v$ of $G^*_k$;
    \ForAll{$SCC \in \Omega_k$ \textbf{in priority order}}
        \If{$SCC$ has a vertex $v$ with $O(v)=Loopcall$} {\color{blue}\Comment{Phase 2.}}
            \ForAll{$v \in SCC$ \textbf{with} $O(v)=Loopcall$}
                \State $G^*_{k-1} \!\gets$ \Call{LowerLevelInvariSubgraph}{$v$};
                {\color{red}\Comment{Descend.}}
                \State $\textit{post}\gets$\Call{HierarchicalRecursiveLifting}{$S$, $G^*_{k-1}$};
                \State \textbf{set} $S(v)$ with $\textit{post}$ and summaries of $v$'s predecessors;
            \EndFor
        \EndIf
        \State $S \gets$ \Call{IterativeExtractionWithinSCC}{${\sf SCC}$, $S$};
            {\color{blue}\Comment{Phase 3.}}
    \EndFor
    \State \Return $S(v^{*})$ where $v^{*}$ has $\varphi$ operation with output array as data;
\EndFunction

    \end{algorithmic}
    \end{algorithm}
    \vspace{0.1em}
\end{minipage}
\hfill
\begin{minipage}{0.43\textwidth}
    \centering
    \scriptsize
    \begin{algorithm}[H]
    \caption{Iterative Semantic Extraction}
    \label{alg:semantic-extraction-scc}
    \begin{algorithmic}[1]
\Require current SCC $\mathit{SCC}$ and tensor $S$ storing summaries;
\Ensure $S$ with \emph{self-consistent} summary $S(v)$ for $v \in SCC$;
\Function{IterativeExtractionWithinSCC}{$SCC$, $S$}
\State {\color{red} \textsc{VertexElimination}($SCC$,$S$); \Comment{Optional.}}
\State \textbf{select} a start vertex $v_{\text{start}}$; {\color{blue}\Comment{Step 1.}}
\State $S(v_{\text{start}}) \gets$ \textsc{InitializeSummary}($v_{\text{start}}$);
\State \textsc{ForwardSweep}($\mathit{SCC}$, $S$, $v_{\text{start}}$);
    {\color{blue}\Comment{Step 2.}}
\Repeat
    \State {\color{red} \textsc{EquivalenceChecking}($S(v_{\text{start}})$); \Comment{Optional.}}
    \State $S^{\text{general}} \gets$ \textsc{Generalize}($S(v_{\text{start}})$); {\color{blue}\Comment{Step 3.}}
    \State $S^{(t-1)} \gets$ \textsc{Shift}($S^{\text{general}}$, $t \gets t - 1$); {\color{blue}\Comment{Step 4.}}
    \State $S(v_{\text{start}}) \gets S^{(t-1)}$;
    \State \textsc{ForwardSweep}($\mathit{SCC}$, $S$, $v_{\text{start}}$);
    \State $S^{(t)} \gets S(v_{\text{start}})$;
\Until{$S^{(t)} == S^{\text{general}}$} {\color{blue}\Comment{Step 5.}}
\State \Return $S$;
\EndFunction
    \end{algorithmic}
    \end{algorithm}
    \vspace{0.1em}
\end{minipage}

\vspace{-1.1em}
\subsection{Hierarchical Recursive Lifting Algorithm}
Based on the basic idea, we propose a \emph{hierarchical recursive lifting algorithm} (Algorithm~\ref{alg:recursive-semantic-lifting}) that performs divide-and-conquer semantic summarization over the invariant subgraph $G^*_k(V^*_k, E^*_k)$ to derive the postcondition summary of loop $L^{(k)}$. For a stencil kernel with $K$ levels of nested loop, the overall postcondition (i.e., summary) is obtained by applying the algorithm with the top-level invariant subgraph $G^*_K(V^*_K, E^*_K)$ as input. The hierarchical recursive lifting proceeds as follows:

\textbf{\emph{Phase 1:}} \emph{Construct the SCC-based invariant subgraph of $G^*_k(V^*_k, E^*_k)$.} To enable the divide-and-conquer semantic summarization in $G^*_k(V^*_k, E^*_k)$, the SCC-based invariant subgraph $\mathcal{G}^*_k (\mathcal{V}^*_k, \mathcal{E}^*_k)$ is constructed using Kosaraju's algorithm. All SCCs are sorted into a sequence $\Omega_k$, such that if $(\omega,\nu) \in \mathcal{E}^*_k$, then $\omega$ precedes $\nu$ in $\Omega_k$. We retrieve the first SCC from $\Omega_k$ and denote it as $SCC_{\text{current}}$.

\textbf{\emph{Phase 2:}} \emph{Initialize summaries for vertices annotated with $Loopcall$ operations in $SCC_{\text{current}}$.} We check whether $SCC_{\text{current}}$ contains any vertex $v$ such that $O(v) = Loopcall$. If so, the algorithm recursively invokes itself on the corresponding lower-level invariant subgraph $G^*_{k-1}(V^*_{k-1}, E^*_{k-1})$ to compute the postcondition of loop $L^{(k-1)}$. The result is assigned to $S(v)$ following the summary rule associated with $v$. If no such vertex exists, the summary lifting proceeds to \emph{Phase 3}.

\textbf{\emph{Phase 3:}} \emph{Perform an iterative semantic extraction within $SCC_{\text{current}}$.}  
After initializing summaries for vertices whose operations are annotated as $Loopcall$, we perform an iterative semantic extraction to compute self-consistent summaries for all vertices in $SCC_{\text{current}}$ (details in Section \ref{subsection:self-consisten-summary-extraction}). Once convergence is reached, we check whether $\Omega_k$ is empty. If not, the next SCC is retrieved from $\Omega_k$, assigned to $SCC_{\text{current}}$, and the process returns to \emph{Phase 2}. If $\Omega_k$ is empty, return the summary of $v^{k}_{p_k}$, which represents the postcondition of $L^{(k)}$.

We now present the correctness and convergence guarantee of the recursive lifting process:

\begin{theorem}
\label{theorem: hierarchical recursive lifting}
Given the invariant graph for a stencil, the hierarchical recursive lifting algorithm possesses the finite termination property and yields the summary/postcondition of the stencil kernel.
\end{theorem}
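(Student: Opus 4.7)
The plan is to establish both finite termination and correctness by a nested induction: an outer induction on the loop depth $K$ paired with an inner argument that bounds the iteration count of the semantic extraction inside every SCC. Throughout, I would treat Theorems \ref{theorem: self-consistent} and \ref{theorem: recursive lifting} as black boxes providing the semantic bridge from \emph{self-consistent} summary configurations to valid postconditions.

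First I would handle correctness conditional on termination. The outer induction hypothesis is that, for every nested loop of depth strictly less than $K$, a call to \textsc{HierarchicalRecursiveLifting} on the corresponding invariant subgraph returns the postcondition of that loop's outermost level. Under this hypothesis, every $Loopcall$ vertex $v$ encountered in Phase 2 receives, via Rule 7, a summary equal to the true postcondition of $L^{(k-1)}$ after substituting the summaries of $v$'s predecessors for the corresponding dual vertices. Within Phase 3, since SCCs are processed in a topological order $\Omega_k$ of $\mathcal{G}^*_k$, any SCC being processed only reads summaries of predecessor SCCs whose self-consistent values are already fixed; I would then invoke the assumed correctness of \textsc{IterativeExtractionWithinSCC} to conclude that, after Phase 3 returns, the summaries inside the current SCC satisfy Definition \ref{definition: self-consistent configuration}. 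Combining this across all SCCs yields a globally self-consistent summary configuration over $G^*_k(V^*_k,E^*_k)$, so Theorem \ref{theorem: self-consistent} gives that $S(v^{k}_{p_k})$ equals the true postcondition of $L^{(k)}$. Iterating this argument across the hierarchy and invoking Theorem \ref{theorem: recursive lifting} yields the overall stencil postcondition as the returned value at level $K$.

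For finite termination, the outer recursion bottoms out at $K=1$, where no $Loopcall$ vertices exist and Phase 2 is vacuous. At each level, Phase 1 runs Kosaraju in $O(|V^*_k|)$ steps, and Phase 2/3 loop over the finite sequence $\Omega_k$, so the only nontrivial obligation is the termination of the repeat-until loop in Algorithm \ref{alg:semantic-extraction-scc}. My plan here is to define a well-founded measure $\mu$ on the space of candidate summaries for $v_{\text{start}}$, capturing the symbolic shape of the expression modulo the loop-index parameter $t$. The \textsc{Generalize} step canonicalizes this shape by abstracting $t$, \textsc{Shift} instantiates to $t-1$, and \textsc{ForwardSweep} re-propagates along the SCC, producing a new candidate whose structural depth is bounded by the longest simple cycle inside the SCC. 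Because each iteration either reproduces the previously generalized form (triggering the loop exit) or strictly refines the branch/template structure within the finite lattice of predicate shapes admitted by the summary syntax of Table \ref{tab:summary-syntax}, $\mu$ strictly decreases until a fixed point is reached, bounding the iteration count.

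The main obstacle I expect is making the measure $\mu$ rigorous and proving monotonicity of \textsc{Generalize}--\textsc{Shift}--\textsc{ForwardSweep} with respect to it. The summary syntax admits arbitrary nesting of branches and affine index arithmetic, so the naive space of summaries is infinite; the argument must exploit the fact that the \emph{Operation} attributes on the cycle are fixed by construction, that loop regularity (Definition \ref{definition: regular loop}) restricts the reachable symbolic forms to affine-indexed predicate templates whose arity is bounded by the stencil point count $m$, and that the optional \textsc{EquivalenceChecking} step collapses summaries that differ only up to index renaming. Once this structural bound is in place, termination of Algorithm \ref{alg:semantic-extraction-scc} follows, and the theorem is obtained by composing the inductive correctness argument with the termination bound above.
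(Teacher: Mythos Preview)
Your correctness argument is sound and matches the paper's in spirit: the paper simply cites Theorem~\ref{theorem: recursive lifting} to conclude that the self-consistent configuration produced by Algorithm~\ref{alg:recursive-semantic-lifting} yields the stencil postcondition, while you unpack that citation into an explicit induction on $K$. That elaboration is fine.

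Where you diverge is termination. The paper does \emph{not} build a well-founded measure $\mu$ on the summary syntax lattice; instead it factors the termination obligation out as the separate Theorem~\ref{theorem: finite termination property} and proves it by a direct structural argument. The paper observes that the number $N_e$ of branches in $S(v_{\text{start}})$ can only \emph{increase} across sweeps, then bounds $N_e$ above by the stencil point count $m$ via contradiction: an unbounded chain of inter-element dependencies in $B$ would force a single output element to depend on more than $m$ input elements, violating the $m$-point stencil definition. Once $N_e$ stabilises, Step~3's generalisation makes each $\tilde{E}_k(\mathbf{x})$ independent of $t$, and the validity conditions $\tilde{P}_k$ stabilise because $LB^k_i(t)$ and $UB^k_i(t)$ are pinned by the loop bounds $1\le t\le N_l$. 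The proof of Theorem~\ref{theorem: hierarchical recursive lifting} itself is then two lines: cite Theorem~\ref{theorem: finite termination property} for termination of each SCC, observe that the recursion depth is $K$, and cite Theorem~\ref{theorem: recursive lifting} for correctness.

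Your measure-based plan is not wrong in principle, but it is harder than necessary and has a concrete pitfall: you assert $\mu$ \emph{strictly decreases}, yet the natural size parameter $N_e$ moves in the opposite direction (it grows monotonically until it saturates at $\le m$). So any $\mu$ you build must encode ``distance to the $m$-bounded fixpoint'' rather than syntactic size, which essentially forces you back to the paper's argument. You also lean on the optional \textsc{EquivalenceChecking} step to collapse index-renamed summaries; since that step is optional in Algorithm~\ref{alg:semantic-extraction-scc}, the termination proof must go through without it. Replacing your lattice argument with the paper's direct $N_e\le m$ bound plus the boundedness of the affine validity-condition endpoints closes the gap cleanly.
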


\etitle{Remark.} The proof of Theorem \ref{theorem: hierarchical recursive lifting} is provided in Appendix \ref{Asection: Proofs 2}. This result ensures that, once the hierarchical recursive lifting completes, the derived summary can be directly regarded as the postcondition of the stencil kernel, without requiring further verification.

\subsection{Iterative Semantic Extraction}
\label{subsection:self-consisten-summary-extraction}

To construct the \emph{self-consistent} summaries of all vertices within each SCC, which is the core of \emph{Phase 3} in Algorithm~\ref{alg:recursive-semantic-lifting}, we propose an iterative semantic extraction strategy (Algorithm~\ref{alg:semantic-extraction-scc}).

First, for a trivial SCC, which contains only a single vertex, the self-consistent summary of the vertex can be directly obtained by applying its associated summary rule. For instance, in Figure \ref{fig: Basic idea of our design}(c), the first six SCCs are trivial, and their summaries are computed directly in this way.

Second, for a non-trivial SCC, Algorithm~\ref{alg:semantic-extraction-scc} aims to derive \emph{self-consistent} summaries for all vertices by applying their respective summary rules. In the presence of cycles, this is achieved through an iterative refinement (called a sweep below) until a fixed point is reached across the entire SCC.
The iterative process consists of five steps (Figure \ref{fig:regular-scc}):

\textbf{\emph{Step 1: Select and initialize the start vertex for sweep.}}
To begin the sweep, select a vertex  $v_{\text{start}}$ in the current SCC $SCC_{\text{current}}$ that has at least one predecessor outside $SCC_{\text{current}}$ with a known summary $\hat{S}$ (By topology order, the summary $\hat{S}$ of any such external predecessor has already been computed prior to the current SCC's processing). Preferably, select a vertex whose operation is a $\varphi$ function, as such vertices are guaranteed to have external predecessors. If no $\varphi$-vertex is available, select any other vertex meeting the above condition and initialize its summary similarly. Then, initialize the summary of $v_{\text{start}}$ as $S(v_{\text{start}}) = \hat{S}$.

\textbf{\emph{Step 2: Execute a forward sweep.}} 
Starting from $v_{\text{start}}$, traverse the current SCC along the direction of edges. For each visited vertex $v$, its predecessors fall into two categories: (1) vertices in other SCCs whose summaries have already been computed (by topological order), and (2) vertices within the current SCC that have been visited earlier in the current sweep.

Since all predecessors' summaries are  available when $v$ is visited, the summary rule of $v$ can be directly applied to compute its summary $S(v)$. The sweep propagates summary information forward through the SCC, eventually returning to $v_{\text{start}}$, whose summary is updated last.  

\etitle{Remark.} \emph{(Branch structure in the summary of $v_{\text{start}}$)}
After the sweep, the summary $S(v_{\text{start}})$ typically comprises multiple expressions, each guarded by a distinct validity condition. As illustrated in Figure~\ref{fig:regular-scc}, these guarded expressions originate from both external predecessors outside $SCC_{\text{current}}$ and internal predecessors within $SCC_{\text{current}}$. The overall branching structure is organized by the semantics of the $\varphi$ function. However, each guarded expression usually describes the update of an individual element in the output array. To make the summary applicable over a contiguous region of the output array rather than a single element, the next step generalizes these expressions.

\textbf{\emph{Step 3: Generalize the summary of the start vertex.}} Given that loop indices are treated as symbolic variables in the invariant subgraph (Observation 1), we generalize the summary of the start vertex to describe updates over array regions rather than individual elements.

Assume that the current summary of $v_{\text{start}}$ consists of $N_e + 1$ conditional branches. The first $N_e$ branches are expression–condition pairs $\{ (E_k(\mathbf{x}, t), P_k(\mathbf{x}, t)) \}_{k = 1}^{N_e}$, where each expression $E_k(\mathbf{x}, t)$ specifies an update to the array element at position $\mathbf{x}$ of $D_{\text{variable}}(v_{\text{start}})$ during iteration $t = I(v_{\text{start}})$ and $P_k(\mathbf{x}, t)$ denotes the validity condition under which this update applies. The final branch encodes the initialized summary of $v_{\text{start}}$, applied when none of the previous $P_k(\mathbf{x}, t)$ conditions hold. Here, $\mathbf{x} \in \mathbb{Z}^d$ is a $d$-dimensional array index, i.e., $\mathbf{x} = (x_1, x_2, \dots, x_d)$, and $t$ denotes the loop iteration variable. 
To generalize from a point-wise to a region-wise description, we perform a dataflow analysis over the array and introduce a predicate logic equation:
\begin{equation}
\label{equation: predicate logic equation}
\exists t_1 (0 \leq t_1 \wedge t_1 \leq t \wedge P_k(\mathbf{x},t_1) ) \wedge \neg \exists t_2 ( t_1 \leq t_2 \wedge t_2 \leq t \wedge P(\mathbf{x}, t_2)),
\end{equation}
where $P(\mathbf{x},t_2) = \vee^{N_e}_{k=1} P_k(\mathbf{x},t_2)$. Equation (\ref{equation: predicate logic equation}) characterizes all the array positions $\mathbf{x}$ of $D_{\text{variable}}(v_{\text{start}})$ that are last updated by $E_k$ before some iteration $t_1 \leq t$ and not overwritten by any other expression after iteration $t_1$.

In stencil computations, at a given loop level, array indices often depend linearly on the iteration variable, i.e., $x_{i} = a_{i}t + b_{i}$. Substituting this relation into Equation (\ref{equation: predicate logic equation}) and applying Skolemization \cite{Wintersteiger2010}, we derive lower and upper bounds ($LB^k_{i}(t)$ and $UB^{k}_{i}(t)$) for each $x_i$ as a function of the iteration variable $t$, leading to $\tilde{P}_k(\mathbf{x},t) =  \wedge^{d}_{i=1} (LB^k_{i}(t) \leq x_i \leq UB^{k}_{i}(t))$.
Next, we eliminate $t$ from each expression $E_k(\mathbf{x},t)$ by substituting $t=(x_i-b_i)/a_i$, yielding $t$-free generalized expression $\tilde{E}_k(\mathbf{x})$. With the generalized expression–condition pairs $\{\tilde{E}_k(\mathbf{x}), \tilde{P}_k(\mathbf{x}, t)) \}_{k = 1}^{N_e}$, the summary of the start vertex is rewritten into the following form that $S^{\text{general}}(v_{\text{start}}):{D}_{\text{variable}}(v_{\text{start}})$ where the element at each position $\mathbf{x}$ of the array ${D}_{\text{variable}}(v_{\text{start}})$ is given by:
\begin{displaymath}
{D}_{\text{variable}}(v_{\text{start}})[\mathbf{x}] =
\begin{cases}
\tilde{E}_1(\mathbf{x}), ~\hbox{if}~LB^1_{1}(t) \leq x_1 \leq UB^{1}_{1}(t) \wedge \cdots \wedge LB^1_{d}(t) \leq x_d \leq UB^{1}_{d}(t); \\
\tilde{E}_2(\mathbf{x}), ~\hbox{if}~LB^2_{1}(t) \leq x_1 \leq UB^2_{1}(t) \wedge \cdots \wedge LB^2_{d}(t) \leq x_d \leq UB^2_{d}(t); \\
\cdots \cdots; \\
\tilde{E}_{N_e}(\mathbf{x}), ~\hbox{if}~LB^{N_e}_{1}(t) \leq x_1 \leq UB^{N_e}_{1}(t) \wedge \cdots \wedge LB^{N_e}_{d}(t) \leq x_d \leq UB^{N_e}_{d}(t); \\
\hat{S}(\mathbf{x}), ~\hbox{else}.
\end{cases}
\end{displaymath}
This generalized summary describes the update over array regions rather than individual elements.

\textbf{\emph{Step 4: Perform a further sweep using the generalized summary.}} To advance  a \textit{self-consistent} summary configuration, a new sweep is initiated from $v_{\text{start}}$ using the generalized summary $S^{\text{general}}(v_{\text{start}})$ as input. Specially, all occurrences of the iteration variable $t$ in $S^{\text{general}}(v_{\text{start}})$ are replaced by $t-1$, producing a time shifted version denoted as $S^{(t - 1)}(v_{\text{start}})$. After using $S^{(t - 1)}(v_{\text{start}})$ to initialize $v_{\text{start}}$, we execute a full sweep over the SCC, as described in \emph{Step~2}, and obtain an updated summary $S^{(t)}(v_{\text{start}})$ at the end of the traversal.

\textbf{\emph{Step 5: Check for convergence.}} If the updated summary $S^{(t)}(v_{\text{start}})$ matches the previously generalized form $S^{general}(v_{\text{start}})$, then a fixed point has been reached. This indicates that the summary of $v_{\text{start}}$ is stable under further sweeps and, by extension, the summaries of all vertices in the SCC are \textit{self-consistent} under their respective summary rules. The iterative semantic extraction for this SCC therefore terminates. Otherwise, $S^{(t)}(v_{\text{start}})$ is used as the input to \emph{Step 3} for the next round of generalization, and \emph{Step 4} is repeated for further summary refinement.
 
The following theorem ensures that the semantic extraction process always terminates after a finite number of iterations.

\begin{figure}
    \centering
    \includegraphics[width=0.94\linewidth]{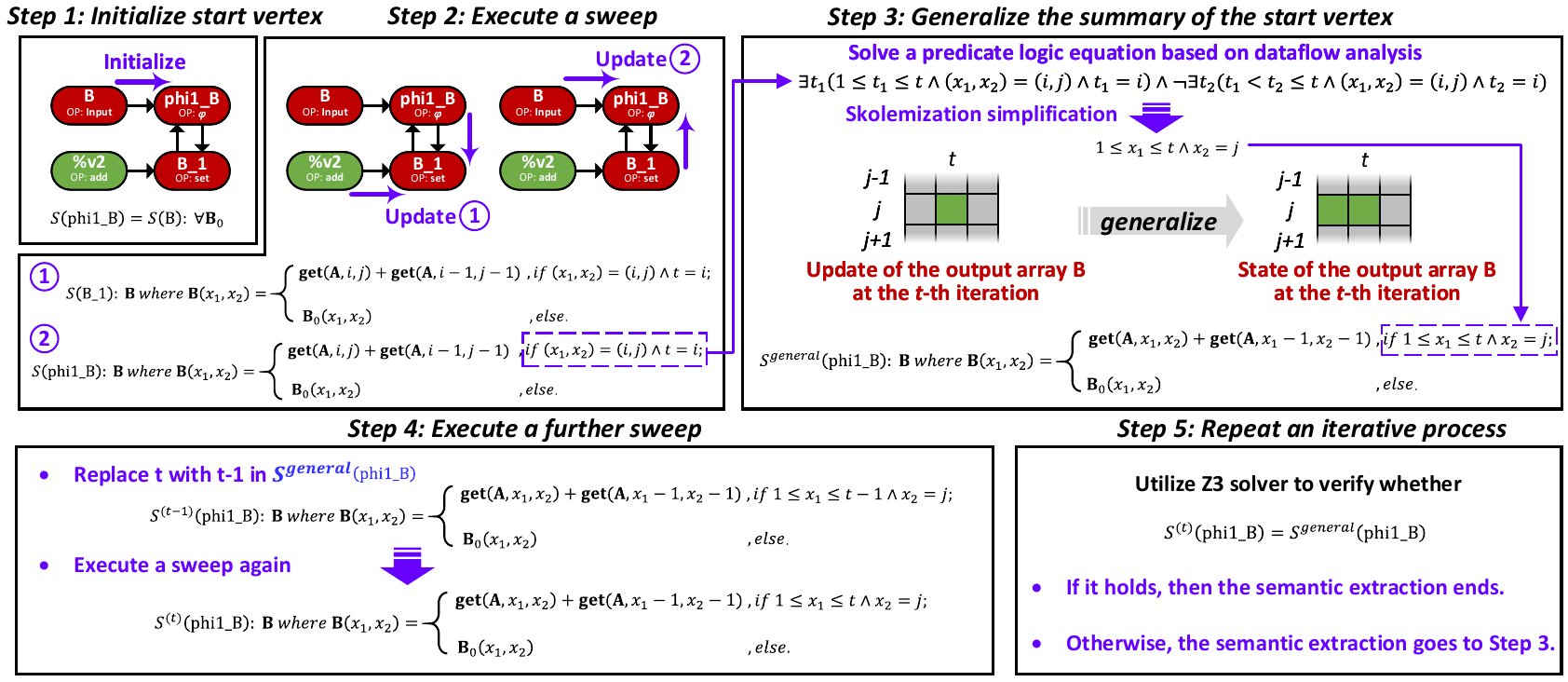}
    \vspace{-0.3cm}
    \caption{Iterative semantic extraction within SCC 7 of the SCC-based invariant graph shown in Figure \ref{fig: Basic idea of our design}. The detailed explanation about this example is in Appendix \ref{Asection: Hierarchical Recursive Lifting Algorithm}.}
    \label{fig:regular-scc}
    \vspace{-0.3cm}
\end{figure}

\etitle{Remark.} Stencil-Lifting utilizes the Z3 solver \cite{DeMoura2008} at \textit{Step 5} to verify whether $S^{(t)}(v_{\text{start}}) = S^{general}(v_{\text{start}})$ holds. 

\begin{theorem}[Finite Termination Property]
\label{theorem: finite termination property}
For any SCC in the invariant subgraph corresponding to a loop in a $d$-dimensional $m$-point stencil computation, there exists a positive integer $n_{iter}$ such that the iterative semantic extraction process converges after executing $n_{iter}$ sweeps.
\end{theorem}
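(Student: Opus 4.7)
The plan is to track the evolution of the summary $S^{(t)}(v_{\text{start}})$ across sweeps as an element of a well-structured space, and to show that this space admits only finitely many distinct inhabitants before a fixed point is forced. Concretely, I would first establish a normal form for the summary produced after \emph{Step 3}: a finite disjoint union of guarded branches $\{(\tilde{E}_k(\mathbf{x}), \tilde{P}_k(\mathbf{x},t))\}$ together with a default branch $\hat{S}$, where each $\tilde{P}_k$ is a conjunction of affine inequalities in $(x_1,\dots,x_d,t)$ and each $\tilde{E}_k$ is built from a bounded alphabet of operations (the seven summary rules). The core claim would then be that each sweep transforms one such normal form into another one from a finite universe.

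Next I would bound that universe using the structural assumptions on $d$-dimensional $m$-point stencils. Because the SCC corresponds to a single loop level with a fixed access pattern of exactly $m$ neighbor offsets and a bounded number of boundary $\varphi$-branches, one forward sweep on the generalized input can produce at most a bounded number of new expression atoms (at most $m$ substitutions per affected vertex) and at most a bounded number of new region predicates (arising from intersecting the finitely many stencil offsets with the finitely many validity regions $\tilde{P}_k$). After generalization via Equation~(\ref{equation: predicate logic equation}) and Skolemization, the affine structure is preserved: every new $LB^k_i(t)$ and $UB^k_i(t)$ remains affine in $t$ with coefficients drawn from the stencil's bounded offset set. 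Hence the set of possible normal forms reachable from the initial summary lives in a finite, explicitly bounded combinatorial class determined by $d$, $m$, and the number of boundary branches of the $\varphi$ operation at $v_{\text{start}}$.

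With the finite universe established, I would argue convergence in two steps. First, I would show that the sequence $S^{\text{general},(1)}, S^{\text{general},(2)}, \dots$ is \emph{monotone} in an appropriate refinement order on region partitions: once a region of the output array has been assigned a stable generalized expression, subsequent sweeps either leave that region's expression unchanged or further subdivide it along new affine boundaries introduced by the stencil offsets. This monotonicity follows from the fact that the time-shift by $t-1$ in \emph{Step 4} and the subsequent sweep produce $S^{(t)}(v_{\text{start}})$ by applying exactly one additional stencil update to the regions encoded in $S^{\text{general}}(v_{\text{start}})$; no previously stabilized region can later become destabilized, because the stencil has fixed offsets and regular loop semantics (Definition~\ref{definition: regular loop}). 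Second, because the universe is finite and the sequence is monotone, the process must reach a fixed point in at most a number of sweeps bounded by the size of the universe, at which moment the Z3 check in \emph{Step 5} succeeds and the algorithm halts.

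The main obstacle I expect is the second step: rigorously defining the refinement order on generalized summaries and verifying that the sweep operator is monotone with respect to it, given that the predicates $\tilde{P}_k$ are produced by a nontrivial Skolemization of Equation~(\ref{equation: predicate logic equation}). One technical subtlety is that the time-shifted sweep can produce syntactically different but semantically equivalent predicates, so monotonicity must be stated up to logical equivalence; this is where invoking Z3 in \emph{Step 5} is not merely a convenience but essential to reduce the semantic fixed-point check to a decidable syntactic one. Assuming this can be carried out (using the decidability of linear arithmetic over the integers for the predicate fragment of Table~\ref{tab:summary-syntax}), finite termination follows, and the required $n_{iter}$ can in principle be bounded by a polynomial in $d$, $m$, and the number of boundary branches.
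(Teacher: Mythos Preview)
Your plan is plausible but takes a genuinely different route from the paper's proof. The paper does not build a finite universe of normal forms and does not argue monotonicity in any refinement order. Instead it proceeds in three short, concrete steps: (i) the number $N_e$ of branches in $S(v_{\text{start}})$ cannot grow without bound, because an unbounded increase would produce a dependency chain $B(\mathbf{x}^{(1)})\to B(\mathbf{x}^{(2)})\to\cdots\to B(\mathbf{x}^{(m')})$ with $m'>m$, forcing a single output element to depend on more than $m$ input elements of $A$, in contradiction with the $m$-point stencil pattern of Equation~(\ref{equation: stencil computation pattern}); (ii) once $N_e$ stabilizes, the generalization in \emph{Step~3} renders each $\tilde{E}_k(\mathbf{x})$ independent of $t$ and therefore fixed thereafter; (iii) the remaining predicates $\tilde{P}_k$ stabilize because $t$ is confined to $[1,N_l]$, so the affine bounds $LB^k_i(t)$ and $UB^k_i(t)$ can change only finitely many times.

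What your approach buys is a more uniform, lattice-theoretic argument and, potentially, an explicit polynomial bound on $n_{iter}$; what it costs is precisely the monotonicity lemma you flag as the main obstacle. That lemma is real work: the Skolemization in \emph{Step~3} and the time shift in \emph{Step~4} can produce syntactically different predicates, and ruling out cycles of period greater than one requires a semantic ordering argument you have not yet supplied. The paper bypasses this entirely by using the $m$-point constraint as a direct counting bound on $N_e$; once $N_e$ is fixed, the $t$-elimination in \emph{Step~3} forces expressions to freeze, and cycling is impossible because only the predicates vary and they live in a finite affine family. If you want to repair or simplify your plan, I would replace the abstract finite-universe-plus-monotonicity strategy with the paper's contradiction argument bounding $N_e$ from the stencil definition; that single step does most of the work and removes the need for the refinement order altogether.
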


\etitle{Remark.} According to the proof of Theorem \ref{theorem: finite termination property} in Appendix \ref{Asection: Proofs 2}, the finite termination property of the iterative semantic extraction process relies on a fundamental feature of stencils: the output array is updated based on a fixed number of elements from the input array, as specified by Equation~(\ref{equation: stencil computation pattern}).

\subsection{Details in Our Algorithm for Complex Cases}

This section presents how the hierarchical recursive lifting algorithm handles complex cases commonly found in legacy code. These include: (1) loop with conditional branches; (2) loop with parallelization directives; (3) loop with intermediate buffers; (4) unrolled loop; and (5) tiled loop.

Notably, our algorithm is inherently capable of handling all five cases without modification. To further improve the performance of the lifting process, we design modular acceleration techniques for handling unrolled and tiled loops, i.e., equivalence checking and vertex elimination, which are optional in Algorithm \ref{alg:semantic-extraction-scc} and integrate seamlessly into the recursive lifting process. 

\subsubsection{ Loop with Condition Branches}
\label{section: Loop with Condition Branches}
Conditional branches frequently appear in stencil codes to distinguish boundary and interior regions within fused loops. These branches introduce control-flow divergence, complicating program analysis and semantic extraction.

\stitle{Algorithmic support for conditional control flow.} The hierarchical recursive lifting algorithm addresses this challenge through two key designs mentioned in Section \ref{section: Lifting Theory}. 

\emph{Design 1: Condition-Aware Summary Syntax.} Each vertex in the invariant subgraph is annotated with a summary that explicitly includes conditionals (Section~\ref{section: Summary Syntax}). This enables accurate encoding of region-specific computation semantics as a function of loop indices.

\emph{Design 2: Extended $\varphi$ function for Divergent Paths.} To represent value merges along diverging control flows, Stencil-Lifting generalizes the SSA-style $\varphi$ functions (Section~\ref{section: Invariant Graph and Summary Attribute}). The extended $\varphi$ captures data values from conditional branches, enabling the invariant subgraph to faithfully represent control-dependent behavior.

\stitle{Example.} Figure~\ref{fig:case-study}(a) shows the construction of an extended $\varphi$ function for a nested loop with a conditional. The conditional expression $cond$ is initialized as \texttt{True} at loop entry and updated based on the control predicates. At the join point, merging all its values yields an extended $\varphi$ function.

\subsubsection{ Loop with Parallelization Directives}
Parallelization directives are frequently used in stencil kernels to exploit thread-level parallelism for high performance.

\stitle{Handling loops with parallelization directives.} In typical stencil programs, these directives enable independent execution of iterations without introducing inter-thread dependencies such as reductions or shared data conflicts. Thus, they do not alter the loop's dataflow structure captured by the invariant graph. Stencil-Lifting handles such cases by stripping parallelization directives during program analysis and constructing the invariant graph based solely on the loop’s original semantics. The hierarchical lifting algorithm then proceeds unchanged.

\stitle{Example.} As shown in Figure \ref{fig:case-study}(b), the parallelization directives could be removed before the construction of the invariant graph. Since no inter-thread dependency exists, the original computation semantics are preserved, enabling correct summary extraction.

\subsubsection{ Loop with Intermediate Buffers.} Stencil kernels often introduce intermediate buffers to support pipelined computation. While this improves performance, such buffers induce indirect data dependencies that complicate semantic extraction. 

\stitle{Resolving indirect data dependencies.} In \emph{Phase 3} of Algorithm \ref{alg:recursive-semantic-lifting}, the iterative semantic extraction within SCCs addresses this issue (Section \ref{subsection:self-consisten-summary-extraction}). By repeatedly sweeping through vertices in an SCC, the semantic extraction process progressively unfolds indirect dependencies. As vertex summaries converge, the final computation semantics emerge without requiring special treatment for buffers.

\stitle{Example.} Figure~\ref{fig:case-study}(c) shows how indirect dependencies introduced by buffers are resolved through iterative semantic extraction (Figure~\ref{fig:irreg-scc} in Appendix \ref{Asection: Hierarchical Recursive Lifting Algorithm} presents the detailed SCC-based invariant subgraph for the example). After the first sweep, the summary of $phi\_B$ depends on $B\_2$, which still contains an unresolved buffer-based dependency. In the second sweep, updated summaries propagate through the SCC, unfolding the indirect dependency. When the traversal revisits $phi\_B$, the dependency is resolved. A final sweep confirms convergence, yielding dependency-free summaries for all the vertices.

\subsubsection{ Unrolled Loop.}
\label{section: Unrolled Loops} 
Loop unrolling expands multiple iterations into explicit statements, improving instruction-level parallelism but increasing register pressure and structural redundancy.

\stitle{Handling redundancy introduced by loop unrolling.} Loop unrolling causes each invariant subgraph to contain multiple vertices that would have belonged to separate subgraphs in the original unrolled loop. This fact just introduces redundant branches in the vertex summaries and the final postcondition, while it does not interfere with the hierarchical recursive lifting process. Therefore, our algorithm remains applicable without any modification.

\stitle{Accelerating the semantic extraction via an equivalence checking.}
To reduce redundant branches in the vertex summaries and speed up the summary generalization step (\emph{Step 3}) in iterative semantic extraction, \emph{equivalence check} is applied before generalizing summaries in Algorithm~\ref{alg:semantic-extraction-scc}. Specifically, for each expression \( E_k \), we modify its position vector \( \mathbf{x} \) (replacing \( x_i \) with \( x_i \pm 1 \)) and check whether the resulting expression is equivalent to another expression \( E_{k'} \). If so, they are considered equivalent and merged. This reduces redundancy and improves convergence.

\stitle{Example.}
Figure~\ref{fig:case-study}(d) shows that in the summary of \( phi\_B \), replacing \( i \) with \( i+1 \) in \( E_1 \) yields an expression equivalent to \( E_2 \); likewise, changing $i$ in \( E_2 \) yields an expression equivalent to \( E_3 \). All three expressions are thus merged to eliminate redundancy in the summary.

\subsubsection{ Tiled Loop}
Loop tiling improves cache efficiency by restructuring a loop into nested inter-tile and intra-tile loops, enabling better spatial locality in memory access.

\stitle{Handling tiled loops.}
Tiling increases loop-nesting depth, adding extra levels to the invariant graph. These levels may update along the same spatial dimension. As the summary generalization (\emph{Step~3}) in semantic extraction merges semantically equivalent computation expressions across levels, the recursive lifting algorithm naturally handles tiled loops, requiring no special treatment.

\stitle{Vertex elimination to improve efficiency.}
To accelerate the summary lifting for tiled loops, we introduce vertex elimination. If the loop index at the current level matches the index used in the validity condition of a lower-level postcondition (i.e., both refer to the same output array dimension), we eliminate the vertices with $\varphi$ operations at the current level. Instead of extracting semantics within the SCC containing such vertices, we propagate the lower-level postcondition upward, adjusting the index accordingly.

\stitle{Example.} As shown in Figure \ref{fig:case-study}(e), when the algorithm reaches the Level-2 invariant subgraph, the loop index \( ti \) matches the index variable in the Level-1 postcondition, with both modifying array \( B \) along the same dimension. This indicates that Levels 1 and 2 represent the intra-tile and inter-tile loops of a tiled structure. To accelerate lifting, we bypass semantic extraction within the non-trivial SCC formed by $L1$ and $phi\_B$. Specifically, the vertex $phi\_B$ is eliminated, and vertex $B$ is directly linked to $L1$, forming a new invariant subgraph. According to the $Loopcall$ summary rule, the Level-1 postcondition is modified using the loop bounds and stride of \( ti \), which enables direct derivation of a self-consistent summary for $L1$.

\begin{figure}
    \centering
    \includegraphics[width=0.95\linewidth]{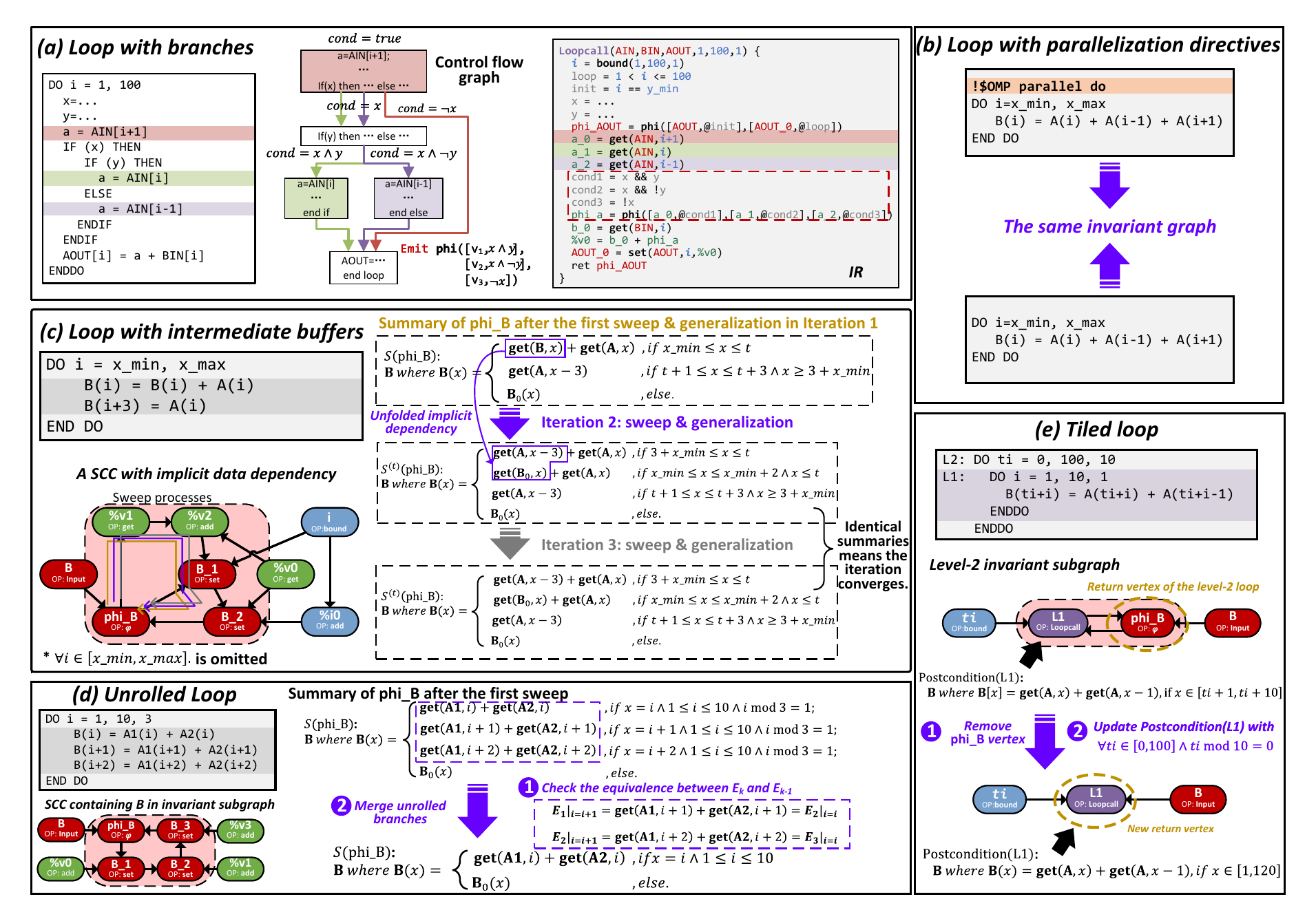}
    \vspace{-0.3cm}
    \caption{Details in Stencil-Lifting for complex cases.}
    \label{fig:case-study}
    \vspace{-0.4cm}
\end{figure}

\subsection{Algorithm Completeness}

From the discussion above, the hierarchical recursive lifting algorithm could derive the stencil summary under two conditions: (1) the successful construction of SMG and the corresponding invariant graph for the target stencil code, and (2) the successful traversal and summarization of the hierarchical recursive lifting process over all levels of the invariant graph. These two conditions guide our analysis of the algorithm completeness.

\stitle{From a stencil code to its invariant graph.}
As discussed in Section~\ref{section: Formalization of Loops in Stencil Code}, Stencil-Lifting focuses on stencil codes implemented as $K$-level nested loops, as this structure is crucial for abstracting the repetitive computation patterns inherent in stencil computations. Although a stencil kernel could, in principle, be implemented using an arbitrary mesh-point update loop, such implementations result in random memory access and computation update across mesh dimensions, thereby making it difficult to extract and summarize invariant computation patterns. In contrast, a $K$-level nested loop naturally gives rise to a $K$-level SMG, which hierarchically captures data dependencies at each loop level (Section~\ref{section: Multilevel Data Dependency Graph}). This hierarchical structure enables the identification and abstraction of invariant computation patterns at each loop level. As a result, repetitive computation behaviors within nested loops can be systematically functionalized in a level-by-level manner (Observation~2), forming the foundation of the recursive lifting algorithm.

Furthermore, deriving invariant subgraphs $G^*_k(V^*_k, E^*_k)$ from SMG is equivalent to abstracting and representing the invariant computation behavior of each loop $L^{(k)}$. While an SMG can be directly constructed from a nested loop, invariant subgraphs may not exist if the loop lacks regularity or consistent computation patterns. Fortunately, stencil kernels inherently exhibit such invariance (Observation 1) and are typically implemented with regular loops, where the same statements are executed repeatedly with minimal branching. Section~\ref{section: Invariant Graph and Summary Attribute} formalizes the notion of regular loops (Definition~\ref{definition: regular loop}) and describes how invariant subgraphs are constructed from them, enabling the formation of an invariant graph. 

By summarizing the entire derivation, we obtain the following result.

\begin{lemma}
\label{lemma: the establishment of SMG and IG}
Assume that a stencil kernel is implemented as a $K$-level nested loop $L$ (Definition \ref{definition: Nested Loop}), and each loop $L^{(k)}$ at level-$k$ is regular (as Definition \ref{definition: regular loop}). Then, a $K$-level SMG can be constructed for the stencil kernel, and a corresponding $K$-level invariant graph can be derived by abstracting the invariant computation pattern at each loop level.
\end{lemma}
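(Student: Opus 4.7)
The plan is to decompose the lemma into its two constructive claims and verify each by direct appeal to the definitions already laid down in Section~\ref{subsection: Graph-Based Abstraction}. Concretely, I would show (i) that the three-step SMG construction is well-defined on every $K$-level nested loop $L$ in the sense of Definition~\ref{definition: Nested Loop}, and (ii) that under the regularity hypothesis of Definition~\ref{definition: regular loop}, an invariant subgraph $G^*_k$ exists at each level and can be stitched into a global invariant graph $G^*$. Both parts are essentially existence claims, so the proof is constructive: I exhibit the object and check that it satisfies the formal requirements.

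For part (i), I would argue by induction on the nesting depth $K$. The base case $K=1$ is a single (non-nested) loop, whose body consists only of statements drawn from the four admissible classes (assignment, arithmetic, conditional, function call); for each statement, SSA assigns a unique variable instance per iteration, so \emph{Step~1} yields a well-defined vertex set $V_1$, \emph{Step~2} yields $E_1$ from the read/write sets of each statement, and \emph{Step~3} assigns the four attributes $(D,O,\ell,\mathcal L)$ in a deterministic way. For the inductive step, each inner $(K{-}1)$-level nest has, by hypothesis, an SMG $G_{k-1}$; at level $K$, every call site of an inner loop is modeled by a vertex $v$ with $O(v)=\mathit{Loopcall}$ whose predecessors are exactly the variable instances passed as actual arguments, which is well-defined since the inner SMG already fixes the formal parameter vertices. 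Restoring inter-level edges between $\mathit{Loopcall}$ vertices and their dual formal parameter vertices gives the full SMG. The argument here is mechanical; the only subtlety is ensuring that the SSA renaming is consistent across loop invocations, which is handled by the $C(v)$ component of the loop attribute $\mathcal L(v)$.

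For part (ii), I would fix a level $k$ and invoke Definition~\ref{definition: regular loop}: there exists a common statement set $ST^*_k$ executed by all but a negligible fraction of iterations. Applying the SMG-construction procedure to $ST^*_k$ in isolation produces a candidate subgraph $G^*_k=(V^*_k,E^*_k)$. I then need to verify the three conditions of the invariant-subgraph definition. Conditions (1) and (2) reduce to exhibiting, for each typical slice $G_k^{(m,n)}$ with $ST_k^{(m,n)}=ST^*_k$, the bijection $f_k^{(m,n)}$ that matches SSA instances of the same program variable across iteration contexts; this map is forced by the common statement structure and preserves edges because both graphs are built from the same read/write relations. Condition (3) is where the regularity bound $\mu_k\ll\chi_k$ is essential: for atypical slices, the deviating statements are few and localized, so the extra control paths can be absorbed by extending the $\varphi$ function $O(v^*)$ at the loop-condition vertex to merge these finitely many divergent branches without altering the vertex set. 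Finally, taking the union $\bigcup_k G^*_k$ and reattaching the inter-level edges $(f_k^{(m,n)}(u),f_{k'}^{(m',n')}(v))$ recovered from the SMG produces the invariant graph $G^*$.

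The main obstacle I expect is condition (3) of the invariant subgraph definition: rigorously justifying that a single extended $\varphi$ function can faithfully encode all atypical branches without distorting the semantics of the typical ones. The subtlety is that $\mu_k$ and $r_k$ are only required to be small, not zero, so one must show the merged control flow still captures the correct conditional guards for all $(m,n)$, which in turn relies on the branches being expressible in the predicate syntax of Table~\ref{tab:summary-syntax}. I would address this by noting that in stencil kernels the atypical iterations correspond to boundary cases with index-based guards (e.g.\ $n=1$ or $n=N_k$), which are exactly the conditionals the extended $\varphi$ is designed to absorb via the $\mathit{indexCompare}$ construct. The remainder of the proof is bookkeeping: verifying that inter-level edges are well-typed after the bijections are applied, and that the resulting $G^*$ satisfies the global definition of an invariant graph.
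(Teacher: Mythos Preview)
Your proposal is correct and follows the same two-part decomposition as the paper: (i) the $K$-level nested-loop hypothesis guarantees a $K$-level SMG via the construction of Section~\ref{section: Multilevel Data Dependency Graph}, and (ii) the regularity hypothesis guarantees the existence of an invariant subgraph at each level via the construction of Section~\ref{section: Invariant Graph and Summary Attribute}. The paper's own proof is little more than a recapitulation of those two sections, essentially asserting that the constructions already given there succeed; it does not carry out an induction on $K$, nor does it explicitly verify the three conditions of the invariant-subgraph definition or isolate condition~(3) as the delicate point.

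Your version is therefore more detailed and more honest about where the work lies. In particular, your identification of the extended $\varphi$-function as the crux (absorbing the finitely many atypical branches under the $\mu_k$, $r_k$ smallness bounds) is exactly the content the paper gestures at in the two-step construction preceding the invariant-subgraph definition but never isolates as the nontrivial step in the proof itself. The induction on $K$ for part~(i) is a reasonable way to make the SMG construction precise, though the paper treats it as self-evident from the three-step recipe. Either presentation is acceptable; yours would simply be a fuller write-up of what the paper leaves as a summary.
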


\etitle{Remark.} Lemma \ref{lemma: the establishment of SMG and IG} guarantees the existence and constructibility of an invariant graph for stencil kernels implemented as regular nested loops.

\stitle{Hierarchical recursive lifting over the invariant graph.}
In Section~\ref{section: Summary Lifting}, the hierarchical recursive lifting algorithm is developed under the key precondition that the invariant graph of the stencil kernel can be successfully constructed. Theorem~\ref{theorem: hierarchical recursive lifting} reveals that, given the invariant graph, the algorithm derives a sound summary that reflects the stencil's computation behavior.

\begin{lemma}
\label{lemma: soundness of the hierarchical recursive lifting}
For a stencil kernel, if its invariant graph is can be successfully constructed, then the hierarchical recursive lifting algorithm can accurately derive its summary.
\end{lemma}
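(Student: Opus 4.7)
The plan is to derive Lemma~\ref{lemma: soundness of the hierarchical recursive lifting} as a direct consequence of Theorem~\ref{theorem: hierarchical recursive lifting}, whose conclusion already subsumes both the termination and the accuracy asserted here. The hypothesis of the lemma (the existence of the invariant graph $G^*(V^*,E^*)$) is exactly the input requirement of Algorithm~\ref{alg:recursive-semantic-lifting}, so the proof reduces to unpacking how the algorithm's correctness chains together the supporting results in Section~\ref{section: Lifting Theory} and Section~\ref{subsection:self-consisten-summary-extraction}.

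I would proceed by induction on the loop-nesting depth $K$. For $K=1$, no $\textit{Loopcall}$ vertex occurs, so Algorithm~\ref{alg:recursive-semantic-lifting} simply processes the SCCs of $G^*_1$ in topological order by invoking Algorithm~\ref{alg:semantic-extraction-scc}. Theorem~\ref{theorem: finite termination property} gives finite termination per SCC, and the summary rules are satisfied at each vertex upon convergence; combined with the fact that the summaries of vertices in earlier SCCs are fixed when a later SCC is handled, this yields a self-consistent summary configuration on the whole of $G^*_1$. Theorem~\ref{theorem: self-consistent} then identifies the summary of $v^1_{p_1}$ with the correct postcondition of $L^{(1)}$. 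For the inductive step, every $\textit{Loopcall}$ vertex $v^K_{q_K}$ in $G^*_K$ triggers a recursive call on $G^*_{K-1}$; the induction hypothesis yields the terminating and accurate postcondition of $L^{(K-1)}$, which Rule~7 installs as $S^*(v^K_{q_K})$ in exactly the form required by the hypothesis of Theorem~\ref{theorem: recursive lifting}. Re-running the SCC-wise argument at level $K$ produces a self-consistent configuration, and Theorems~\ref{theorem: self-consistent} and~\ref{theorem: recursive lifting} together certify the returned summary as the postcondition of the whole stencil kernel.

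The principal obstacle I foresee is the composition step that lifts per-SCC self-consistency into global self-consistency over the entire invariant subgraph. This demands a careful bookkeeping argument establishing two points: (i) once Algorithm~\ref{alg:semantic-extraction-scc} has fixed the summaries of an SCC, those summaries remain valid external inputs as later SCCs in the topological order are processed, so that no earlier summary rule is ever broken retroactively; and (ii) the installation of a lower-level postcondition into a $\textit{Loopcall}$-vertex summary via Rule~7 yields precisely the identity $S^*(v^k_{q_k}) = S^*(v^{k-1}_{p_{k-1}})$ under the index alignment required by the hypothesis of Theorem~\ref{theorem: recursive lifting}, including the correct matching of iteration and call indices. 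Once these invariants are recorded, the lemma drops out as a clean corollary of Theorem~\ref{theorem: hierarchical recursive lifting}, since the accuracy conclusion reads off directly from the correctness part of that theorem.
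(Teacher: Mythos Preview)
Your proposal is correct and matches the paper's approach: the paper treats Lemma~\ref{lemma: soundness of the hierarchical recursive lifting} as a direct corollary of Theorem~\ref{theorem: hierarchical recursive lifting}, exactly as your opening sentence states. Your subsequent induction on $K$ and the bookkeeping about SCC-wise composition go well beyond what the paper records here---that material is closer to (and in places more detailed than) the paper's own proof of Theorem~\ref{theorem: hierarchical recursive lifting} itself, so for the lemma alone you need only the first paragraph.
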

\etitle{Remark.} Lemma \ref{lemma: soundness of the hierarchical recursive lifting} is a direct corollary of Theorem \ref{theorem: hierarchical recursive lifting}.

\stitle{Completeness.} 
The completeness of the hierarchical recursive lifting algorithm follows directly from the preceding two lemmas.

\begin{theorem}[Algorithm Completeness]
\label{theorem: algorithm completeness}
Let a stencil kernel be implemented as a $K$-level nested loop $L$ (Definition \ref{definition: Nested Loop}), where each loop $L^{(k)}$ at level-$k$ is regular (as defined in Definition \ref{definition: regular loop}). Then, the hierarchical recursive lifting algorithm is complete: it derives a sound summary of the stencil kernel.
\end{theorem}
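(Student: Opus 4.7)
The plan is to establish the Algorithm Completeness theorem by composing Lemma~\ref{lemma: the establishment of SMG and IG} and Lemma~\ref{lemma: soundness of the hierarchical recursive lifting}, treating the completeness claim as a two-stage chain: (i) the invariant graph exists and is constructible from the input kernel, and (ii) the algorithm, when fed such a graph, returns a sound summary. Since both stages have already been carried out in the paper, the overarching argument is one of composition rather than new technical development.

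First, I would invoke Lemma~\ref{lemma: the establishment of SMG and IG} under the hypotheses of the theorem. Specifically, since the stencil kernel is written as a $K$-level nested loop satisfying Definition~\ref{definition: Nested Loop}, and each $L^{(k)}$ is regular in the sense of Definition~\ref{definition: regular loop}, the lemma guarantees the existence of a $K$-level SMG and, via projection along iteration indices together with the $\varphi$-extension for the exceptional iterations, a corresponding $K$-level invariant graph $G^*(V^*, E^*)$. I would emphasize here that the regularity hypothesis is precisely what is required for the invariant subgraph of each $L^{(k)}$ to exist as a well-defined structural projection whose bijective mappings $f^{(m,n)}_k$ cover all but a negligible fraction of loop slices, with the residual deviations absorbed into the extended $\varphi$ vertex.

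Second, I would apply Lemma~\ref{lemma: soundness of the hierarchical recursive lifting} to the invariant graph constructed in the previous step. That lemma, which is itself a direct corollary of Theorem~\ref{theorem: hierarchical recursive lifting} (asserting both finite termination and correctness of the recursive process), ensures that invoking the algorithm on $G^*_K(V^*_K, E^*_K)$ yields the postcondition of the outermost loop $L^{(K)}$. By Theorem~\ref{theorem: recursive lifting}, this postcondition coincides with the summary of the entire stencil kernel. Chaining the two stages gives the claimed completeness.

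The chief subtlety in this composition, and hence the main obstacle worth flagging, lies in confirming that the invariant graph produced by Lemma~\ref{lemma: the establishment of SMG and IG} satisfies the structural preconditions implicitly required by the algorithm of Section~\ref{section: Summary Lifting}: in particular, that every $\textit{Loopcall}$ vertex $v^k_{q_k}$ has its dual formal-parameter vertices available in the lower-level invariant subgraph (so that the Rule~7 substitution is well-defined), and that every non-trivial SCC admits a start vertex with at least one externally resolved predecessor summary (so that Step~1 of Algorithm~\ref{alg:semantic-extraction-scc} is applicable). Both properties follow from the SSA-based construction in Section~\ref{subsection: Graph-Based Abstraction} together with the topological ordering of SCCs, but they deserve an explicit check to close the gap between \emph{invariant graph exists} and \emph{algorithm is applicable}; once verified, the completeness conclusion is immediate.
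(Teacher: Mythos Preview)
Your proposal is correct and mirrors the paper's own argument exactly: the paper states that completeness follows directly from combining Lemma~\ref{lemma: the establishment of SMG and IG} and Lemma~\ref{lemma: soundness of the hierarchical recursive lifting}, which is precisely the two-stage chain you outline. The additional structural checks you flag (well-definedness of Rule~7 substitutions and existence of a valid start vertex in every non-trivial SCC) go beyond what the paper makes explicit, but they are consistent with the construction and do not alter the approach.
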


\etitle{Remark.} Although not all stencil codes are guaranteed to follow a regular nested-loop structure, we observe that stencil kernels in standard benchmarks, prior work, and practical applications typically exhibit such regularity, even in the presence of non-trivial optimizations.

\section{Evaluation}
\label{section: Evaluation}

\subsection{Experimental Setup}

\stitle{Platforms.} The experiments are conducted on a server that consists of two Intel Xeon Gold 6226R CPUs (total 2*16 cores, hyper-threading on), 256GB RAM, and eight NVIDIA Tesla V100 GPUs. The software environment of the server is configured with CUDA v11.7 and Ubuntu Linux 20.04.5 LTS (kernel version 5.15.0-48-generic, GCC 9.4.0, GNU Fortran 9.4.0, LLVM 19.1.5, Halide 19.0.0). Halide code is autotuned to find suitable schedules using the OpenTuner framework \cite{Ansel2014}. In the experiments, STNG \cite{Kamil2016} and Dexter \cite{Ahmad2019} are used as the baselines, which lift stencils by CEGIS. The evaluation of Stencil-Lifting is on the following benchmarks.

\stitle{Two stencil benchmark suites.} We use two benchmark suites: \emph{CloverLeaf} and \emph{StencilMark}. \emph{CloverLeaf} is a standard benchmark \cite{mallinson2013} that consists of multiple Lagrangian-Eulerian hydrodynamics codes using an explicit second-order method on a Cartesian grid. As a part of the Mantevo mini-apps project \cite{Crozier2009}, some codes aim to solve the compressible Euler equations on a 2D staggered grid. \emph{StencilMark} (SM) is a set of stencil microbenchmarks \cite{Kamil2013} for performance and compiler experimentation. We choose three 3D kernels in the suite, whose Fortran codes are open source.

\stitle{Four scientific applications.} We select \emph{WRF}, \emph{NAS MG}, \emph{Heat} and \emph{MagEmul} as application benchmarks. The  \emph{WRF} model \cite{WRF2022} is a state-of-the-art mesoscale numerical weather prediction system consisting of multiple representative stencil Fortran codes. For \emph{WRF} model, we chose eighteen complicated kernels that contain at least two non-trivial optimizations. \emph{NAS MG} is a well-known benchmark \cite{Bailey1991} that applies the V-cycle multigrid method to solve a discrete Poisson equation in 3D. \emph{Heat} is a popular and complex program \cite{epperson2013introduction} drawn from StencilProbe \cite{StencilProbe2025}, which models thermal radiation and diffusion processes. Besides public benchmarks, we have also applied Stencil-Lifting to \emph{MagEmul} \cite{LiHanyu2014}, an electromagnetic field emulation program written in Fortran. The stencil kernels in \emph{MagEmul}, implementing Godunov methods, are more complicated, featuring imperfect loop nesting and branches in loop bodies.

\begin{table}[]
    \scriptsize 
        \begin{center}
        \renewcommand\arraystretch{1.25}
        \caption{Comparison between STNG and Stencil-Lifting. For all the kernels, two common optimization (Opt.) techniques are applied: parallelization directives (PD) for loops and the utilization of intermediate buffers (IB). The mark ``--'' indicates that the corresponding kernel is not optimized.}
        \vspace{-1em}
        \resizebox{\linewidth}{!}{
        \begin{tabular}{c|c|c|c|c|c|c|c|c|c|c|c|c|c}
        \toprule[1.2pt]
            \textbf{\rotatebox[origin=c]{90}{Benchmark}} & 
            \textbf{Kernel} & 
            \textbf{\begin{tabular}[c]{@{}c@{}}Fortran\\ Time\\ (ms)\end{tabular}} & 
            \textbf{\begin{tabular}[c]{@{}c@{}}Halide \\CPU \\Time \\(STNG) \\(ms)\end{tabular}} & 
            \textbf{\begin{tabular}[c]{@{}c@{}}Halide \\CPU \\Time \\(Stencil- \\ Lifting) \\(ms)\end{tabular}} & 
            \textbf{\begin{tabular}[c]{@{}c@{}}Halide\\CPU\\Time\\Speedup\end{tabular}} & 
            \textbf{\begin{tabular}[c]{@{}c@{}}Halide \\GPU\\Time \\ (STNG)\\(ms)\end{tabular}} &
            \textbf{\begin{tabular}[c]{@{}c@{}}Halide \\GPU \\Time \\ (Stencil- \\ Lifting) \\(ms)\end{tabular}} &
            \textbf{\begin{tabular}[c]{@{}c@{}}Halide\\GPU\\Time\\Speedup\end{tabular}} &
            \textbf{\begin{tabular}[c]{@{}c@{}}Lifting\\ Time \\ of \\ STNG  \\(s)\end{tabular}} &
            \textbf{\begin{tabular}[c]{@{}c@{}}Lifting\\ Time of \\ Stencil-\\ Lifting \\ (s)\end{tabular}} &
            \textbf{\begin{tabular}[c]{@{}c@{}}Stencil-\\ Lifting \\ Speedup \\ over\\ STNG\end{tabular}} &
            \textbf{\begin{tabular}[c]{@{}c@{}}Dimen.\\-Points\\-Outputs \end{tabular}} &
            \textbf{\begin{tabular}[c]{@{}c@{}}Opt.\end{tabular}} \\
        \hline
            \multirow{40}{*}{\rotatebox{90}{\textbf{Cloverleaf}}} & akl81 & 1792 & 240.86 & 185.35 & 1.30 & 164.71 & 131.95 & 1.25 & 14944 & 9.40 & 1590.13 & 2d-4p-1o &  PD,IB\\
            \cline{2-14} & akl83 & 3553.40 & 777.55 & 256.84 & 3.03 & 429.67 & 165.68 & 2.59 & 683 & 5.56 & 122.82 & 2d-4p-1o & PD\\
            \cline{2-14} & akl84 & 2798.40 & 620.49 & 272.30 & 2.28 & 355.13 & 51.57 & 6.89 & 631 & 5.59 & 112.9 & 2d-4p-1o & PD\\
            \cline{2-14} & akl85 & 2484.20 & 613.38 & 251.40 & 2.44 & 362.66 & 46.01 & 7.88 & 662 & 7.72 & 85.73 & 2d-4p-1o & PD \\
            \cline{2-14} & akl86 & 2639.60 & 653.37 & 234.03 & 2.79 & 439.93 & 45.93 & 9.58 & 919 & 9.36 & 98.24 & 2d-4p-1o & PD \\
            \cline{2-14} & ackl95 & 2591.20 & 618.42 & 308.26 & 2.01 & 346.88 & 46.19 & 7.51 & 1088 & 5.77 & 188.63 & 2d-4p-2o & PD\\
            \cline{2-14} & amkl100 & 2970.80 & 773.65 & 285.12 & 2.71 & 442.08 & 52.41 & 8.44 & 1512 & 6.53 & 231.55 & 2d-2p-2o &PD \\
            \cline{2-14} & amkl101 & 1283.80 & 352.69 & 128.43 & 2.75 & 235.99 & 50.14 & 4.71 & 1273 & 3.99 & 319.05 & 2d-4p-1o & PD\\
            \cline{2-14} & amkl103 & 1239.87 & 362.54 & 207.13 & 1.75 & 183.14 & 28.43 & 6.44 & 176 & 3.25 & 54.14 & 2d-5p-1o &PD \\
            \cline{2-14} & amkl105 & 1659.60 & 492.46 & 126.95 & 3.88 & 321.01 & 67.04 & 4.79 & 707 & 4.10 & 172.65 & 2d-4p-1o  & PD\\
            \cline{2-14} & amkl107 & 1028.65 & 260.42 & 231.56 & 1.12 & 118.10 & 25.02 & 4.72 & 133 & 5.520 & 24.1 & 2d-5p-1o  &PD \\
            \cline{2-14} & amkl97 & 3294 & 786.16 & 374.43 & 2.10 & 471.92 & 123.50 & 3.82 & 4099 & 8.49 & 483.03 & 2d-3p-2o & PD \\
            \cline{2-14} & amkl98 & 3243.20 & 599.48 & 323.95 & 1.85 & 455.51 & 122.32 & 3.72 & 4191 & 8.43 & 496.98 & 2d-3p-2o & PD\\
            \cline{2-14} & amkl99 & 3001.40 & 723.23 & 327.83 & 2.21 & 403.41 & 53.13 & 7.59 & 1736 & 6.58 & 263.99 & 2d-2p-2o & PD\\
            \cline{2-14} & fckl89 & 2681.80 & 564.59 & 213.55 & 2.64 & 283.19 & 152.18 & 1.86 & 425 & 4.62 & 92.05 & 2d-2p-1o & PD\\
            \cline{2-14} & fckl90 & 2694.40 & 758.99 & 182.70 & 4.15 & 408.24 & 60.71 & 6.72 & 131 & 4.76 & 27.5 & 2d-2p-1o & PD\\
            \cline{2-14} & gckl77 & 630 & 249.01 & 73.60 & 3.38 & 96.48 & 29.24 & 3.30 & 7 & 3.53 & 1.98 & 2d-1p-1o &-- \\
            \cline{2-14} & gckl78 & 658 & 180.27 & 78.90 & 2.28 & 116.46 & 25.60 & 4.55 & 8 & 3.37 & 2.37 & 2d-1p-1o  & --\\
            \cline{2-14} & gckl79 & 636.60 & 168.86 & 76.97 & 2.19 & 96.60 & 29.46 & 3.28 & 6 & 2.20 & 2.73 & 2d-1p-1o  & --\\
            \cline{2-14} & gckl80 & 648.20 & 252.22 & 86.12 & 2.93 & 112.15 & 25.46 & 4.40 & 7 & 2.27 & 3.09 & 2d-1p-1o  & --\\
            \cline{2-14} & ickl10 & 226.84 & 67.31 & 18.74 & 3.59 & 0.18 & 0.18 & 1.00 & 2 & 1.97 & 1.01 & 1d-2p-1o & --\\
            \cline{2-14} & ickl11 & 191.39 & 46.79 & 13.20 & 3.54 & 0.34 & 0.31 & 1.1 & 1 & 1.73 & 0.58 & 1d-1p-1o  &-- \\
            \cline{2-14} & ickl12 & 272.22 & 44.48 & 19.54 & 2.28 & 0.29 & 0.28 & 1.04 & 1 & 2.09 & 0.48 & 1d-2p-1o  &-- \\
            \cline{2-14} & ickl13 & 192.63 & 49.52 & 14.14 & 3.5 & 0.13 & 0.12 & 1.08 & 1 & 1.73 & 0.58 & 1d-1p-1o &-- \\
            \cline{2-14} & ickl14 & 416.49 & 134.35 & 129.33 & 1.04 & 74.51 & 53.25 & 1.40 & 5 & 5.26 & 0.95 & 2d-2p-1o  & --\\
            \cline{2-14} & ickl15 & 403.96 & 151.30 & 148.31 & 1.02 & 60.56 & 50.39 & 1.20 & 13 & 4.89 & 2.66 & 2d-1p-1o & --\\
            \cline{2-14} & ickl16 & 563.33 & 199.76 & 128.17 & 1.56 & 83.09 & 76.69 & 1.08 & 7 & 5.03 & 1.39 & 2d-1p-1o  & --\\
            \cline{2-14} & ickl8 & 143.89 & 35.1 & 16.87 & 2.08 & 0.11 & 0.11 & 1.00 & 1 & 1.92 & 0.52 & 1d-1p-1o  &-- \\
            \cline{2-14} & ickl9 & 142.77 & 34.74 & 17.19 & 2.02 & 0.20 & 0.18 & 1.11 & 1 & 1.74 & 0.58 & 1d-1p-1o  & --\\
            \cline{2-14} & rfkl109 & 404.66 & 116.62 & 115.23 & 1.01 & 59.51 & 48.36 & 1.23 & 6 & 5.14 & 1.17 & 2d-1p-1o  & --\\
            \cline{2-14} & rfkl110 & 430.16 & 150.93 & 135.17 & 1.12 & 61.80 & 57.76 & 1.07 & 5 & 5.19 & 0.96 & 2d-1p-1o  &-- \\
            \cline{2-14} & rfkl111 & 429.96 & 143.32 & 142.60 & 1.01 & 63.42 & 41.66 & 1.52 & 6 & 5.09 & 1.18 & 2d-1p-1o  & --\\
            \cline{2-14} & rfkl112 & 425.18 & 149.19 & 113.09 & 1.32 & 55.15 & 42.39 & 1.30 & 7 & 5.15 & 1.36 & 2d-1p-1o & --\\
            \cline{2-14} & ackl91 & 2774.60 & 534.61 & 380.18 & 1.41 & 336.72 & 89.92 & 3.74 & 6361 & 10.63 & 598.29 & 2d-3p-2o & PD,IB\\
            \cline{2-14} & ackl92 & 2607.20 & 544.30 & 267.76 & 2.03 & 332.55 & 44.06 & 7.55 & 1542 & 5.82 & 265.09 & 2d-2p-2o &PD\\
            \cline{2-14} & ackl94 & 3374.60 & 707.46 & 376.44 & 1.88 & 468.69 & 44.91 & 10.44 & 4273 & 10.17 & 420.12 & 2d-3p-2o &PD,IB \\
            \cline{2-14} & ackl102 & 6221.20 & 1180.49 & 1163.45 & 1.01 & 1009.94 & 72.91 & 13.85 & 9153 & 38.42 & 238.22 & 2d-2p-6o &PD,IB \\
            \cline{2-14} & ackl106 & 2019.22 & 430.54 & 208.81 & 2.06 & 81.09 & 40.48 & 2 & 5546 & 38.38 & 144.5 & 2d-2p-6o &PD,IB \\
            \cline{2-14} & rkl87 & 403.33 & 134.89 & 121.77 & 1.11 & 53.49 & 49.73 & 1.08 & 5 & 5.13 & 0.98 & 2d-1p-1o & --\\
            \cline{2-14} & rkl88 & 436.12 & 143.46 & 142.31 & 1.01 & 65.39 & 55.68 & 1.17 & 5 & 5.09 & 0.98 & 2d-1p-1o  &-- \\
            \hline
            \multirow{3}{*}{\rotatebox{90}{\textbf{NAS MG}}} & mgl15 & 1298.02 & 228.12 & 161.85 & 1.41 & 81.38 & 59.75 & 1.36 & 8 & 8.06 & 0.99 & 3d-1p-1o &-- \\
            \cline{2-14} & mgl18 & 1618.47 & 274.78 & 152.41 & 1.80 & 121.05 & 68.46 & 1.77 & 7 & 8.16 & 0.86 & 3d-1p-1o  & --\\
            \cline{2-14} & mgl5  & 6257.94 & 357.39 & 355.12 & 1.01 & 1455.33 & 123.15 & 11.82 & 46422 & 19.44 & 2387.96 & 3d-19p-1o  & PD,IB\\
            \hline
            \multirow{3}{*}{\rotatebox{90}{\textbf{SM}}} & div0 & 10127.40 & 1045.14 & 745.87 & 1.40 & 2707.86 & 1398.93 & 1.94 & 6590 & 12.35 & 533.58 & 3d-7p-1o  & PD\\
            \cline{2-14} & heat0 & 8805.12 & 779.21 & 746.62 & 1.04 & 2305.01 & 1724.62 & 1.34 & 4668 & 12.15 & 384.25 & 3d-7p-1o &PD \\
            \cline{2-14} & grad0 & 11809.20 & 1819.60 & 729.51 & 2.49 & 1144.30 & 1124.82 & 1.02 & 18557 & 24.66 & 752.47 & 3d-7p-3o &PD \\
        \bottomrule[1.2pt]
        \end{tabular}
        }
        \label{table: Overall lifting results}
        \end{center}
        \vspace{-3em}
    \end{table}

\subsection{Comparison with State-of-the-art Systems}

\subsubsection{Lifting Effectiveness and Efficiency}

Table~\ref{table: Overall lifting results} reports the lifting performance on all stencil kernels successfully processed by STNG, including all the open-source benchmarks evaluated in  STNG's original work~\cite{Kamil2016}. Due to the lack of active maintenance of STNG’s public repository, it is difficult to apply it to newer or more complex stencil kernels, limiting the scope of direct comparison. In this evaluation, both STNG and Stencil-Lifting are executed in single-threaded mode. Compared to STNG, Stencil-Lifting achieves a geometric mean speedup of 31.62$\times$ in lifting time and exhibits more consistent runtime across similar kernels. This improvement is largely due to replacing CEGIS-based synthesis engine with a deterministic, iteration-based lifting strategy. Table~\ref{tab:lift-dexter} further compares the lifting time among STNG, Dexter, and Stencil-Lifting. While Dexter reduces synthesis complexity relative to STNG, Stencil-Lifting achieves a 5.8$\times$ speedup over Dexter on average, confirming its superior effectiveness and efficiency.

\subsubsection{Performance of Generated Codes}

Table~\ref{table: Overall lifting results} also compares the original Fortran codes with their corresponding Halide versions generated from the lifted summaries. For the CPU backend, Column 3 lists the single-threaded execution time of the original code. Columns 4 and 5 show the execution times of CPU-targeted Halide code generated by STNG and Stencil-Lifting, respectively. The Halide codes generated by Stencil-Lifting achieve 7.62$\times$ and 1.88$\times$ geometric mean speedups against the Fortran baselines and the codes generated by STNG. Stencil-Lifting consistently generates more optimized code due to its fine-grained autotuning schedule. For the GPU backend, Columns 7 and 8 report the execution time of the GPU-targeted Halide code generated by STNG and Stencil-Lifting, respectively (excluding CPU-GPU data transfer time). The GPU implementations generated by Stencil-Lifting demonstrate 83.85$\times$ and 2.72$\times$ geometric mean speedups over the Fortran baselines and GPU codes generated by STNG.

\begin{table}[]
    \centering
    \scriptsize
    \renewcommand\arraystretch{1.2}
    \begin{minipage}{.52\linewidth}
        \centering
    \begin{minipage}{.9\linewidth}
        \caption{Comparison of lifting efficiency among STNG, Dexter, and Stencil-Lifting. Each kernel is selected from Table \ref{table: Overall lifting results}.}
        \vspace{-1em}
        \begin{tabularx}{\linewidth}{c|c|c|c|c}
        \toprule[1.2pt]
            \textbf{Kernel} &
            \textbf{\begin{tabular}[c]{@{}c@{}}Dimen.\\ and\\Points\end{tabular}} &
            \textbf{\begin{tabular}[c]{@{}c@{}} STNG \\ Lifting\\ Time (s)\end{tabular}} &
            \textbf{\begin{tabular}[c]{@{}c@{}} Dexter \\ Lifting\\ Time (s)\end{tabular}} &
            \textbf{\begin{tabular}[c]{@{}c@{}}Stencil-Lifting\\ Lifting\\Time (s)\end{tabular}} \\
            \midrule[1pt]
            akl81 & 2d-4p & 14944 & 104.8 & 9.4 \\ 
            ackl91 & 2d-3p & 6361 & 58.1 & 10.6 \\ 
            amkl97 & 2d-3p & 4099 & 35.5 & 8.5 \\ 
            grad0 & 3d-7p & 18557 & 148.3 & 24.7 \\
        \bottomrule[1.pt]
        \end{tabularx}
        \label{tab:lift-dexter}
    \end{minipage}
    \end{minipage}
    \hfill
    \begin{minipage}{0.47\linewidth}
        \centering
        \includegraphics[width=\linewidth]{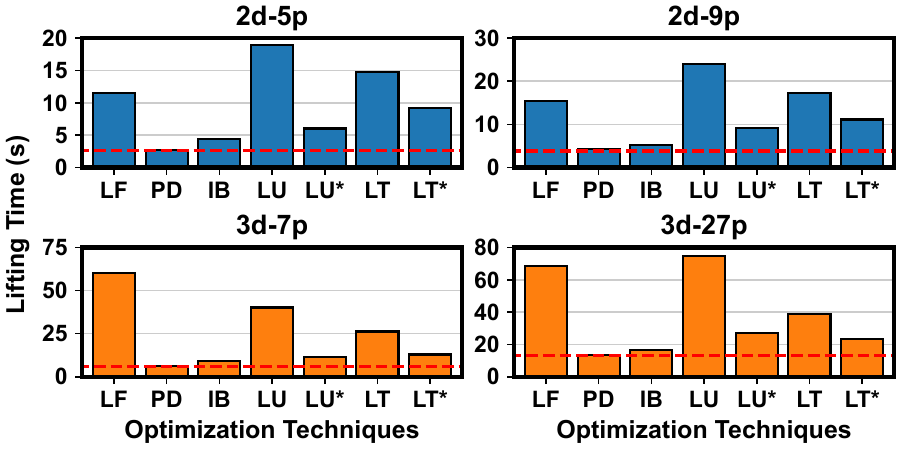}
        \vspace{-2.5em}
    \captionof{figure}{Lifting time of Stencil-Lifting for hand-optimized stencils. LU$^*$ and LT$^*$ mean that the \textit{equivalence checking} and \textit{vertex elimination} are used.}
    \label{fig:Ablation}
    \end{minipage}
\end{table}

\begin{table}[]
\scriptsize 
    \begin{center}
    \caption{Practicality evaluation on complex real-world stencil kernels. For each kernel, the applied optimizations are drawn from five techniques: (1) loop fusion (LF) for boundary conditions, (2) parallelization directives (PD) for loops, (3) the use of intermediate buffers (IB), (4) loop unrolling (LU), and (5) loop tiling (LT).}
    \vspace{-1em}
    \renewcommand\arraystretch{1.25}
    \resizebox{\linewidth}{!}{
    \begin{tabular}{c|c|c|c|c|c|c|c|c|c}
    \toprule[1.2pt]
        \textbf{\rotatebox[origin=c]{90}{App.}} & 
        \textbf{Kernel} & 
        \textbf{\begin{tabular}[c]{@{}c@{}}Dimen.\\-Points\\-Outputs \end{tabular}} &
        \textbf{\begin{tabular}[c]{@{}c@{}}Optimize\\ Tech.\end{tabular}} &
        \textbf{\begin{tabular}[c]{@{}c@{}}Lifting \\ Time \\ (s)\end{tabular}} &
        \textbf{\begin{tabular}[c]{@{}c@{}}Fortran \\ Time (ms)\end{tabular}} & 
        \textbf{\begin{tabular}[c]{@{}c@{}}Halide \\ CPU \\ Time (ms)\end{tabular}} & 
        \textbf{\begin{tabular}[c]{@{}c@{}}Halide \\ CPU \\ Speedup\end{tabular}} & 
        \textbf{\begin{tabular}[c]{@{}c@{}}Halide \\ GPU \\ Time (ms)\end{tabular}} & 
        \textbf{\begin{tabular}[c]{@{}c@{}}Halide \\ GPU \\ Speedup\end{tabular}} \\
    \hline
        \multirow{11}{*}{\rotatebox{90}{\textbf{Cloverleaf}}} & ackl93    & 2d-3p-2o & LF,PD,IB & 17.37 & 4643.04  & 340.49 & 13.63 & 45.66  & 101.68 \\
        \cline{2-10}                 & ackl96    & 2d-3p-2o & LF,PD,IB & 17.4  & 4385.97  & 578.46 & 7.58  & 57.89  & 75.76 \\ 
        \cline{2-10}                 & amkl102   & 2d-4p-2o & LF,PD    & 8.28  & 1957.14  & 545.46 & 3.59  & 53.96  & 36.27 \\ 
        \cline{2-10}                 & amkl104   & 2d-3p-1o  & LF,PD,IB & 7.48  & 4825.55  & 161.13 & 29.95 & 40.62  & 118.8 \\ 
        \cline{2-10}                 & amkl106   & 2d-4p-2o & LF,PD    & 7.85  & 2747.77  & 566.55 & 4.85  & 37.18  & 73.9 \\ 
        \cline{2-10}                 & amkl108   & 2d-3p-1o  & LF,PD,IB & 7.13  & 2870.23  & 254.22 & 11.29 & 27.88  & 102.95 \\ 
        \cline{2-10}                 & cdkl      & 2d-4p-1o & LF,PD,IB & 11.96 & 5310.42  & 488.29 & 10.88 & 37.6   & 141.23 \\ 
        \cline{2-10}                 & igkl      & 2d-1p-1o  & LF,PD    & 5.58  & 1341.79  & 271.24 & 4.95  & 109.95 & 12.2 \\ 
        \cline{2-10}                 & pdvk1     & 2d-4p-2o & PD,IB    & 13.33 & 3402.17  & 905.91 & 3.76  & 59.64  & 57.05 \\ 
        \cline{2-10}                 & pdvk2     & 2d-4p-2o & PD,IB    & 14.52 & 2940.13  & 1008.8 & 2.91  & 86.84  & 33.86 \\
        \cline{2-10}                 & viscosity & 2d-4p-1o & LF,PD    & 13.24 & 17689.56 & 400.65 & 44.15 & 126.17 & 140.2 \\
    \hline
        \multirow{18}{*}{\rotatebox{90}{\textbf{WRF}}}   & mdel9  & 3d-1p-1o  & LF,PD    & 11.06   & 956.9   & 55.86  & 17.13  & 34.18 & 28   \\
        \cline{2-10}                   & mdel24 & 3d-4p-2o & LF,PD    & 34.58   & 6486.78 & 64.52  & 100.54 & 42.68 & 151.99 \\
        \cline{2-10}                   & mdel26 & 3d-1p-1o  & LF,PD,IB & 2.50    & 5117.05 & 115.67 & 44.24  & 48.96 & 104.51 \\
        \cline{2-10}                   & mdel27 & 3d-2p-1o  & LF,PD    & 15.63   & 1212.71 & 177.47 & 6.83   & 50.42 & 24.05 \\
        \cline{2-10}                   & mdel29 & 3d-1p-2o  & LF,PD    & 14.72   & 5268.15 & 95.63  & 20.72  & 41.58 & 126.7 \\
        \cline{2-10}                   & mdel30 & 3d-1p-1o  & LF,PD    & 7.95    & 1209.56 & 58.38  & 55.09  & 36.51 & 33.13 \\
        \cline{2-10}                   & mdel33 & 3d-11p-4o & PD,IB    & 27.23   & 4928.4  & 855.02 & 6.83   & 46.04 & 107.05 \\
        \cline{2-10}                   & mdel34 & 3d-2p-1o & LF,PD    & 34.46   & 5657.63 & 143.1  & 39.54  & 56.98 & 99.29 \\
        \cline{2-10}                   & mdel35 & 3d-14p-1o  & PD,IB    & 21.32   & 3132.8  & 410.88 & 46.55  & 31.2  & 100.41 \\
        \cline{2-10}                   & mdel36 & 3d-8p-1o & LF,PD    & 4.63    & 6482.49 & 139.27 & 46.55  & 44.38 & 146.07 \\
        \cline{2-10}                   & mbel10 & 3d-1p-2o  & LF,PD    & 34.63   & 6997.42 & 269.48 & 25.97  & 42.72 & 163.8 \\
        \cline{2-10}                   & mbel16 & 3d-1p-2o  & LF,PD    & 34.33   & 7791.63 & 245.84 & 31.69  & 40.01 & 194.74 \\
        \cline{2-10}                   & miel3  & 3d-4p-2o & LF,PD,IB & 31.01   & 7828.96 & 265.13 & 29.53  & 41.35 & 189.33 \\
        \cline{2-10}                   & miel10 & 3d-3p-1o & PD,IB    & 31.76   & 3240.42 & 243.76 & 13.29  & 98.05 & 33.05 \\
        \cline{2-10}                   & miel11 & 3d-3p-1o & LF,PD,IB & 30.03   & 2808.54 & 215.69 & 13.02  & 45.59 & 61.6 \\
        \cline{2-10}                   & miel12 & 3d-4p-1o & LF,PD,IB & 33.96   & 2978.71 & 252.15 & 11.81  & 61.31 & 48.58 \\
        \cline{2-10}                   & miel13 & 3d-4p-1o & LF,PD,IB & 34.85   & 2968.37 & 286.5  & 10.36  & 34.67 & 85.62 \\
        \cline{2-10}                   & miel14 & 3d-3p-1o & PD,IB    & 34.17   & 2635.64 & 243.53 & 10.82  & 36.84 & 71.54 \\
    \hline
        \multirow{3}{*}{\rotatebox{90}{\textbf{NAS MG}}} & mgl6            & 3d-21p-2o & PD,IB & 19.72  & 2284.57 & 711.56 & 3.21 & 42.06 & 54.32 \\
        \cline{2-10}            & resid1          & 3d-21p-1o & PD,IB & 20.04  & 2429.89 & 745.01 & 3.26 & 46.69 & 52.04 \\
        \cline{2-10}            & trilinear\_proj & 3d-27p-1o & PD,IB & 36.92  & 2492.2  & 779.95 & 3.2  & 36.54 & 68.2 \\
    \hline
        \multirow{2}{*}{\rotatebox{90}{\textbf{Heat}}}  & probe     & 3d-25p-1o & LF,PD,LT & 43.12 & 1812.51 & 68.95 & 26.29 & 8.61 & 210.51 \\
        \cline{2-10}           & multiload & 1d-3p-1o  & PD,LU    & 6.91  & 139.97  & 77.39 & 1.81  & 2.21 & 63.33 \\  
    \hline 
	   \multirow{4}{*}{\rotatebox{90}{\textbf{MagEmul}}} & eupdate  & 3d-10p-9o & PD,IB    & 711.36 & 71056.67 & 21553.38 & 3.3   & 1244.25 & 57.11 \\
        \cline{2-10}             & solvepde & 3d-5p-1o  & PD,IB,LT & 510.53 & 1817.62  & 545.36   & 3.33  & 31.71   & 57.32 \\
        \cline{2-10}             & compgrad & 3d-3p-2o  & LF,PD       & 18.93  & 1175.61  & 143.88   & 8.17  & 10.617  & 110.73 \\
        \cline{2-10}             & getadvrt & 4d-2p-1o  & LF,PD       & 17.04  & 12200.71 & 606.55   & 20.11 & 39.341  & 310.13 \\
    \bottomrule[1.2pt]
    \end{tabular}} 
    \label{table: Application results}
    \end{center}
    \vspace{-3em}
\end{table}

\subsection{Evaluation on Detailed Designs within Stencil-Lifting}

\subsubsection{Performance Evaluation}
Figure~\ref{fig:Ablation} illustrates the performance of Stencil-Lifting on hand-optimized stencil codes that apply non-trivial transformations to classic stencil patterns, including 2d-5p, 2d-9p, 3d-7p, and 3d-27p. The dashed red line marks the lifting time for the corresponding unoptimized baseline kernel. Two key observations emerge from this evaluation. First, Stencil-Lifting successfully handles kernels with five widely-used loop and memory optimizations: (1) loop fusion (LF) for boundary conditions, (2) parallelization directives (PD) for loops, (3) the use of intermediate buffers (IB), (4) loop unrolling (LU), and (5) loop tiling (LT). Although the lifting time increases compared to the unoptimized versions, the lifting process remains robust and scalable. Second, for kernels with unrolling and tiling, the \textit{equivalence checking} and \textit{vertex elimination} procedures incorporated into the hierarchical recursive lifting algorithm significantly reduce redundant computations, achieving an average 2.4$\times$ speedup in lifting time.

\subsubsection{Scalability Evaluation}
In the scalability evaluation, we fix the stencil dimensionality and vary the stencil complexity, defined as the product of the number of computational points and the number of output arrays. Figure~\ref{fig:stat} compares the scaling behavior of STNG and Stencil-Lifting as stencil complexity increases. Stencil-Lifting maintains near-linear scalability, while STNG exhibits rapidly growing lifting time, even for syntactically simple loops. These results highlight the superior scalability of Stencil-Lifting in handling complex stencil computations.

\begin{figure}[btph]
    \vspace{-0.5em}
    \centering
    \includegraphics[width=1\textwidth]{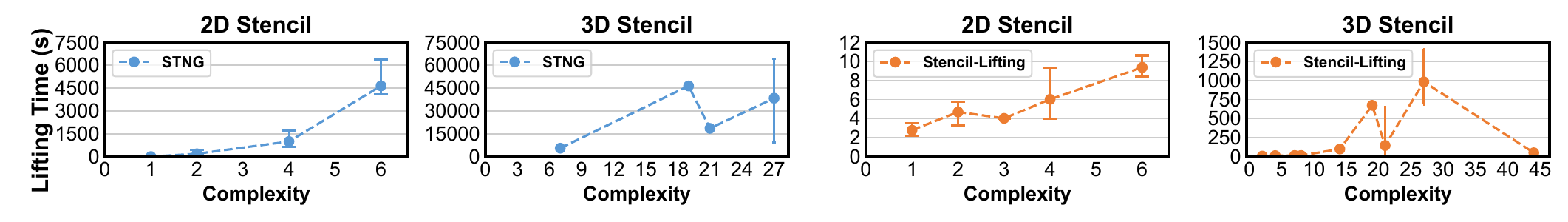}
    \vspace{-1.5em}
    \caption{The summary lifting time of STNG and Stencil-Lifting with the increase of stencil complexity.
    }
    \vspace{-1.5em}
    \label{fig:stat}
\end{figure}

\subsection{Practicality Evaluation on Complex Real-World Stencil Kernels}

Table~\ref{table: Application results} evaluates Stencil-Lifting on $38$ complex stencil kernels drawn from real-world scientific applications. Each kernel features distinct code structures and incorporates at least two non-trivial hand-coded optimization. The five benchmarks cover a broad range of domains, including hydrodynamics (Cloverleaf), climatic simulation (WRF), multigrid computation (NAS MG), thermal transmission (HEAT), and electromagnetic field emulation (MagEmul). They capture diverse stencil coding styles and optimization practices found in both legacy and modern systems.

As shown in Table~\ref{table: Application results}, Stencil-Lifting successfully derives correct summaries for all the kernels. Despite our effort, STNG was unable to complete these cases. In contrast, Stencil-Lifting completes all cases within 1,000 seconds, demonstrating excellent practicality and robustness. Moreover, the lifted summaries are translated into Halide DSL for both CPU and GPU backends. Compared to the original (hand-optimized) stencil implementations, the generated CPU-targeted code achieves a geometric mean speedup of 11.35$\times$ (ranging from 1.81$\times$ to 100.54$\times$), while the GPU-targeted code delivers a geometric mean speedup of 78.19$\times$ (ranging from 12.20$\times$ to 310.13$\times$), demonstrating significant performance gains even over manually tuned implementations.

\section{Related Works}

\stitle{Stencil lifting systems.} Over the past decade, researchers have explored lifting legacy stencil code into high-level forms for DSL-based optimization. Early work like Helium \cite{Mendis2015} inferred Halide kernels from dynamic binary traces, but lacked soundness guarantees and scalability. STNG \cite{Kamil2016} advanced the field by formalizing stencil translation as a syntax-constrained synthesis task using CEGIS, enabling Fortran-to-Halide conversion. However, its use of SKETCH synthesizer\cite{Solar-Lezama2007} incurred high complexity when processing deep nested loops and multi-neighborhood patterns, often requiring hours to verify semantic equivalence for complex cases. To address these inefficiencies, Dexter \cite{Ahmad2019} proposed a three-phase decomposition (ROI-Terms-Expr) to reduced variable complexity, but still struggled with exponential search spaces in high-dimensional stencils. Our system takes a fundamentally different approach: it hierarchically extracts loop invariants from the data dependence graph to construct postconditions that capture stencil semantics. Instead of relying on exhaustive synthesis, it employs incremental invariant induction to ensure both scalability and semantic rigor.

\stitle{Loop summarization.} Summarizing loops with semantic preservation is a long-standing topic in program verification, synthesis, and transformation. Classical methods include inferring loop invariants, constraint-based synthesis \cite{Alur2013, Cheung2013}, and IC3-based model checking \cite{Bradley2011, Cimatti2012}. We focus on a specific class of summaries that admit semantically equivalent representations in functional or declarative models, such as Halide, ensuring tractability and soundness. Related works share similar goals with ours, including STNG \cite{Kamil2016} and Dexter \cite{Ahmad2019}. They adopt counterexample-guided program synthesis to derive invariants for stencil loops. Recent efforts on dataflow-based summarization, including LoopSCC \cite{zhu2024} and other approaches \cite{S-looper2015, Proteus2016, Blicha2022}, have advanced loop summarization, particularly for numerical loops. However, these methods primarily target scalar updates and lack support for multi-dimensional array operations typical in stencil codes, limiting their applicability to complex loop structures. Our approach leverages dataflow analysis for hierarchical summarization of nested loops of stencil kernel and verifies derived summaries as loop postconditions within a Hoare logic framework \cite{Hoare1969}.

\section{Conclusion}
We presented Stencil-Lifting, a system that efficiently transforms Fortran stencil kernels from legacy codes into optimized Halide implementations. By capturing stencil computations as invariant subgraphs, Stencil-Lifting enables consistent summary derivation across nested loops. A provably terminating recursive lifting algorithm further ensures efficient summarization without search-based synthesis overhead. Experimental results show that Stencil-Lifting achieves 31.62$\times$ and 5.8$\times$ speedups compared to STNG and Dexter respectively, while maintaining correctness. Stencil-Lifting also improves maintainability and portability of stencil kernels, facilitating legacy code adaptation to heterogeneous platforms.

\section*{Data-Availability Statement}
The implementation of Stencil-Lifting and the raw data from our evaluation in Section \ref{section: Evaluation} are available online \cite{StencilLifting}. 

\begin{acks}
The authors would like to thank all anonymous reviewers for their valuable comments and helpful suggestions. This work is supported in part by the National Natural Science Foundation of China, under grant numbers 62032023, T2125013, and 62172391.
\end{acks}

\appendix
\section{Intermediate Representation}
\label{section: Intermediate Representation}

In Stencil-Lifting, we design a specialized intermediate representation (IR) to describe invariant subgraphs, incorporating three novel features tailored for recursive lifting: 

\etitle{(1) Hierarchy:} Represent explicitly each loop as a function, establishing isolated analysis contexts for each loop level.

\etitle{(2) Dataflow-Centric:} Treat the array as a primitive data type, enabling explicit tracking of read-write dependencies between array operations. This facilitates precise dataflow analysis and improves dependency resolution.  

\etitle{(3) Control-Flow Reduction:} Simplify complex control-flow by encoding execution paths using branching condition predicates. Definition conflicts across execution paths are resolved through specialized $\varphi$ functions, which merge values based on dominating conditions. This approach enhances the efficiency of dataflow analysis on invariant subgraphs.  

To further streamline dataflow analysis, our IR adopts SSA form. Below, we provide key design specifications, including \textit{region}, \textit{value}, and \textit{operation}.

\stitle{Region.} \textit{Region} is a hierarchical structure that encapsulates a sequence of operations, providing a structured representation of control flow and scope within the parent operation. In our design, a \textit{region} is defined to begin with an \textit{input} operation and terminate with a \textit{return} operation. These operations align with the input and output values of the region's parent operations, thereby establishing data dependencies between nested regions.

\stitle{Value.} The proposed IR employs a highly abstract data type design, focusing on symbolically expressing the logical structure and data flow of programs, while abstracting away concrete execution details. The IR supports four fundamental \textit{Value} types: \textit{index value} (for symbolic memory access), \textit{numerical value} (for symbolic arithmetic operations), \textit{array value} (for symbolic data collections), and \textit{boolean value} (for symbolic logical predicates). 

\stitle{Operation.} An operation is a fundamental computation unit encapsulating specific program semantics in a loop. Each operation owns \textit{input values}, \textit{output values}, and optionally, an associated \textit{region}. 
Not all operations require a \textit{region} for semantic elaboration. 
Accordingly, all operations are classified into Region-bearing operations and Non-region operations, depending on whether they contain a non-empty region. 
Stencil-Lifting contains the following operations.

\stitle{\small \CIRCLE} \textit{Scalar operations} are algebraic manipulations between two data except for array. 

\stitle{\small \CIRCLE}  \textit{Array operations}, such as \texttt{get} and \texttt{set}, are reading a numerical value from an array or writing a value into an array. 

\stitle{\small \CIRCLE}  \textit{Input operations} are special operations that require no input value. Their function is to provide symbols for region arguments, which are either values from the outer region or function parameters.

\stitle{\small \CIRCLE}  \textit{Bound operation} is used to create a loop counter constrained within a given range and incremented by a fixed step size. For example, when the inputs are $begin$, $end$, and $step$, the bound operation generates a loop counter that is restricted to the interval $[begin, ~end]$ and increases by $step$ during each iteration.

\stitle{\small \CIRCLE}  \textit{Loopcall operation} functionalizes a inner loop. This operation is region-bearing opertaions for representing a loop as a function. For each input $v_i$ of \texttt{Loopcall} operation, there is a dual value $v_{i'}$ in the loopcall's region. For output value $v_o$, there also is a dual value $v_{o'}$ in loopcall's region. The loopcall's region represents the semantics of the functionalized loop, based on which the invariant subgraph is constructed.

\stitle{\small \CIRCLE} \textit{$\varphi$ function operations} represent the aggregation of conflict definitions on different execution paths. The input of $\varphi$ function operations is a tuple of numerical and boolean values, which represents the defined value and the branching condition on an execution path. In Stencil-Lifting, $\varphi$ function operation is different from the common $\varphi$ function, because our $\varphi$ function encodes execution paths through branching condition predicates.

\section{Proofs of the Theorem \ref{theorem: self-consistent} and Theorem \ref{theorem: recursive lifting}}
\label{app:B}

\begin{theorem}[Theorem \ref{theorem: self-consistent}]

Let $(S^*(v^k_1), S^*(v^k_2), \cdots, S^*(v^k_{h_k}))$ be a \textit{self-consistent} summary configuration under the $h_k$ rules $R(v^k_1), R(v^k_2), \cdots, R(v^k_{h_k})$. If $S^*(v^k_{p_k})|_{\mathcal{L}(v^k_{p_k})=(m,n)}$ is set as the loop invariant and $S^*(v^k_{p_k})|_{\mathcal{L}(v^k_{p_k})=(m,N_k)}$ is set as the postcondition for $L^{(k)}$ at its $m$-th invocation (where $1\leq m \leq M_k$ and $ 1 \leq n \leq N_k$), then the three statements in Equation (\ref{equation: VC}) hold.
\end{theorem}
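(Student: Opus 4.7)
The plan is to interpret each vertex summary operationally and verify the three Hoare conditions in Equation (\ref{equation: VC}) separately, exploiting the fact that $v^k_{p_k}$ is the single $\varphi$-vertex whose data is the output array, so its summary parametrized by $\mathcal{L}(v^k_{p_k})=(m,n)$ faithfully represents the state of that array just before the $n$-th iteration of the $m$-th invocation of $L^{(k)}$. The key bridge I will use repeatedly is the following \emph{soundness lemma for summary rules}: for every rule $R$ listed in Section~\ref{subsubsection:SummaryRule}, if the summaries of the predecessors agree with the actual runtime values they describe at iteration context $\mathcal{L}$, then applying $R$ yields a summary that agrees with the runtime value produced by $O(v)$ at the same context. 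This lemma is proved case by case over the seven rules; for the $Loopcall$ rule (Rule 7) it is taken as a hypothesis coming from the postcondition of $L^{(k-1)}$, which is exactly the inductive content that Theorem~\ref{theorem: recursive lifting} will later exploit.

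For (I), I would observe that in any SSA form the $\varphi$-vertex $v^k_{p_k}$ has one branch feeding the value that enters $L^{(k)}$ from outside (the pre-state of the $m$-th invocation) and another branch feeding the value produced by the body. Specializing the self-consistent equation at $\mathcal{L}(v^k_{p_k})=(m,1)$ selects precisely the external branch, because at the first iteration of a given invocation the loop body has not yet executed. Hence $S^*(v^k_{p_k})|_{(m,1)}$ coincides with the precondition, giving $\mathit{pre}(s)\!\to\!\mathit{invar}(s)$.

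For (II), the main work is a topological-order induction inside $G^*_k$. Consider the DAG $\mathcal{G}^*_k$ obtained by contracting SCCs (Definition~\ref{def:scc-basedig}); cut the cycle through $v^k_{p_k}$ by fixing its summary at iteration $(m,n)$ as given by the invariant. Then traversing the remaining vertices in topological order, the soundness lemma above lets me inductively conclude that each $S^*(v^k_j)|_{(m,n)}$ equals the symbolic value of $D_{\mathrm{variable}}(v^k_j)$ produced when the body is executed on the state described by $\mathit{invar}$. In particular, when the traversal returns to $v^k_{p_k}$ through the body-branch of its $\varphi$ function, self-consistency forces the resulting expression to equal $S^*(v^k_{p_k})|_{(m,n+1)}$, which is exactly $\mathit{invar}(\text{body}(s))$, establishing (II).

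For (III), once (II) has been obtained, the invariant at $n=N_k$ is by construction the postcondition, so $\mathit{invar}\wedge\neg \mathit{cond}\Rightarrow\mathit{post}$ is immediate. The main obstacle I anticipate is not the Hoare-style argument itself but making the soundness lemma rigorous in the presence of the extended $\varphi$-function used to merge exceptional branches (regular-loop Definition~\ref{definition: regular loop}) and in the presence of the $Loopcall$ rule: for the former, I must verify that the merged branch predicates remain mutually exclusive and cover all executed iterations, so that the $\varphi$ semantics in Table~\ref{tab:summary-syntax} matches the actual control flow; for the latter, I must carefully substitute dual-vertex summaries in the inner postcondition without capturing loop indices, which requires a disciplined renaming argument. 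Boundary handling (when the one-iteration advance steps off the regular pattern $ST^*_k$) is addressed by the same case analysis, since the extended $\varphi$ uniformly assigns the appropriate branch.
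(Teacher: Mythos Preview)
Your plan matches the paper's proof in structure: (I) and (III) are dispatched by inspecting which branch of the $\varphi$-vertex is selected at $n=1$ and $n=N_k$, and (II) is the real content, proved by a structural induction showing that one execution of the body carries the invariant from $(m,n)$ to $(m,n{+}1)$. The paper frames that induction as induction on the \emph{depth} (longest input-to-output path) of the concrete SMG loop slice $G'(V'_i,E'_i)$, which is a DAG by SSA construction, rather than on the invariant subgraph $G^*_k$ directly; the per-rule step in its inductive case is exactly your ``soundness lemma''.

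One wrinkle in your write-up: you first pass to the SCC-contracted graph $\mathcal{G}^*_k$, which is already acyclic by Definition~\ref{def:scc-basedig}, and then speak of ``cutting the cycle through $v^k_{p_k}$''; there is no cycle left to cut there. Presumably you mean to remove the back-edge into $v^k_{p_k}$ inside $G^*_k$ itself and then traverse what remains in topological order. That is fine when $v^k_{p_k}$ is the only loop-carried $\varphi$-vertex, but if $G^*_k$ carries other variables across iterations (each with its own $\varphi$ back-edge), a single cut does not restore acyclicity and a plain topological sweep stalls. The paper avoids this bookkeeping by unrolling one iteration into the SMG slice, where every cross-iteration value simply becomes an input to the slice and the graph is a DAG regardless; the inputs to slice $i$ are the external constants together with the output of slice $i{-}1$, and depth induction goes through cleanly. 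Your soundness-lemma framing is a tidy modularization of that same per-rule step, and your remarks on mutual exclusivity of the extended-$\varphi$ predicates and on capture-avoiding substitution for Rule~7 correctly flag obligations the paper leaves implicit.
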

\begin{proof}
Define the precondition as $S^*(v^k_{p_k})|_{I(v^k_{p_k})=(m_k,1)}$. Furthermore, according to the definition of $S^*(v^k_{p_k})|_{\mathcal{L}(v^k_{p_k})=(m,n_k)}$ and
$S^*(v^k_{p_k})|_{\mathcal{L}(v^k_{p_k})=(m,N_k)}$, it is easy to check the statements I and III in Equation (\ref{equation: VC}) are valid. In the following, we focus on the verification of statement II in Equation (\ref{equation: VC}).

According to the definition of invariant subgraphs, we can use $G^*_k(V^*_k, E^*_k)$ to generate the dataflow graph $G'(V',E')$ of each loop execution at the $k$-th level of SMG. Denote the summary the copy of $v^*_k \in V^*$ at the $i$-the loop slice as $S_k|_{I_k=i}$. Assume that $G'(V', E')$ has $n_{input}$ input vertices, i.e., $v_1,v_2,\cdots, v_{n_{input}}$ where the iteration property of $v_j$ is $I_j = 1$ and the summary property of $v_j$ is denoted as $S_j$. Further, $v_1,v_2,\cdots, v_{n_{input}}$ also are all the input of the first loop slice $G'(V'_1, E'_1)$ of $G'(V', E')$. According to our assumption that we just use an array to contain the final results of each loop execution, $v^k_{p_k}|_{\mathcal{L}(v^k_{p_k})=(m,1)}$ must be the only output vertex of $G(V_1, E_1)$. We re-denote $v^k_{p_k}|_{\mathcal{L}(v^k_{p_k})=(m,1)}$ as $v_{out1}$ for convenience, and define $G'_1$ as the map from $S_1,S_2,\cdots, S_{n_{input}}$ to the summary $S_{out1}$ of $v_{out1}$ under the given rule configuration. As the $G'(V'_1, E'_1)$ is a DAG and $S_{out1}$ can be determined uniquely by all the rules of vertices in $V'_1$ with $S_1,S_2,\cdots, S_{n_{input}}$ as inputs, the map $G'_1$ is well-defined. In the following, we use the mathematical induction method to prove
\begin{equation}
\label{equation: S*p1}
S(v^k_{p_k})|_{I(v^k_{p_k})=(m,1)} = G'_1 (S_1,S_2,\cdots, S_{n_{input}}).
\end{equation}
We define the depth $d$ of $G'(V'_1, E'_1)$ as the number of edges in the longest path from one input vertex to the output vertex $v_{out1}$. When $d=1$, $G'_1 (S_1,S_2,\cdots, S_{n_{input}}) = R(v_{out1})(S_1,S_2,\cdots, S_{n_{input}})$. As $(S^*(v^k_1), S^*(v^k_2), \cdots, S^*(v^k_{h_k}))$ is {\itshape self-consistent}, we have
$$S(v^k_{p_k})|_{I(v^k_{p_k})=(m,1)} = R(v^k_{p_k})(S_1,S_2,\cdots, S_{n_{input}}) = R_{out1}(S_1,S_2,\cdots, S_{n_{input}}).$$ 
Hence, the equality (\ref{equation: S*p1}) is valid for the case with $d=1$.  

Next, assume that the equality (\ref{equation: S*p1}) is valid for all the cases with $d \leq h$. When $d=h+1$, we denote all the parent vertices of $v^k_{p_k}$ as $v^k_{m_1}, v^k_{m_2}, \cdots, v^k_{m_{p_k}}$ in $G^*(V^*_k, E^*_k)$. As 
$S(v^k_{m_1}), S(v^k_{m_2}), \cdots, S(v^k_{m_{p_k}})$ and $S(v^k_(p_k)$ is {\itshape self-consistent} under the rule $R(v^k_{p_k})$, we have 
\begin{equation}
\label{equation: S*p}
S(v^k_{p_k}) = R(v^k_{p_k})(S(v^k_{m_1}), S(v^k_{m_2}), \cdots, S(v^k_{m_{p_k}})).    
\end{equation}
By setting the iteration index as $1$ in the equation (\ref{equation: S*p}), we deduce 
\begin{equation}
\label{equation: S*p11}
S(v^k_{p_k})|_{\mathcal{L}(v^k_{p_k})=(m,1)} = R(v^k_{p_k})(S(v^k_{m_1})|_{\mathcal{L}(v^k_{m_1})=(m,1)}, S(v^k_{m_2})|_{\mathcal{L}(v^k_{m_2})=(m,1)}, \cdots, S(v^k_{m_{p_k}})|_{\mathcal{L}(v^k_{m_{p_k}})=(m,1)}).    
\end{equation}
We denote $G''(V''_j, E''_j)$ as a sub-graph of $G'(V'_1, E'_1)$, which owns $v^k_{m_{j}}|_{\mathcal{L}(v^k_{m_j})=(m,1)}$ as its output vertex ($j=1,2,\cdots, m_{p_k}$). It is clear that the depth of $G''(V''_k, E''_k)$ is not larger than $h$. Denote $G''_j$ as the map from $S_1,S_2,\cdots, S_{n_{input}}$ to $S(v^k_{m_j})|_{\mathcal{L}(v^k_{m_j})=(m,1)}$. Based on the inductive assumption, $S(v^k_{m_j})|_{\mathcal{L}(v^k_{m_j})=(m,1)} = G''_{j} (S_1,S_2,\cdots, S_{n_{input}})$, which together with (\ref{equation: S*p11}) leads to
\begin{equation}
\begin{aligned}
S(v^k_{p_k})|_{I(v^k_{p_k})=(m,1)} 
 & =R(v^k_{p_k})(S(v^k_{m_1})|_{\mathcal{L}(v^k_{m_1})=(m,1)}, S(v^k_{m_2})|_{\mathcal{L}(v^k_{m_2})=(m,1)}, \cdots, S(v^k_{m_{p_k}})|_{\mathcal{L}(v^k_{m_{p_k}})=(m,1)}) \\ 
 & = R_{out1}(G''_{1} (S_1,S_2,\cdots, S_{n_{input}}), \cdots, G''_{m} (S_1,S_2,\cdots, S_{n_{input}})) \\
 & =G'_1 (S_1,S_2,\cdots, S_{n_{input}}).
\end{aligned}   
\end{equation}
According to the mathematical induction method, the equality (\ref{equation: S*p1}) is valid for all the cases.

For any $i \geq 2$, $v_1, v_2,\cdots, v_{n_{input}}$ and $S(v^k_{p_k})|_{I(v^k_{p_k})=(m,i-1)}$ may be the input vertices of $G'(V'_i,E'_i)$. Similarly, we can prove that
\begin{equation}
\label{equation: S*pk}
S(v^k_{p_k})|_{I(v^k_{p_k})=(m,i)} = G_k (S_1,S_2,\cdots, S_{n_{input}}, S(v^k_{p_k})|_{I(v^k_{p_k})=(m,i-1)}),
\end{equation}
is valid for $k \geq 2$. Consequently, together (\ref{equation: S*p1}) and (\ref{equation: S*pk}), we obtain
\begin{equation}
\label{equaiton: S*p-all}
\begin{cases}
S(v^k_{p_k})|_{I(v^k_{p_k})=(m,1)} = G'_1 (S_1,S_2,\cdots, S_{n_{input}}), \\
S(v^k_{p_k})|_{I(v^k_{p_k})=(m,i)} = G'_i (S_1,S_2,\cdots, S_{n_{input}}, S(v^k_{p_k})|_{I(v^k_{p_k})=(m,i-1)}) ~\hbox{where}~ i \geq 2.
\end{cases}
\end{equation}
If we set the loop invariant as $S(v^k_{p_k})$, the loop condition as the iteration index not larger than $N_k$, and the loop body as $G'_i$, the equality (\ref{equaiton: S*p-all}) leads to the statement II in Equation (\ref{equation: VC}).
\end{proof}

\begin{theorem}[Theorem \ref{theorem: recursive lifting}]

For any invariant subgraph \( G^*_k(V^*_k,E^*_k) \) of \( G^*(V^*,E^*) \), let the summary configuration $(S^*(v^k_1), \cdots, S^*(v^k_{h_k}))$ be \textit{self-consistent} under rules $R(v^k_1), \cdots, R(v^k_{h_k})$. Suppose that
$v^k_{q_k}$ and $v^{k-1}_{p_{k-1}}$ satisfy $S^*(v^k_{q_k})|_{\mathcal{L}(v^k_{q_k})=(m_k,n_k)} = S^*(v^{k-1}_{p_{k-1}}) |_{\mathcal{L}(v^{k-1}_{p_{k-1}})=( ((m_k-1) \cdot N_{k} + n_{k}) \cdot N_{k-1},N_{k-1})}$ for any $m_k$ and $n_k$ ($1 \leq m_k \leq M_k$ and $1 \leq n_k \leq N_k$). $S^*(v^K_{p_K})|_{\mathcal{L}(v^K_{p_K})=(M_K,N_K)}$ could represent the summary/postcondition of the stencil kernel.

\end{theorem}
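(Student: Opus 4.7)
The plan is to proceed by induction on the loop level $k$, establishing at each level that $S^*(v^k_{p_k})|_{\mathcal{L}(v^k_{p_k})=(m,N_k)}$ is a valid postcondition of the $k$-level nested loop rooted at $L^{(k)}$, for every admissible invocation index $m$. Instantiating this result at $k=K$ with $m=M_K$ will yield the claim.

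For the base case $k=1$, the innermost invariant subgraph $G^*_1(V^*_1,E^*_1)$ contains no $\mathit{Loopcall}$ vertex, so the self-consistency hypothesis alone satisfies all premises of Theorem \ref{theorem: self-consistent}. A direct application yields that $S^*(v^1_{p_1})|_{\mathcal{L}(v^1_{p_1})=(m,N_1)}$ is the postcondition of $L^{(1)}$ at its $m$-th invocation, for every $1 \le m \le M_1$.

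For the inductive step, assume the claim holds at level $k-1$. To invoke Theorem \ref{theorem: self-consistent} at level $k$, I need to confirm that the summaries attached to the vertices of $G^*_k$ genuinely model the computation of $L^{(k)}$. For every vertex other than the $\mathit{Loopcall}$ vertex $v^k_{q_k}$, semantic faithfulness follows immediately from Rules 1--6 and the SSA construction. For $v^k_{q_k}$ the hypothesis $S^*(v^k_{q_k})|_{\mathcal{L}=(m_k,n_k)} = S^*(v^{k-1}_{p_{k-1}})|_{\mathcal{L}=(((m_k-1)N_k + n_k)N_{k-1},\,N_{k-1})}$, combined with the inductive hypothesis, rewrites $S^*(v^k_{q_k})$ at each context $(m_k,n_k)$ as the postcondition of the specific invocation of $L^{(k-1)}$ triggered at that point of $L^{(k)}$'s execution. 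This is exactly what Rule 7 ($\mathit{Loopcall}$) encodes, after the predecessors of $v^k_{q_k}$ in $G^*_k$ are substituted for the dual vertices in $G^*_{k-1}$. Thus all premises of Theorem \ref{theorem: self-consistent} hold at level $k$, giving that $S^*(v^k_{p_k})|_{\mathcal{L}=(m_k,N_k)}$ is the postcondition of $L^{(k)}$ at its $m_k$-th invocation.

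The main obstacle is the index bookkeeping. I will need to justify carefully that the call index $((m_k-1)N_k + n_k)N_{k-1}$ together with iteration index $N_{k-1}$ really picks out the unique invocation of $L^{(k-1)}$ that completes immediately before $L^{(k)}$ finishes its $n_k$-th iteration within its $m_k$-th call, and that the dual-vertex substitution preserves the equality on both sides when the inductive postcondition is plugged in. Once this combinatorial alignment is discharged, setting $k=K$ and $m_K=M_K$ delivers $S^*(v^K_{p_K})|_{\mathcal{L}(v^K_{p_K})=(M_K,N_K)}$ as the postcondition of the entire stencil kernel, completing the proof.
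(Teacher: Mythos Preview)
Your approach is correct but differs from the paper's in structure. The paper proceeds in two stages: first it argues that the union of all level-wise summary configurations is globally self-consistent over the entire invariant graph $G^*(V^*,E^*)$ (this is where the given equality for $v^k_{q_k}$ is discharged, by showing it realizes Rule~7 exactly); second, it handles only the case $K=2$ explicitly, constructing a joint predicate $\eta(i,j)$ depending on both the inner and outer iteration indices and verifying directly that $\eta$ is preserved under an outer-loop iteration $\mathit{Iter}_j$ and under an inner-loop iteration $\mathit{Iter}_i$. The postcondition then falls out as $\eta(M_2,N_2)$, and general $K$ is dismissed by a ``without loss of generality'' remark.

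Your induction on $k$ with repeated black-box invocation of Theorem~\ref{theorem: self-consistent} is the cleaner route and makes the general-$K$ case explicit rather than relegating it to a WLOG. One phrasing point: semantic faithfulness of the non-$\mathit{Loopcall}$ vertices and the match between $S^*(v^k_{q_k})$ and the actual effect of $L^{(k-1)}$ are not \emph{premises} of Theorem~\ref{theorem: self-consistent} (which only asks for self-consistency); rather, they are what lets you read Theorem~\ref{theorem: self-consistent}'s conclusion---an abstract Hoare triple over the rule-induced body map---as the genuine postcondition of the $k$-level nested loop. Your proof sketch already contains this bridge; just place it after the application of Theorem~\ref{theorem: self-consistent} rather than before. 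The index-bookkeeping obstacle you flag is exactly the content of the paper's Step~1, so your plan covers everything the paper does, in a more modular way.
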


\begin{proof}
For $G^*(V^*,E^*)$, let a summary configuration $(S^*(v^1_1), S^*(v^1_2), \cdots)$ consist of all the summaries $S^*(v^k_i)$ where $1 \leq i \leq h_k$ and $1 \leq k \leq K$. We prove Theorem \ref{theorem: recursive lifting} by two steps.

\stitle{Step 1: Prove the summary configuration $(S^*(v^1_1), S^*(v^1_2), \cdots )$ is self-consists over $G^*(V^*,E^*)$.}

On the one hand, in any Level $k$, the summary configuration $(S^*(v^k_1), S^*(v^k_2), \cdots, S^*(v^k_{h_k}))$ is \textit{self-consistent} under $h_k$ rules $R(v^k_1), R(v^k_2), \cdots, R(v^k_{h_k})$, according to the condition of Theorem \ref{theorem: recursive lifting}. 

On the other hand, for two adjacent levels such as Level $k$ and Level $k-1$, 
$$S^*(v^k_{q_k})|_{\mathcal{L}(v^k_{q_k})=(m_k,n_k)} = S^*(v^{k-1}_{p_{k-1}}) |_{\mathcal{L}(v^{k-1}_{p_{k-1}})=( ((m_k-1) \cdot N_{k} + n_{k}) \cdot N_{k-1},N_{k-1})},$$
according to the condition of Theorem \ref{theorem: recursive lifting}. Based on Theorem \ref{theorem: self-consistent}, $S^*(v^{k-1}_{p_{k-1}})|_{\mathcal{L}(v^{k-1}_{p_{k-1}})=(i, j)} $ represents the invariant of the $i$-th loop execution at Level $k-1$, where $j$ is any iteration index ($1 \leq j \leq N_{k-1}$). Hence, for any $(m_k,n_k)$, we have
$$R(v^k_{q_k})[S^*(v^{k-1}_1), \cdots, S^*(v^{k-1}_{h_{k-1}})] |_{\mathcal{L}(v^k_{q_k})=(m_k,n_k)} = S^*(v^{k-1}_{p_{k-1}}) |_{\mathcal{L}(v^{k-1}_{p_{k-1}})=(((m_k-1) \cdot N_{k} + n_{k}) \cdot N_{k-1}, N_{k-1})}.$$
Consequently, we obtain 
$$S^*(v^k_{q_k}) |_{\mathcal{L}(v^k_{q_k})=(m_k,n_k)} =  R(v^k_{q_k}) [S^*(v^{k-1}_1), S^*(v^{k-1}_2), \cdots, S^*(v^{k-1}_{h_{k-1}})]|_{\mathcal{L}(v^k_{q_k})=(m_k,n_k)},$$
which means that $ S^*(v^k_{q_k}) = R(v^k_{q_k}) [S^*(v^{k-1}_1), S^*(v^{k-1}_2), \cdots, S^*(v^{k-1}_{h_{k-1}})]$. 

In conclusion, the summary configuration $(S^*(v^1_1), S^*(v^1_2), \cdots)$ is self-consists over $G^*(V^*,E^*)$.

\stitle{Step 2: without loss of generality, we just prove that Theorem \ref{theorem: recursive lifting} is valid for $K =2$.}

Based on Theorem \ref{theorem: self-consistent}, $S^*(v^{2}_{p_{2}})|_{\mathcal{L}(v^{2}_{p_{2}})=(m_{2}, j)}$ represents the invariant of the second layer loop at the $m_2$-th loop execution where $j$ is an iteration index ($1 \leq j \leq N_{2}$). Since Level $2$ is the top level, we have $M_2 = 1$. Due to $1 \leq m_2 \leq M_2 =1$, $m_2$ is always equal to $1$. Further, similar to the poof of Theorem \ref{theorem: self-consistent}, there exists a affine map $T$ such that $S^*(v^{2}_{p_{2}})$ can be rewritten as
$$S^*(v^{2}_{p_{2}})|_{\mathcal{L}(v^{2}_{p_{2}})=(1, j)} = T( S^*(v^{2}_{1})|_{\mathcal{L}(v^{2}_{1})=(1, j-1)}, \cdots, S^*(v^{2}_{q_2})|_{\mathcal{L}(v^{2}_{q_2})=(1, j-1)}, \cdots, S^*(v^{2}_{h^2})|_{\mathcal{L}(v^{2}_{h_2})=(1, j-1)}).$$
According to the feature of the vertex $v^{2}_{q_2}$, we have
\begin{equation}
S^*(v^{2}_{q_2})|_{\mathcal{L}(v^{2}_{q_2})=(1, j-1)} = S^*(v^{1}_{p_1})|_{\mathcal{L}(v^{1}_{p_1})=((j-1) \cdot N_1, N_1)}.
\end{equation}
Let
\begin{equation}
\label{equation: Vij}
\begin{cases}
\eta(i, j) = T( \cdots, S^*(v^{2}_{q_2-1})|_{\mathcal{L}(v^{2}_{q_2-1})=(1, j-1)}, ~ \theta(j-1,i), ~ S^*(v^{2}_{q_2+1})|_{\mathcal{L}(v^{2}_{q_2+1})=(1, j-1)}, \cdots ),  \\
\theta(j-1,i) = S^*(v^{1}_{p_1})|_{\mathcal{L}(v^{1}_{p_1})=((j-1) \cdot N_1, i)}.
\end{cases}
\end{equation}
In the following, we prove that $\eta(i, j)$ is an invariant for any iteration. First, if the iteration appears during the loop execution at Level 2, then the inner loop is executed once. Hence, we have
\begin{equation}
\begin{aligned}
Iter_{j} (\eta(1, j)) & = Iter_{j} T(S^*(v^{2}_{1})|_{\mathcal{L}(v^{2}_{1})=(1, j-1)}, \cdots, \theta(j-1,M_1), \cdots, S^*(v^{2}_{h_2})|_{\mathcal{L}(v^{2}_{h_2})=(1, j-1)}) \\
& = Iter_{j} (S^*(v^{2}_{p_{2}})|_{\mathcal{L}(v^{2}_{p_{2}})=(1, j)}) = (S^*(v^{2}_{p_{2}})|_{\mathcal{L}(v^{2}_{p_{2}})=(1, j+1)}) = \eta(1, j+1).
\end{aligned}
\end{equation}
Second, if the iteration appears in the loop execution at Level 1, then we have
\begin{displaymath}
Iter_{i} (\eta(i, j)) = T(S^*(v^{2}_{1})|_{\mathcal{L}(v^{2}_{1})=(1, j-1)}, \cdots, Iter_{i} (\theta(j-1,i)), \cdots, S^*(v^{2}_{h_2})|_{\mathcal{L}(v^{2}_{h_2})=(1, j-1)}).
\end{displaymath}
Since $S^*(v^{1}_{p_1})$ is invariant at Level 1, we get
$$Iter_{i} (\theta(j-1,i)) = Iter_{i} (S^*(v^{1}_{p_1})|_{\mathcal{L}(v^{1}_{p_1})=((j-1) \cdot N_1, i)}) = S^*(v^{1}_{p_1})|_{\mathcal{L}(v^{1}_{p_1})=((j-1) \cdot N_1, i+1)}.$$
Hence, $ Iter_{i} (\eta(i, j)) = \eta(i+1, j)$.

Thus, $\eta(i, j)$ is the invariant, and $\eta(M_2, N_2)$ yields the postcondition $S^*(v^2_{p_2})|_{\mathcal{L}(v^2_{p_2})=(1,N_2)}$.
\end{proof}

\section{Proofs of Theorem \ref{theorem: finite termination property} and Theorem \ref{theorem: hierarchical recursive lifting}}
\label{Asection: Proofs 2}

\begin{theorem}[Theorem \ref{theorem: finite termination property}] For any SCC in the invariant subgraph corresponding to a loop in a $d$-dimensional $m$-point stencil computation, there exists a positive integer $n_{iter}$ such that the iterative semantic extraction process converges after executing $n_{iter}$ sweeps.
\end{theorem}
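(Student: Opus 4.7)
The plan is to interpret the iterative semantic extraction as a fixed-point computation on a well-founded lattice of symbolic summaries, where the lattice height is bounded by the fixed-neighborhood structure of the stencil. First, I would formalize the state maintained at $v_{\text{start}}$ after sweep $t$ as a summary consisting of finitely many conditional branches $(E_k(\mathbf{x}, t), P_k(\mathbf{x}, t))$ together with a fallback $\hat{S}$. Using the stencil pattern prescribed by Equation~(\ref{equation: stencil computation pattern}), each $E_k$ reduces to an affine combination of at most $m$ symbolic reads from a fixed set of input arrays, so the number of syntactically distinct expression templates that can ever appear in any summary is bounded by a constant $C_E$ depending only on $d$, $m$, and the finite SCC.

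Second, I would analyze the generalization step (Step 3). After applying Equation~(\ref{equation: predicate logic equation}) and Skolemization, each generalized $P_k$ becomes a box in $\mathbb{Z}^d$ described by a conjunction of at most $2d$ affine inequalities $LB^k_i(t) \le x_i \le UB^k_i(t)$, whose slopes are fixed by the stencil strides $a_i$. Only the constant offsets in these bounds are free parameters, and each such offset is constrained to the finite loop range $[0, N_k{-}1]$. Consequently the space of possible generalized summaries is a finite set, parameterized by a choice of at most $C_E$ templates together with a bounded number of offset constants.

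Third, I would prove termination by a monotonicity argument over this finite lattice. Each sweep from $S^{(t-1)}(v_{\text{start}})$ to $S^{(t)}(v_{\text{start}})$ simulates exactly one additional application of the stencil body. The outcome is either (i) the discovery of a new branch template, which can occur at most $C_E$ times across the whole process; or (ii) the extension of an existing box bound by a fixed stride, which can occur at most $N_k$ times per bound before saturating the loop domain. Summing, a fixed point is reached within $n_{\text{iter}} \le C_E + 2 d \cdot N_k$ sweeps, at which point $S^{(t)}(v_{\text{start}})$ coincides with $S^{\text{general}}(v_{\text{start}})$. The Z3 check in Step 5 then returns affirmatively in finite time, since both summaries lie in the quantifier-free fragment of linear integer arithmetic, which is decidable.

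The main obstacle I expect is rigorously ruling out non-terminating drift of the constant offsets in the box bounds, i.e., showing that successive Skolemization-and-shift rounds cannot indefinitely manufacture new region boundaries. I plan to handle this by proving that each offset either locks onto one of the fixed loop-domain endpoints or shifts monotonically by the stencil's constant stride; the well-ordering of $[0, N_k{-}1]$ then forces termination. A secondary concern is ensuring that the dependency cycles inside a non-trivial SCC (arising, for example, from intermediate buffers) do not spawn unboundedly many nested branches across sweeps; here I would appeal to the fact that each cycle in the SCC corresponds to a finite delay chain of length at most $|V^*_k|$, so at most $|V^*_k|$ extra sweeps suffice to propagate any newly discovered pattern around the cycle before stabilization.
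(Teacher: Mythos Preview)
Your proposal is correct and rests on the same two ingredients the paper uses---the $m$-point structure of Equation~(\ref{equation: stencil computation pattern}) to bound the set of expression branches, and the finite loop domain to bound the box-constraint offsets---but the organization differs. The paper argues in two phases: first a contradiction showing $N_e$ cannot grow unboundedly (an arbitrarily long chain of output-to-output dependencies would make one $B$-element depend transitively on more than $m$ input-array elements, violating the $m$-point definition), and second, once $N_e$ is stable, the observation that the generalized $\tilde{E}_k$ are $t$-free and the bounds $LB^k_i(t),\,UB^k_i(t)$ are confined by $1\le t\le N_l$, so only finitely many further sweeps remain. You instead cast everything as a monotone fixed-point on a finite lattice and extract an explicit bound $n_{\text{iter}}\le C_E+2d\cdot N_k$, which the paper does not provide. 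The paper's dependency-chain contradiction is more explicit about \emph{why} the $m$-point property caps the branch count (new branches arise exactly from unrolling $B$-to-$B$ dependencies), whereas your $C_E$ bound on ``syntactically distinct templates'' implicitly presupposes this and would benefit from the same justification spelled out. Your handling of SCC cycles via the $|V^*_k|$-delay bound is an explicit addition the paper only gestures at through an example.
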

\begin{proof}
To prove Theorem \ref{theorem: finite termination property}, we only need to show that the summary $S(v_{start})$ of the start vertex converges to an invariant form.

First, in an iteration of the semantic extraction process above, the number of expressions $N_e$ in $S(v_{start})$ increases, which implies that there are dependencies between the elements of the output matrix $B$, as illustrated in Figure \ref{fig:irreg-scc}. For an $d$-dimensional $m$-point stencil computation, $N_e$ could not grow indefinitely. We use proof by contradiction to establish this claim. Now, assume that $N_e$ continues to grow indefinitely with the iterative semantic extraction process. This would mean that we can identify $m'$ distinct positions $\mathbf{x}^{(1)}, \mathbf{x}^{(2)}, \cdots, \mathbf{x}^{(m')}$ in $B$, where $m' > m$, such that the element at position $\mathbf{x}^{(i)}$ depends on the element at position $\mathbf{x}^{(i+1)}$ for $ 1 \leq i \leq m'-1 $. Meanwhile, according to Equation (\ref{equation: stencil computation pattern}), there exist $m'$ distinct positions $\mathbf{y}^{(1)}, \mathbf{y}^{(2)}, \cdots, \mathbf{y}^{(m')}$ in the input array $A$ such that the computation of $B(\mathbf{x}^{(i)})$ requires $A(\mathbf{y}^{(i)})$. Moreover, since $B(\mathbf{x}^{(i)})$ depends on $B(\mathbf{x}^{(i+1)})$, the computation of $B(\mathbf{x}^{(1)})$ would require values from $ A(\mathbf{y}^{(1)}), A(\mathbf{y}^{(2)}), \cdots, A(\mathbf{y}^{(m')}) $, where $m' > m$. This contradicts the definition of an $m$-point stencil, proving that $N_e$ could not increase indefinitely.  

Next, once $N_e$ stabilizes during the iterative semantic extraction process, $Step~3$ of the semantic extraction process generalizes $S(v_{start})$, where each expression $\tilde{E}_k(\mathbf{x})$ becomes independent of the iteration variable $t$. Hence, $\tilde{E}_k$ also becomes invariant.

Furthermore, the validity condition $P_k$ of $\tilde{E}_k$ is transformed in \textit{Step 3} into the form: $LB^k_{i}(t) \leq x_i \leq UB^{k}_{i}(t)$. For the $l$-th invariant subgraph, $t$ satisfies $ 1 \leq t \leq N_l $, meaning $ LB^k_{i}(t) $ has a lower bound and $ UB^k_{i}(t) $ has an upper bound. Hence, after a finite number of sweeps, $P_k $ remains unchanged.

Consequently, we conclude that the iterative semantic extraction terminates in finite sweeps.
\end{proof}

\begin{theorem}[Theorem \ref{theorem: hierarchical recursive lifting}]
Given the invariant graph for a stencil, the hierarchical recursive lifting algorithm
possesses the finite termination property and yields the summary/postcondition of the stencil kernel.
\end{theorem}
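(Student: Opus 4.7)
The plan is to prove the two claims, finite termination and correctness, separately, proceeding by induction on the loop-nesting level $k$ from the bottom up. The induction hypothesis at level $k$ asserts that a call to \textsc{HierarchicalRecursiveLifting} on $G^*_k(V^*_k,E^*_k)$ (with summaries of all external input vertices already fixed) terminates after finitely many steps and returns $S^*(v^k_{p_k})\vert_{\mathcal{L}(v^k_{p_k})=(M_k,N_k)}$, where $(S^*(v^k_1),\dots,S^*(v^k_{h_k}))$ is a \emph{self-consistent} summary configuration under the summary rules $R(v^k_i)$.

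For the termination part, I would combine three finiteness facts. First, the outer recursion depth is bounded by $K$, since each recursive call in \emph{Phase 2} descends to a strictly lower loop level and level $1$ contains no $\mathit{Loopcall}$ vertex. Second, within a fixed level $k$, Kosaraju's algorithm runs in $O(|V^*_k|)$, and the sequence $\Omega_k$ of SCCs has length at most $|V^*_k|$, so the outer loop over SCCs in Algorithm~\ref{alg:recursive-semantic-lifting} iterates finitely many times. Third, each call to \textsc{IterativeExtractionWithinSCC} terminates in a finite number of sweeps by Theorem~\ref{theorem: finite termination property}; equivalently, the fixed-point check in \emph{Step~5} succeeds after finitely many rounds. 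Chaining these three bounds and applying the induction hypothesis to the recursive call on $G^*_{k-1}$ yields termination at level $k$.

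For correctness, I would argue that when the outer loop over $\Omega_k$ finishes, the configuration $(S(v^k_1),\dots,S(v^k_{h_k}))$ is self-consistent. This is established by a second (inner) induction on the topological position of SCCs in $\Omega_k$. For a trivial SCC, the summary rule applied to the already-consistent summaries of its predecessors (either outside $G^*_k$, with summaries supplied as external inputs, or inside earlier SCCs, by the inner inductive hypothesis) directly produces a consistent vertex summary. For a non-trivial SCC, Theorem~\ref{theorem: finite termination property} together with the convergence criterion in \emph{Step~5} guarantees that the sweep has reached a fixed point, i.e.\ every $v\in SCC$ satisfies $S(v)=R(v)\bigl[S(v_{i_1}),\dots,S(v_{i_m})\bigr]$, which is exactly Definition~\ref{definition: self-consistent configuration}. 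For any vertex $v$ with $O(v)=\mathit{Loopcall}$, \emph{Phase~2} uses the outer induction hypothesis to obtain the postcondition of $L^{(k-1)}$, and then applies \emph{Rule~7} to substitute the dual vertex summaries; this implements precisely the coupling required by Theorem~\ref{theorem: recursive lifting}, namely $S^*(v^k_{q_k})\vert_{\mathcal{L}(v^k_{q_k})=(m_k,n_k)}=S^*(v^{k-1}_{p_{k-1}})\vert_{\mathcal{L}(v^{k-1}_{p_{k-1}})=(((m_k-1)N_k+n_k)N_{k-1},N_{k-1})}$. Once self-consistency at level $k$ is obtained, Theorem~\ref{theorem: self-consistent} certifies that the returned value $S^*(v^k_{p_k})\vert_{\mathcal{L}(v^k_{p_k})=(M_k,N_k)}$ is the postcondition of $L^{(k)}$, completing the inductive step. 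Instantiating at $k=K$ and invoking Theorem~\ref{theorem: recursive lifting} gives the stencil summary.

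The main obstacle, in my view, is justifying that the \emph{topological} traversal of $\Omega_k$ actually yields self-consistent summaries across SCC boundaries. Within a single SCC, Theorem~\ref{theorem: finite termination property} gives a fixed point, but the fixed point is computed using \emph{currently available} summaries of predecessors in previously processed SCCs; one must argue that these predecessor summaries do not subsequently need revision (which is where the DAG structure of $\mathcal{G}^*_k$ in Definition~\ref{def:scc-basedig} is essential) and, crucially, that the fixed point attained for the start vertex inside an SCC is not merely locally self-consistent but globally consistent with the rules of every other vertex in the SCC. A careful invariant is needed here: namely, at the end of \textsc{ForwardSweep} each non-start vertex $v$ satisfies its rule by construction of the sweep, while the start vertex satisfies its rule precisely when the convergence condition $S^{(t)}=S^{\text{general}}$ holds; so the two invariants together cover all of $V^*_k$. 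Once this bookkeeping is made precise, the remaining parts of the proof reduce to routine appeals to Theorems~\ref{theorem: self-consistent}, \ref{theorem: recursive lifting}, and \ref{theorem: finite termination property}.
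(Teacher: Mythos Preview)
Your proposal is correct and follows essentially the same approach as the paper: termination via Theorem~\ref{theorem: finite termination property} plus the bounded recursion depth, and correctness by showing the algorithm produces a self-consistent configuration and then invoking Theorem~\ref{theorem: recursive lifting}. The paper's own proof is a three-sentence sketch that simply asserts the algorithm yields a self-consistent configuration; your inner induction over the topological order of $\Omega_k$ and your observation about the sweep invariant (non-start vertices satisfy their rules by construction, the start vertex by the convergence check) supply exactly the bookkeeping the paper leaves implicit.
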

\begin{proof}
According to Theorem \ref{theorem: finite termination property}, the iterative semantic extraction on each invariant subgraph has the finite termination property. Since the hierarchical recursive lifting consists of a finite number of recursive calls to iterative semantic extraction, from the lowermost level to the uppermost level, it also has the finite termination property. Furthermore, based on Theorem \ref{theorem: recursive lifting}, the \textit{self-consistent} summary configuration generated by Algorithm \ref{alg:recursive-semantic-lifting} leads to the postcondition of stencil kernel.
\end{proof}

\begin{figure}
    \centering
    \includegraphics[width=0.9\linewidth]{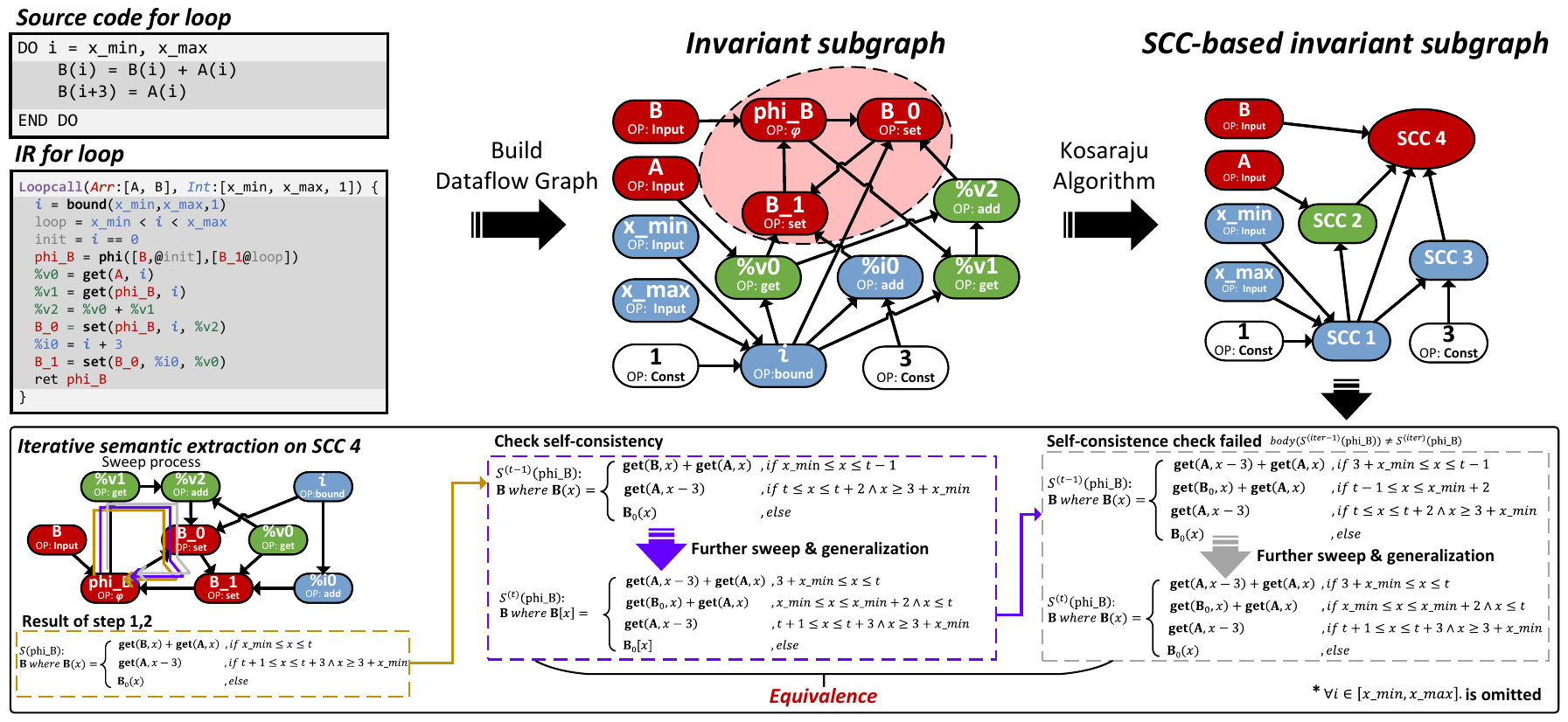}
    \vspace{-0.5em}
    \caption{Iterative semantic extraction within a non-trivial SCC for a loop with intermediate buffers.}
    \label{fig:irreg-scc}
\end{figure}

\begin{figure}
    \centering
    \includegraphics[width=0.9\linewidth]{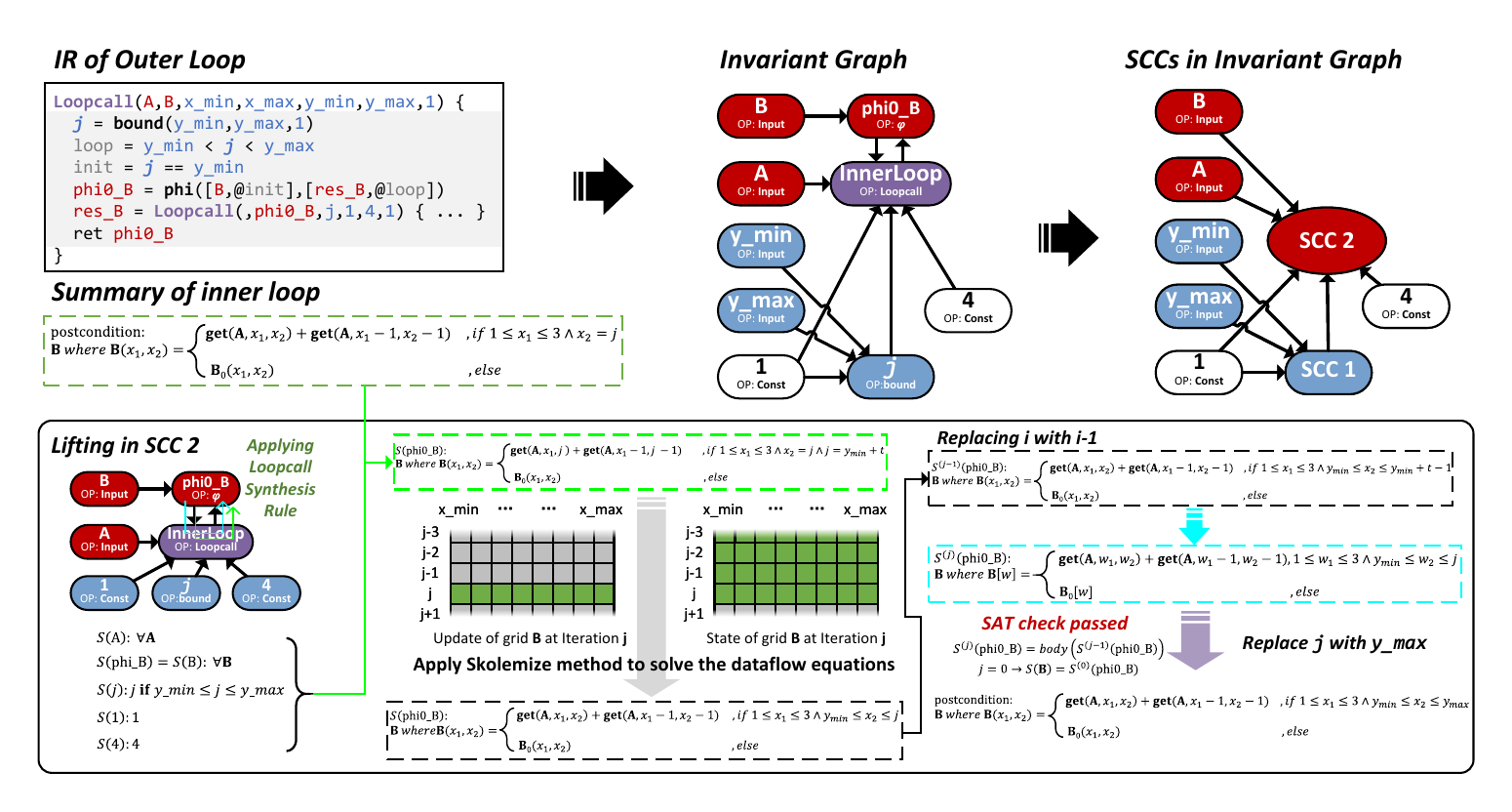}
    \vspace{-0.5em}
    \caption{Iterative semantic extraction on the invariant subgraph for the outer loop of stencil kernel in Figure \ref{fig: Basic idea of our design}.}
    \label{fig:regular-scc-out}
\end{figure}

\section{The example in Figure \ref{fig:regular-scc}}
\label{Asection: Hierarchical Recursive Lifting Algorithm}

We use the example in Figure \ref{fig:regular-scc} to show how the iterative semantic extraction works.

\stitle{\small \CIRCLE} \textbf{Step 1.} In Figure \ref{fig:regular-scc}, the operation $O(phi1\_B)$ is a $\varphi$ function, and the vertex $phi1\_B$ has a source vertex $B$, which is an input vertex whose \textit{self-consistent} summary has been determined to be $D_{\text{value}}(B)$, i.e., the array $\mathbf{B}_0$. According to \textit{Step 1}, the vertex $phi1\_B$ is selected as the start vertex, and its summary is initialized with the summary of $B$.

\stitle{\small \CIRCLE} \textbf{Step 2.} During a sweep in SCC~7, which consists of two vertices, \textit{phi1\_B} and \textit{B\_1}, the summary of \textit{B\_1} is updated first, followed by the summary update of the start vertex \textit{phi1\_B}. The resulting summary of \textit{phi1\_B} contains conditional branches that symbolically define how the fixed array element $B[x_1,x_2]$ is computed. Each branch corresponds to a particular control-flow path and reflects the influence of the summaries propagated from both internal and external source vertices.

\stitle{\small \CIRCLE} \textbf{Step 3.}  After sweep, we have $N_{e} = 1$ for the summary of \textit{phi1\_B}, $E_1(\mathbf{x},t):= \text{get}(\mathbf{A}, i,j) +\text{get}(\mathbf{A},i-1,j-1)$, and $P_1(\mathbf{x},t):=(\mathbf{x} = (x_1, x_2) = (i,j) \wedge t = i)$. By solving the corresponding predicate logic equation, we refine the expressions and their validity conditions. The expressions $E_1(\mathbf{x},t)$ is transformed into $\tilde{E}_1(\mathbf{x}):= \text{get}(\mathbf{A},x_1,x_2) +\text{get}(\mathbf{A},x_1-1,x_2-1)$, making it independent of the iteration variable $t$. The validity condition $P_1(\mathbf{x},t)$ becomes $\tilde{P}_1(\mathbf{x},t) := (\mathbf{x} = (x_1,x_2) \wedge 1\leq x_1 \leq t \wedge x_2 = j)$. Applying $\tilde{E}_1(\mathbf{x})$ and $\tilde{P}_1(\mathbf{x},t)$, we complete the construction of $S^{general}(phi1\_B)$.

\stitle{\small \CIRCLE} \textbf{Step 4.}
After performing the variable substitution in the generalized summary of \textit{phi1\_B}, a further sweep begins from \textit{phi1\_B} to derive a new summary.

\stitle{\small \CIRCLE} \textbf{Step 5.}
After a further sweep, $S^{(t)}(phi1\_B) = S^{general}(phi1\_B)$ is valid, confirming convergence. Thus, the summaries of all vertices within the current SCC become \textit{self-consistent}.

\etitle{Remark.} Figure \ref{fig:regular-scc-out} shows the iterative semantic extraction on the invariant subgraph for the outer loop of the stencil case in Figure \ref{fig: Basic idea of our design}.

\bibliographystyle{ACM-Reference-Format}
\bibliography{StencilLifting}


\begin{thebibliography}{47}


\ifx \showCODEN    \undefined \def \showCODEN     #1{\unskip}     \fi
\ifx \showISBNx    \undefined \def \showISBNx     #1{\unskip}     \fi
\ifx \showISBNxiii \undefined \def \showISBNxiii  #1{\unskip}     \fi
\ifx \showISSN     \undefined \def \showISSN      #1{\unskip}     \fi
\ifx \showLCCN     \undefined \def \showLCCN      #1{\unskip}     \fi
\ifx \shownote     \undefined \def \shownote      #1{#1}          \fi
\ifx \showarticletitle \undefined \def \showarticletitle #1{#1}   \fi
\ifx \showURL      \undefined \def \showURL       {\relax}        \fi
\providecommand\bibfield[2]{#2}
\providecommand\bibinfo[2]{#2}
\providecommand\natexlab[1]{#1}
\providecommand\showeprint[2][]{arXiv:#2}

\bibitem[Ahmad and Cheung(2018)]%
        {Ahmad2018}
\bibfield{author}{\bibinfo{person}{Maaz Bin~Safeer Ahmad} {and} \bibinfo{person}{Alvin Cheung}.} \bibinfo{year}{2018}\natexlab{}.
\newblock \showarticletitle{Automatically Leveraging MapReduce Frameworks for Data-Intensive Applications}. In \bibinfo{booktitle}{\emph{Proceedings of the 2018 International Conference on Management of Data}} (Houston, TX, USA) \emph{(\bibinfo{series}{SIGMOD '18})}. \bibinfo{publisher}{Association for Computing Machinery}, \bibinfo{address}{New York, NY, USA}, \bibinfo{pages}{1205–1220}.
\newblock
\showISBNx{9781450347037}
\href{https://doi.org/10.1145/3183713.3196891}{doi:\nolinkurl{10.1145/3183713.3196891}}


\bibitem[Ahmad et~al\mbox{.}(2019)]%
        {Ahmad2019}
\bibfield{author}{\bibinfo{person}{Maaz Bin~Safeer Ahmad}, \bibinfo{person}{Jonathan Ragan-Kelley}, \bibinfo{person}{Alvin Cheung}, {and} \bibinfo{person}{Shoaib Kamil}.} \bibinfo{year}{2019}\natexlab{}.
\newblock \showarticletitle{Automatically Translating Image Processing Libraries to Halide}.
\newblock \bibinfo{journal}{\emph{ACM Trans. Graph.}} \bibinfo{volume}{38}, \bibinfo{number}{6}, Article \bibinfo{articleno}{204}, \bibinfo{numpages}{13}~pages.
\newblock
\showISSN{0730-0301}
\href{https://doi.org/10.1145/3355089.3356549}{doi:\nolinkurl{10.1145/3355089.3356549}}


\bibitem[Alur et~al\mbox{.}(2013)]%
        {Alur2013}
\bibfield{author}{\bibinfo{person}{Rajeev Alur}, \bibinfo{person}{Rastislav Bodik}, \bibinfo{person}{Garvit Juniwal}, \bibinfo{person}{Milo M.~K. Martin}, \bibinfo{person}{Mukund Raghothaman}, \bibinfo{person}{Sanjit~A. Seshia}, \bibinfo{person}{Rishabh Singh}, \bibinfo{person}{Armando Solar-Lezama}, \bibinfo{person}{Emina Torlak}, {and} \bibinfo{person}{Abhishek Udupa}.} \bibinfo{year}{2013}\natexlab{}.
\newblock \showarticletitle{Syntax-guided synthesis}. In \bibinfo{booktitle}{\emph{2013 Formal Methods in Computer-Aided Design}} \emph{(\bibinfo{series}{FMCAD 2013})}. \bibinfo{pages}{1--17}.
\newblock
\href{https://doi.org/10.1109/FMCAD.2013.6679385}{doi:\nolinkurl{10.1109/FMCAD.2013.6679385}}


\bibitem[Ansel et~al\mbox{.}(2014)]%
        {Ansel2014}
\bibfield{author}{\bibinfo{person}{Jason Ansel}, \bibinfo{person}{Shoaib Kamil}, \bibinfo{person}{Kalyan Veeramachaneni}, \bibinfo{person}{Jonathan Ragan-Kelley}, \bibinfo{person}{Jeffrey Bosboom}, \bibinfo{person}{Una-May O'Reilly}, {and} \bibinfo{person}{Saman Amarasinghe}.} \bibinfo{year}{2014}\natexlab{}.
\newblock \showarticletitle{OpenTuner: An Extensible Framework for Program Autotuning}. In \bibinfo{booktitle}{\emph{Proceedings of the 23rd International Conference on Parallel Architectures and Compilation}} (Edmonton, AB, Canada) \emph{(\bibinfo{series}{PACT '14})}. \bibinfo{publisher}{Association for Computing Machinery}, \bibinfo{address}{New York, NY, USA}, \bibinfo{pages}{303–316}.
\newblock
\showISBNx{9781450328098}
\href{https://doi.org/10.1145/2628071.2628092}{doi:\nolinkurl{10.1145/2628071.2628092}}


\bibitem[Asanovic et~al\mbox{.}(2006)]%
        {dwarf06}
\bibfield{author}{\bibinfo{person}{Krste Asanovic}, \bibinfo{person}{Ras Bodik}, \bibinfo{person}{Bryan Catanzaro}, \bibinfo{person}{Joseph Gebis}, \bibinfo{person}{Parry Husbands}, \bibinfo{person}{Kurt Keutzer}, \bibinfo{person}{David Patterson}, \bibinfo{person}{William Plishker}, \bibinfo{person}{John Shalf}, \bibinfo{person}{Samuel Williams}, {and} \bibinfo{person}{Katherine Yelick}.} \bibinfo{year}{2006}\natexlab{}.
\newblock \showarticletitle{The Landscape of Parallel Computing Research: A View from Berkeley}.
\newblock \bibinfo{journal}{\emph{EECS Department, University of California, Berkeley}}  \bibinfo{volume}{EECS-2006-183} (\bibinfo{date}{12} \bibinfo{year}{2006}).
\newblock


\bibitem[Asanovic et~al\mbox{.}(2009)]%
        {dwarf09}
\bibfield{author}{\bibinfo{person}{Krste Asanovic}, \bibinfo{person}{Rastislav Bodik}, \bibinfo{person}{James Demmel}, \bibinfo{person}{Tony Keaveny}, \bibinfo{person}{Kurt Keutzer}, \bibinfo{person}{John Kubiatowicz}, \bibinfo{person}{Nelson Morgan}, \bibinfo{person}{David Patterson}, \bibinfo{person}{Koushik Sen}, \bibinfo{person}{John Wawrzynek}, \bibinfo{person}{David Wessel}, {and} \bibinfo{person}{Katherine Yelick}.} \bibinfo{year}{2009}\natexlab{}.
\newblock \showarticletitle{A view of the parallel computing landscape}.
\newblock \bibinfo{journal}{\emph{Commun. ACM}} \bibinfo{volume}{52}, \bibinfo{number}{10} (\bibinfo{date}{oct} \bibinfo{year}{2009}), \bibinfo{pages}{56–67}.
\newblock
\showISSN{0001-0782}
\href{https://doi.org/10.1145/1562764.1562783}{doi:\nolinkurl{10.1145/1562764.1562783}}


\bibitem[Baghdadi et~al\mbox{.}(2019)]%
        {Baghdadi2019}
\bibfield{author}{\bibinfo{person}{Riyadh Baghdadi}, \bibinfo{person}{Jessica Ray}, \bibinfo{person}{Malek~Ben Romdhane}, \bibinfo{person}{Emanuele~Del Sozzo}, \bibinfo{person}{Abdurrahman Akkas}, \bibinfo{person}{Yunming Zhang}, \bibinfo{person}{Patricia Suriana}, \bibinfo{person}{Shoaib Kamil}, {and} \bibinfo{person}{Saman Amarasinghe}.} \bibinfo{year}{2019}\natexlab{}.
\newblock \showarticletitle{Tiramisu: A Polyhedral Compiler for Expressing Fast and Portable Code}. In \bibinfo{booktitle}{\emph{2019 IEEE/ACM International Symposium on Code Generation and Optimization (CGO)}}. \bibinfo{pages}{193--205}.
\newblock
\href{https://doi.org/10.1109/CGO.2019.8661197}{doi:\nolinkurl{10.1109/CGO.2019.8661197}}


\bibitem[Bailey et~al\mbox{.}(1991)]%
        {Bailey1991}
\bibfield{author}{\bibinfo{person}{D.~H. Bailey}, \bibinfo{person}{E. Barszcz}, \bibinfo{person}{J.~T. Barton}, \bibinfo{person}{D.~S. Browning}, \bibinfo{person}{R.~L. Carter}, \bibinfo{person}{L. Dagum}, \bibinfo{person}{R.~A. Fatoohi}, \bibinfo{person}{P.~O. Frederickson}, \bibinfo{person}{T.~A. Lasinski}, \bibinfo{person}{R.~S. Schreiber}, \bibinfo{person}{H.~D. Simon}, \bibinfo{person}{V. Venkatakrishnan}, {and} \bibinfo{person}{S.~K. Weeratunga}.} \bibinfo{year}{1991}\natexlab{}.
\newblock \showarticletitle{The NAS Parallel Benchmarks—Summary and Preliminary Results}. In \bibinfo{booktitle}{\emph{Proceedings of the 1991 ACM/IEEE Conference on Supercomputing}} (Albuquerque, New Mexico, USA) \emph{(\bibinfo{series}{Supercomputing '91})}. \bibinfo{publisher}{Association for Computing Machinery}, \bibinfo{address}{New York, NY, USA}, \bibinfo{pages}{158–165}.
\newblock
\showISBNx{0897914597}
\href{https://doi.org/10.1145/125826.125925}{doi:\nolinkurl{10.1145/125826.125925}}


\bibitem[Blicha et~al\mbox{.}(2022)]%
        {Blicha2022}
\bibfield{author}{\bibinfo{person}{Martin Blicha}, \bibinfo{person}{Jan Kofro\v{n}}, {and} \bibinfo{person}{William Tatarko}.} \bibinfo{year}{2022}\natexlab{}.
\newblock \showarticletitle{Summarization of branching loops}. In \bibinfo{booktitle}{\emph{Proceedings of the 37th ACM/SIGAPP Symposium on Applied Computing}} (Virtual Event) \emph{(\bibinfo{series}{SAC '22})}. \bibinfo{publisher}{Association for Computing Machinery}, \bibinfo{address}{New York, NY, USA}, \bibinfo{pages}{1808–1816}.
\newblock
\showISBNx{9781450387132}
\href{https://doi.org/10.1145/3477314.3507042}{doi:\nolinkurl{10.1145/3477314.3507042}}


\bibitem[Bradley(2011)]%
        {Bradley2011}
\bibfield{author}{\bibinfo{person}{Aaron~R. Bradley}.} \bibinfo{year}{2011}\natexlab{}.
\newblock \showarticletitle{SAT-Based Model Checking without Unrolling}. In \bibinfo{booktitle}{\emph{Proceedings of the 12th International Conference on Verification, Model Checking, and Abstract Interpretation}} (Austin, TX, USA) \emph{(\bibinfo{series}{VMCAI'11})}. \bibinfo{publisher}{Springer-Verlag}, \bibinfo{address}{Berlin, Heidelberg}, \bibinfo{pages}{70–87}.
\newblock
\showISBNx{9783642182747}
\href{https://doi.org/10.1007/978-3-642-18275-4_7}{doi:\nolinkurl{10.1007/978-3-642-18275-4_7}}


\bibitem[Brown et~al\mbox{.}(2022)]%
        {Brown2022}
\bibfield{author}{\bibinfo{person}{Nick Brown}, \bibinfo{person}{Brandon Echols}, \bibinfo{person}{Justs Zarins}, {and} \bibinfo{person}{Tobias Grosser}.} \bibinfo{year}{2022}\natexlab{}.
\newblock \bibinfo{title}{TensorFlow as a DSL for stencil-based computation on the Cerebras Wafer Scale Engine}.
\newblock
\href{https://doi.org/10.48550/ARXIV.2210.04795}{doi:\nolinkurl{10.48550/ARXIV.2210.04795}}


\bibitem[Catanzaro et~al\mbox{.}(2010)]%
        {Catanzaro2010}
\bibfield{author}{\bibinfo{person}{Bryan Catanzaro}, \bibinfo{person}{Shoaib~Ashraf Kamil}, \bibinfo{person}{Yunsup Lee}, \bibinfo{person}{Krste Asanović}, \bibinfo{person}{James Demmel}, \bibinfo{person}{Kurt Keutzer}, \bibinfo{person}{John Shalf}, \bibinfo{person}{Katherine~A. Yelick}, {and} \bibinfo{person}{Armando Fox}.} \bibinfo{year}{2010}\natexlab{}.
\newblock \showarticletitle{SEJITS: Getting Productivity and Performance With Selective Embedded JIT Specialization}.
\newblock
\urldef\tempurl%
\url{http://www2.eecs.berkeley.edu/Pubs/TechRpts/2010/EECS-2010-23.html}
\showURL{%
\tempurl}


\bibitem[Cheung et~al\mbox{.}(2013)]%
        {Cheung2013}
\bibfield{author}{\bibinfo{person}{Alvin Cheung}, \bibinfo{person}{Armando Solar-Lezama}, {and} \bibinfo{person}{Samuel Madden}.} \bibinfo{year}{2013}\natexlab{}.
\newblock \showarticletitle{Optimizing Database-Backed Applications with Query Synthesis}. In \bibinfo{booktitle}{\emph{Proceedings of the 34th ACM SIGPLAN Conference on Programming Language Design and Implementation}} (Seattle, Washington, USA) \emph{(\bibinfo{series}{PLDI '13})}. \bibinfo{publisher}{Association for Computing Machinery}, \bibinfo{address}{New York, NY, USA}, \bibinfo{pages}{3–14}.
\newblock
\showISBNx{9781450320146}
\href{https://doi.org/10.1145/2491956.2462180}{doi:\nolinkurl{10.1145/2491956.2462180}}


\bibitem[Cimatti and Griggio(2012)]%
        {Cimatti2012}
\bibfield{author}{\bibinfo{person}{Alessandro Cimatti} {and} \bibinfo{person}{Alberto Griggio}.} \bibinfo{year}{2012}\natexlab{}.
\newblock \showarticletitle{Software Model Checking via IC3}. In \bibinfo{booktitle}{\emph{Computer Aided Verification}}, \bibfield{editor}{\bibinfo{person}{P.~Madhusudan} {and} \bibinfo{person}{Sanjit~A. Seshia}} (Eds.). \bibinfo{publisher}{Springer Berlin Heidelberg}, \bibinfo{address}{Berlin, Heidelberg}, \bibinfo{pages}{277--293}.
\newblock
\showISBNx{978-3-642-31424-7}
\href{https://doi.org/10.1007/978-3-642-31424-7_23}{doi:\nolinkurl{10.1007/978-3-642-31424-7_23}}


\bibitem[Crozier et~al\mbox{.}(2009)]%
        {Crozier2009}
\bibfield{author}{\bibinfo{person}{Paul~Stewart Crozier}, \bibinfo{person}{Heidi~K Thornquist}, \bibinfo{person}{Robert~W Numrich}, \bibinfo{person}{Alan~B Williams}, \bibinfo{person}{Harold~Carter Edwards}, \bibinfo{person}{Eric~Richard Keiter}, \bibinfo{person}{Mahesh Rajan}, \bibinfo{person}{James~M Willenbring}, \bibinfo{person}{Douglas~W Doerfler}, {and} \bibinfo{person}{Michael~Allen Heroux}.} \bibinfo{year}{2009}\natexlab{}.
\newblock \showarticletitle{Improving performance via mini-applications}. In \bibinfo{booktitle}{\emph{Technical report}}.
\newblock
\href{https://doi.org/10.2172/993908}{doi:\nolinkurl{10.2172/993908}}


\bibitem[De~Moura and Bj\o{}rner(2008)]%
        {DeMoura2008}
\bibfield{author}{\bibinfo{person}{Leonardo De~Moura} {and} \bibinfo{person}{Nikolaj Bj\o{}rner}.} \bibinfo{year}{2008}\natexlab{}.
\newblock \showarticletitle{Z3: An Efficient SMT Solver}. In \bibinfo{booktitle}{\emph{Proceedings of the Theory and Practice of Software, 14th International Conference on Tools and Algorithms for the Construction and Analysis of Systems}} (Budapest, Hungary) \emph{(\bibinfo{series}{TACAS'08/ETAPS'08})}. \bibinfo{publisher}{Springer-Verlag}, \bibinfo{address}{Berlin, Heidelberg}, \bibinfo{pages}{337–340}.
\newblock
\showISBNx{3540787992}
\href{https://doi.org/10.1007/978-3-540-78800-3_24}{doi:\nolinkurl{10.1007/978-3-540-78800-3_24}}


\bibitem[Epperson(2013)]%
        {epperson2013introduction}
\bibfield{author}{\bibinfo{person}{James~F. Epperson}.} \bibinfo{year}{2013}\natexlab{}.
\newblock \bibinfo{booktitle}{\emph{An Introduction to Numerical Methods and Analysis} (\bibinfo{edition}{2nd} ed.)}.
\newblock \bibinfo{publisher}{Wiley Publishing}.
\newblock
\showISBNx{1118367596}


\bibitem[Espindola et~al\mbox{.}(2023)]%
        {Espindola2023}
\bibfield{author}{\bibinfo{person}{Vinicius Espindola}, \bibinfo{person}{Luciano Zago}, \bibinfo{person}{Herv\'{e} Yviquel}, {and} \bibinfo{person}{Guido Araujo}.} \bibinfo{year}{2023}\natexlab{}.
\newblock \showarticletitle{Source Matching and Rewriting for MLIR Using String-Based Automata}.
\newblock \bibinfo{journal}{\emph{ACM Trans. Archit. Code Optim.}} \bibinfo{volume}{20}, \bibinfo{number}{2}, Article \bibinfo{articleno}{22} (\bibinfo{date}{mar} \bibinfo{year}{2023}), \bibinfo{numpages}{26}~pages.
\newblock
\showISSN{1544-3566}
\href{https://doi.org/10.1145/3571283}{doi:\nolinkurl{10.1145/3571283}}


\bibitem[Ginsbach et~al\mbox{.}(2018)]%
        {Ginsbach2018}
\bibfield{author}{\bibinfo{person}{Philip Ginsbach}, \bibinfo{person}{Toomas Remmelg}, \bibinfo{person}{Michel Steuwer}, \bibinfo{person}{Bruno Bodin}, \bibinfo{person}{Christophe Dubach}, {and} \bibinfo{person}{Michael F.~P. O'Boyle}.} \bibinfo{year}{2018}\natexlab{}.
\newblock \showarticletitle{Automatic Matching of Legacy Code to Heterogeneous APIs: An Idiomatic Approach} \emph{(\bibinfo{series}{ASPLOS '18})}. \bibinfo{publisher}{Association for Computing Machinery}, \bibinfo{address}{New York, NY, USA}, \bibinfo{pages}{139–153}.
\newblock
\showISBNx{9781450349116}
\href{https://doi.org/10.1145/3173162.3173182}{doi:\nolinkurl{10.1145/3173162.3173182}}


\bibitem[Gysi et~al\mbox{.}(2015)]%
        {Gysi2015}
\bibfield{author}{\bibinfo{person}{Tobias Gysi}, \bibinfo{person}{Carlos Osuna}, \bibinfo{person}{Oliver Fuhrer}, \bibinfo{person}{Mauro Bianco}, {and} \bibinfo{person}{Thomas~C. Schulthess}.} \bibinfo{year}{2015}\natexlab{}.
\newblock \showarticletitle{STELLA: a domain-specific tool for structured grid methods in weather and climate models}. In \bibinfo{booktitle}{\emph{SC '15: Proceedings of the International Conference for High Performance Computing, Networking, Storage and Analysis}}. \bibinfo{pages}{1--12}.
\newblock
\href{https://doi.org/10.1145/2807591.2807627}{doi:\nolinkurl{10.1145/2807591.2807627}}


\bibitem[Hoare(1969)]%
        {Hoare1969}
\bibfield{author}{\bibinfo{person}{C.~A.~R. Hoare}.} \bibinfo{year}{1969}\natexlab{}.
\newblock \showarticletitle{An Axiomatic Basis for Computer Programming}.
\newblock \bibinfo{journal}{\emph{Commun. ACM}} \bibinfo{volume}{12}, \bibinfo{number}{10}, \bibinfo{pages}{576–580}.
\newblock
\showISSN{0001-0782}
\href{https://doi.org/10.1145/363235.363259}{doi:\nolinkurl{10.1145/363235.363259}}


\bibitem[Kamil(2013)]%
        {Kamil2013}
\bibfield{author}{\bibinfo{person}{Shoaib Kamil}.} \bibinfo{year}{2013}\natexlab{}.
\newblock \showarticletitle{A cross-domain stencil benchmark suite}. In \bibinfo{booktitle}{\emph{In Workshop on Optimizing Stencil Computations}}.
\newblock


\bibitem[Kamil et~al\mbox{.}(2016)]%
        {Kamil2016}
\bibfield{author}{\bibinfo{person}{Shoaib Kamil}, \bibinfo{person}{Alvin Cheung}, \bibinfo{person}{Shachar Itzhaky}, {and} \bibinfo{person}{Armando Solar-Lezama}.} \bibinfo{year}{2016}\natexlab{}.
\newblock \showarticletitle{Verified Lifting of Stencil Computations}. In \bibinfo{booktitle}{\emph{Proceedings of the 37th ACM SIGPLAN Conference on Programming Language Design and Implementation}} (Santa Barbara, CA, USA) \emph{(\bibinfo{series}{PLDI '16})}. \bibinfo{publisher}{Association for Computing Machinery}, \bibinfo{address}{New York, NY, USA}, \bibinfo{pages}{711–726}.
\newblock
\showISBNx{9781450342612}
\href{https://doi.org/10.1145/2908080.2908117}{doi:\nolinkurl{10.1145/2908080.2908117}}


\bibitem[Kukreja et~al\mbox{.}(2016)]%
        {Kukreja2016}
\bibfield{author}{\bibinfo{person}{Navjot Kukreja}, \bibinfo{person}{Mathias Louboutin}, \bibinfo{person}{Felippe Vieira}, \bibinfo{person}{Fabio Luporini}, \bibinfo{person}{Michael Lange}, {and} \bibinfo{person}{Gerard Gorman}.} \bibinfo{year}{2016}\natexlab{}.
\newblock \showarticletitle{Devito: automated fast finite difference computation}. In \bibinfo{booktitle}{\emph{Proceedings of the Sixth International Workshop on Domain-Specific Languages and High-Level Frameworks for HPC}} (Salt Lake City, Utah) \emph{(\bibinfo{series}{WOLFHPC '16})}. \bibinfo{publisher}{IEEE Press}, \bibinfo{pages}{11–19}.
\newblock
\showISBNx{9781509061563}
\href{https://doi.org/10.1109/WOLFHPC.2016.06}{doi:\nolinkurl{10.1109/WOLFHPC.2016.06}}


\bibitem[Laddad et~al\mbox{.}(2022)]%
        {Laddad2022}
\bibfield{author}{\bibinfo{person}{Shadaj Laddad}, \bibinfo{person}{Conor Power}, \bibinfo{person}{Mae Milano}, \bibinfo{person}{Alvin Cheung}, {and} \bibinfo{person}{Joseph~M. Hellerstein}.} \bibinfo{year}{2022}\natexlab{}.
\newblock \showarticletitle{Katara: synthesizing CRDTs with verified lifting}.
\newblock \bibinfo{journal}{\emph{Proc. ACM Program. Lang.}} \bibinfo{volume}{6}, \bibinfo{number}{OOPSLA2}, Article \bibinfo{articleno}{173} (\bibinfo{date}{oct} \bibinfo{year}{2022}), \bibinfo{numpages}{29}~pages.
\newblock
\href{https://doi.org/10.1145/3563336}{doi:\nolinkurl{10.1145/3563336}}


\bibitem[Li et~al\mbox{.}(2014)]%
        {LiHanyu2014}
\bibfield{author}{\bibinfo{person}{Hanyu Li}, \bibinfo{person}{Haijing Zhou}, \bibinfo{person}{Yang Liu}, \bibinfo{person}{Xianfeng Bao}, {and} \bibinfo{person}{Zhenguo Zhao}.} \bibinfo{year}{2014}\natexlab{}.
\newblock \showarticletitle{Massively parallel FDTD program JEMS-FDTD and its applications in platform coupling simulation}. In \bibinfo{booktitle}{\emph{2014 International Symposium on Electromagnetic Compatibility}}. \bibinfo{pages}{229--233}.
\newblock
\href{https://doi.org/10.1109/EMCEurope.2014.6930908}{doi:\nolinkurl{10.1109/EMCEurope.2014.6930908}}


\bibitem[Li et~al\mbox{.}(2025)]%
        {StencilLifting}
\bibfield{author}{\bibinfo{person}{Mingyi Li}, \bibinfo{person}{Junmin Xiao}, \bibinfo{person}{Siyan Chen}, \bibinfo{person}{Hui Ma}, \bibinfo{person}{Xi Chen}, \bibinfo{person}{Peihua Bao}, \bibinfo{person}{Liang Yuan}, {and} \bibinfo{person}{Guangming Tan}.} \bibinfo{year}{2025}\natexlab{}.
\newblock \bibinfo{title}{Reproduction Package for Article `Stencil-Lifting: Hierarchical Recursive Lifting System for Extracting Summary of Stencil Kernel in Legacy Codes'}.
\newblock
\href{https://doi.org/10.5281/zenodo.16924672}{doi:\nolinkurl{10.5281/zenodo.16924672}}


\bibitem[Liu et~al\mbox{.}(2022)]%
        {Liu2022}
\bibfield{author}{\bibinfo{person}{Xiaoyan Liu}, \bibinfo{person}{Yi Liu}, \bibinfo{person}{Hailong Yang}, \bibinfo{person}{Jianjin Liao}, \bibinfo{person}{Mingzhen Li}, \bibinfo{person}{Zhongzhi Luan}, {and} \bibinfo{person}{Depei Qian}.} \bibinfo{year}{2022}\natexlab{}.
\newblock \showarticletitle{Toward accelerated stencil computation by adapting tensor core unit on GPU}. In \bibinfo{booktitle}{\emph{Proceedings of the 36th ACM International Conference on Supercomputing}} (Virtual Event) \emph{(\bibinfo{series}{ICS '22})}. \bibinfo{publisher}{Association for Computing Machinery}, \bibinfo{address}{New York, NY, USA}, Article \bibinfo{articleno}{28}, \bibinfo{numpages}{12}~pages.
\newblock
\showISBNx{9781450392815}
\href{https://doi.org/10.1145/3524059.3532392}{doi:\nolinkurl{10.1145/3524059.3532392}}


\bibitem[Mallinson et~al\mbox{.}(2013)]%
        {mallinson2013}
\bibfield{author}{\bibinfo{person}{Andrew Mallinson}, \bibinfo{person}{David~A Beckingsale}, \bibinfo{person}{Wayne Gaudin}, \bibinfo{person}{J Herdman}, \bibinfo{person}{John Levesque}, {and} \bibinfo{person}{Stephen~A Jarvis}.} \bibinfo{year}{2013}\natexlab{}.
\newblock \showarticletitle{Cloverleaf: Preparing hydrodynamics codes for exascale}.
\newblock \bibinfo{journal}{\emph{The Cray User Group}}  \bibinfo{volume}{2013}.
\newblock


\bibitem[Maruyama et~al\mbox{.}(2011)]%
        {Maruyama2011}
\bibfield{author}{\bibinfo{person}{Naoya Maruyama}, \bibinfo{person}{Tatsuo Nomura}, \bibinfo{person}{Kento Sato}, {and} \bibinfo{person}{Satoshi Matsuoka}.} \bibinfo{year}{2011}\natexlab{}.
\newblock \showarticletitle{Physis: an implicitly parallel programming model for stencil computations on large-scale GPU-accelerated supercomputers}. In \bibinfo{booktitle}{\emph{Proceedings of 2011 International Conference for High Performance Computing, Networking, Storage and Analysis}} (Seattle, Washington) \emph{(\bibinfo{series}{SC '11})}. \bibinfo{publisher}{Association for Computing Machinery}, \bibinfo{address}{New York, NY, USA}, Article \bibinfo{articleno}{11}, \bibinfo{numpages}{12}~pages.
\newblock
\showISBNx{9781450307710}
\href{https://doi.org/10.1145/2063384.2063398}{doi:\nolinkurl{10.1145/2063384.2063398}}


\bibitem[Mendis et~al\mbox{.}(2015)]%
        {Mendis2015}
\bibfield{author}{\bibinfo{person}{Charith Mendis}, \bibinfo{person}{Jeffrey Bosboom}, \bibinfo{person}{Kevin Wu}, \bibinfo{person}{Shoaib Kamil}, \bibinfo{person}{Jonathan Ragan-Kelley}, \bibinfo{person}{Sylvain Paris}, \bibinfo{person}{Qin Zhao}, {and} \bibinfo{person}{Saman Amarasinghe}.} \bibinfo{year}{2015}\natexlab{}.
\newblock \showarticletitle{Helium: Lifting High-Performance Stencil Kernels from Stripped X86 Binaries to Halide DSL Code}. In \bibinfo{booktitle}{\emph{Proceedings of the 36th ACM SIGPLAN Conference on Programming Language Design and Implementation}} (Portland, OR, USA) \emph{(\bibinfo{series}{PLDI '15})}. \bibinfo{publisher}{Association for Computing Machinery}, \bibinfo{address}{New York, NY, USA}, \bibinfo{pages}{391–402}.
\newblock
\showISBNx{9781450334686}
\href{https://doi.org/10.1145/2737924.2737974}{doi:\nolinkurl{10.1145/2737924.2737974}}


\bibitem[Moses et~al\mbox{.}(2021)]%
        {Polygeist2021}
\bibfield{author}{\bibinfo{person}{William~S. Moses}, \bibinfo{person}{Lorenzo Chelini}, \bibinfo{person}{Ruizhe Zhao}, {and} \bibinfo{person}{Oleksandr Zinenko}.} \bibinfo{year}{2021}\natexlab{}.
\newblock \showarticletitle{Polygeist: Raising C to Polyhedral MLIR}. In \bibinfo{booktitle}{\emph{2021 30th International Conference on Parallel Architectures and Compilation Techniques (PACT)}}. \bibinfo{pages}{45--59}.
\newblock
\href{https://doi.org/10.1109/PACT52795.2021.00011}{doi:\nolinkurl{10.1109/PACT52795.2021.00011}}


\bibitem[Qiu et~al\mbox{.}(2024)]%
        {Qiu2024}
\bibfield{author}{\bibinfo{person}{Jie Qiu}, \bibinfo{person}{Colin Cai}, \bibinfo{person}{Sahil Bhatia}, \bibinfo{person}{Niranjan Hasabnis}, \bibinfo{person}{Sanjit~A Seshia}, {and} \bibinfo{person}{Alvin Cheung}.} \bibinfo{year}{2024}\natexlab{}.
\newblock \showarticletitle{Tenspiler: A Verified-Lifting-Based Compiler for Tensor Operations}. In \bibinfo{booktitle}{\emph{38th European Conference on Object-Oriented Programming (ECOOP 2024)}}. Schloss Dagstuhl--Leibniz-Zentrum f{\"u}r Informatik, \bibinfo{pages}{32--1}.
\newblock
\href{https://doi.org/10.48550/arXiv.2404.18249}{doi:\nolinkurl{10.48550/arXiv.2404.18249}}


\bibitem[Ragan-Kelley et~al\mbox{.}(2013)]%
        {Ragan-Kelley2013}
\bibfield{author}{\bibinfo{person}{Jonathan Ragan-Kelley}, \bibinfo{person}{Connelly Barnes}, \bibinfo{person}{Andrew Adams}, \bibinfo{person}{Sylvain Paris}, \bibinfo{person}{Fr\'{e}do Durand}, {and} \bibinfo{person}{Saman Amarasinghe}.} \bibinfo{year}{2013}\natexlab{}.
\newblock \showarticletitle{Halide: A Language and Compiler for Optimizing Parallelism, Locality, and Recomputation in Image Processing Pipelines}. In \bibinfo{booktitle}{\emph{Proceedings of the 34th ACM SIGPLAN Conference on Programming Language Design and Implementation}} (Seattle, Washington, USA) \emph{(\bibinfo{series}{PLDI '13})}. \bibinfo{publisher}{Association for Computing Machinery}, \bibinfo{address}{New York, NY, USA}, \bibinfo{pages}{519–530}.
\newblock
\showISBNx{9781450320146}
\href{https://doi.org/10.1145/2491956.2462176}{doi:\nolinkurl{10.1145/2491956.2462176}}


\bibitem[Sharir(1981)]%
        {Sharir1981}
\bibfield{author}{\bibinfo{person}{M. Sharir}.} \bibinfo{year}{1981}\natexlab{}.
\newblock \showarticletitle{A strong-connectivity algorithm and its applications in data flow analysis}.
\newblock \bibinfo{journal}{\emph{Computers and Mathematics with Applications}} \bibinfo{volume}{7}, \bibinfo{number}{1} (\bibinfo{year}{1981}), \bibinfo{pages}{67--72}.
\newblock
\showISSN{0898-1221}
\href{https://doi.org/10.1016/0898-1221(81)90008-0}{doi:\nolinkurl{10.1016/0898-1221(81)90008-0}}


\bibitem[Singh et~al\mbox{.}(2020)]%
        {Singh2020}
\bibfield{author}{\bibinfo{person}{Gagandeep Singh}, \bibinfo{person}{Dionysios Diamantopoulos}, \bibinfo{person}{Christoph Hagleitner}, \bibinfo{person}{Juan Gomez-Luna}, \bibinfo{person}{Sander Stuijk}, \bibinfo{person}{Onur Mutlu}, {and} \bibinfo{person}{Henk Corporaal}.} \bibinfo{year}{2020}\natexlab{}.
\newblock \showarticletitle{NERO: A Near High-Bandwidth Memory Stencil Accelerator for Weather Prediction Modeling}. In \bibinfo{booktitle}{\emph{2020 30th International Conference on Field-Programmable Logic and Applications (FPL)}}. \bibinfo{pages}{9--17}.
\newblock
\href{https://doi.org/10.1109/FPL50879.2020.00014}{doi:\nolinkurl{10.1109/FPL50879.2020.00014}}


\bibitem[Solar-Lezama(2009)]%
        {Berlin2009}
\bibfield{author}{\bibinfo{person}{Armando Solar-Lezama}.} \bibinfo{year}{2009}\natexlab{}.
\newblock \showarticletitle{The Sketching Approach to Program Synthesis}. In \bibinfo{booktitle}{\emph{Proceedings of the 7th Asian Symposium on Programming Languages and Systems}} (Seoul, Korea) \emph{(\bibinfo{series}{APLAS '09})}. \bibinfo{publisher}{Springer-Verlag}, \bibinfo{address}{Berlin, Heidelberg}, \bibinfo{pages}{4–13}.
\newblock
\showISBNx{9783642106712}
\href{https://doi.org/10.1007/978-3-642-10672-9_3}{doi:\nolinkurl{10.1007/978-3-642-10672-9_3}}


\bibitem[Solar-Lezama et~al\mbox{.}(2007)]%
        {Solar-Lezama2007}
\bibfield{author}{\bibinfo{person}{Armando Solar-Lezama}, \bibinfo{person}{Gilad Arnold}, \bibinfo{person}{Liviu Tancau}, \bibinfo{person}{Rastislav Bodik}, \bibinfo{person}{Vijay Saraswat}, {and} \bibinfo{person}{Sanjit Seshia}.} \bibinfo{year}{2007}\natexlab{}.
\newblock \showarticletitle{Sketching Stencils}. In \bibinfo{booktitle}{\emph{Proceedings of the 28th ACM SIGPLAN Conference on Programming Language Design and Implementation}} (San Diego, California, USA) \emph{(\bibinfo{series}{PLDI '07})}. \bibinfo{publisher}{Association for Computing Machinery}, \bibinfo{address}{New York, NY, USA}, \bibinfo{pages}{167–178}.
\newblock
\showISBNx{9781595936332}
\href{https://doi.org/10.1145/1250734.1250754}{doi:\nolinkurl{10.1145/1250734.1250754}}


\bibitem[StencilProbe(2025)]%
        {StencilProbe2025}
\bibfield{author}{\bibinfo{person}{StencilProbe}.} \bibinfo{year}{2025}\natexlab{}.
\newblock \bibinfo{booktitle}{\emph{StencilProbe: A Microbenchmark for Stencil Applications}}.
\newblock
\urldef\tempurl%
\url{https://people.csail.mit.edu/skamil/projects/stencilprobe/}
\showURL{%
\tempurl}


\bibitem[Tang et~al\mbox{.}(2011)]%
        {Tang2011}
\bibfield{author}{\bibinfo{person}{Yuan Tang}, \bibinfo{person}{Rezaul~Alam Chowdhury}, \bibinfo{person}{Bradley~C. Kuszmaul}, \bibinfo{person}{Chi-Keung Luk}, {and} \bibinfo{person}{Charles~E. Leiserson}.} \bibinfo{year}{2011}\natexlab{}.
\newblock \showarticletitle{The Pochoir Stencil Compiler}. In \bibinfo{booktitle}{\emph{Proceedings of the Twenty-Third Annual ACM Symposium on Parallelism in Algorithms and Architectures}} (San Jose, California, USA) \emph{(\bibinfo{series}{SPAA '11})}. \bibinfo{publisher}{Association for Computing Machinery}, \bibinfo{pages}{117--128}.
\newblock
\showISBNx{9781450307437}
\href{https://doi.org/10.1145/1989493.1989508}{doi:\nolinkurl{10.1145/1989493.1989508}}


\bibitem[Van~der Cruysse and Dubach(2024)]%
        {VanderCruysse2024}
\bibfield{author}{\bibinfo{person}{Jonathan Van~der Cruysse} {and} \bibinfo{person}{Christophe Dubach}.} \bibinfo{year}{2024}\natexlab{}.
\newblock \showarticletitle{Latent Idiom Recognition for a Minimalist Functional Array Language Using Equality Saturation}. In \bibinfo{booktitle}{\emph{Proceedings of the 2024 IEEE/ACM International Symposium on Code Generation and Optimization}} (Edinburgh, United Kingdom) \emph{(\bibinfo{series}{CGO '24})}. \bibinfo{publisher}{IEEE Press}, \bibinfo{pages}{270–282}.
\newblock
\showISBNx{9798350395099}
\href{https://doi.org/10.1109/CGO57630.2024.10444879}{doi:\nolinkurl{10.1109/CGO57630.2024.10444879}}


\bibitem[Verbeek et~al\mbox{.}(2022)]%
        {Verbeek2022}
\bibfield{author}{\bibinfo{person}{Freek Verbeek}, \bibinfo{person}{Joshua Bockenek}, \bibinfo{person}{Zhoulai Fu}, {and} \bibinfo{person}{Binoy Ravindran}.} \bibinfo{year}{2022}\natexlab{}.
\newblock \showarticletitle{Formally verified lifting of C-compiled x86-64 binaries}. In \bibinfo{booktitle}{\emph{Proceedings of the 43rd ACM SIGPLAN International Conference on Programming Language Design and Implementation}} (San Diego, CA, USA) \emph{(\bibinfo{series}{PLDI 2022})}. \bibinfo{publisher}{Association for Computing Machinery}, \bibinfo{address}{New York, NY, USA}, \bibinfo{pages}{934–949}.
\newblock
\showISBNx{9781450392655}
\href{https://doi.org/10.1145/3519939.3523702}{doi:\nolinkurl{10.1145/3519939.3523702}}


\bibitem[Wintersteiger et~al\mbox{.}(2010)]%
        {Wintersteiger2010}
\bibfield{author}{\bibinfo{person}{Christoph~M. Wintersteiger}, \bibinfo{person}{Youssef Hamadi}, {and} \bibinfo{person}{Leonardo de Moura}.} \bibinfo{year}{2010}\natexlab{}.
\newblock \showarticletitle{Efficiently solving quantified bit-vector formulas}. In \bibinfo{booktitle}{\emph{Formal Methods in Computer Aided Design}}. \bibinfo{pages}{239--246}.
\newblock
\href{https://doi.org/10.1007/s10703-012-0156-2}{doi:\nolinkurl{10.1007/s10703-012-0156-2}}


\bibitem[WRF(2022)]%
        {WRF2022}
\bibfield{author}{\bibinfo{person}{WRF}.} \bibinfo{year}{2022}\natexlab{}.
\newblock \showarticletitle{WRF Model site}.
\newblock
\urldef\tempurl%
\url{https://www.mmm.ucar.edu/models/wrf}
\showURL{%
\tempurl}


\bibitem[Xie et~al\mbox{.}(2016)]%
        {Proteus2016}
\bibfield{author}{\bibinfo{person}{Xiaofei Xie}, \bibinfo{person}{Bihuan Chen}, \bibinfo{person}{Yang Liu}, \bibinfo{person}{Wei Le}, {and} \bibinfo{person}{Xiaohong Li}.} \bibinfo{year}{2016}\natexlab{}.
\newblock \showarticletitle{Proteus: computing disjunctive loop summary via path dependency analysis}. In \bibinfo{booktitle}{\emph{Proceedings of the 2016 24th ACM SIGSOFT International Symposium on Foundations of Software Engineering}} (Seattle, WA, USA) \emph{(\bibinfo{series}{FSE 2016})}. \bibinfo{publisher}{Association for Computing Machinery}, \bibinfo{address}{New York, NY, USA}, \bibinfo{pages}{61–72}.
\newblock
\showISBNx{9781450342186}
\href{https://doi.org/10.1145/2950290.2950340}{doi:\nolinkurl{10.1145/2950290.2950340}}


\bibitem[Xie et~al\mbox{.}(2015)]%
        {S-looper2015}
\bibfield{author}{\bibinfo{person}{Xiaofei Xie}, \bibinfo{person}{Yang Liu}, \bibinfo{person}{Wei Le}, \bibinfo{person}{Xiaohong Li}, {and} \bibinfo{person}{Hongxu Chen}.} \bibinfo{year}{2015}\natexlab{}.
\newblock \showarticletitle{S-looper: automatic summarization for multipath string loops}. In \bibinfo{booktitle}{\emph{Proceedings of the 2015 International Symposium on Software Testing and Analysis}} (Baltimore, MD, USA) \emph{(\bibinfo{series}{ISSTA 2015})}. \bibinfo{publisher}{Association for Computing Machinery}, \bibinfo{address}{New York, NY, USA}, \bibinfo{pages}{188–198}.
\newblock
\showISBNx{9781450336208}
\href{https://doi.org/10.1145/2771783.2771815}{doi:\nolinkurl{10.1145/2771783.2771815}}


\bibitem[Zhu et~al\mbox{.}(2024)]%
        {zhu2024}
\bibfield{author}{\bibinfo{person}{Kai Zhu}, \bibinfo{person}{Chenkai Guo}, \bibinfo{person}{Kuihao Yan}, \bibinfo{person}{Xiaoqi Jia}, \bibinfo{person}{Haichao Du}, \bibinfo{person}{Qingjia Huang}, \bibinfo{person}{Yamin Xie}, {and} \bibinfo{person}{Jing Tang}.} \bibinfo{year}{2024}\natexlab{}.
\newblock \showarticletitle{LoopSCC: Towards Summarizing Multi-branch Loops within Determinate Cycles}.
\newblock  (\bibinfo{year}{2024}).
\newblock
\urldef\tempurl%
\url{https://arxiv.org/abs/2411.02863}
\showURL{%
\tempurl}


\end{thebibliography}

\end{document}